\newtheorem{theorem}{Theorem}
\newtheorem{lemma}{Lemma}
\newtheorem{proposition}{Proposition}
\theoremstyle{definition}
\newtheorem{assumption}{\textbf{A}}[section]
\crefname{assumption}{{\textbf{A}}}{{\textbf{A}}}
\Crefname{algorithm}{{Algorithm}}{Algorithm}
\newcommand{\af}[1]{h_{#1}}
\newcommand{\afterm}[1]{\tilde{h}_{#1}}
\newcommand{\bcdot}{\cdot}
\newcommand{\bdpart}[1]{\mathbb{B}_{#1}}
\newcommandx{\bk}[2][1=]{
\ifthenelse{\equal{#1}{}}
{\overleftarrow{Q}_{#2}}
{\overleftarrow{Q}_{#2}^{#1}}
}
\def\cstcondparisc{\mathsf{c}}
\def\cstcondparisd{\mathsf{d}}
\newcommand{\bmf}{\mathsf{F}}
\newcommand{\borel}{\mathcal{B}}
\newcommand{\bpart}[2]{\upsilon_{#1}^{#2}}
\newcommand{\bpartmb}[1]{\boldsymbol{\upsilon}_{#1}}
\newcommandx{\canlaw}[2][1=]{\prob^{#1}_{#2}}
\newcommandx{\canlawexp}[2][1=]{\E^{#1}_{#2}}
\newcommand{\ck}[1]{\mbletter{S}_{#1}}
\newcommand{\ckjt}[1]{\mathbb{S}_{#1}}
\newcommand{\condparis}{\mathsf{CondPaRIS}}
\newcommand{\E}{\mathbb{E}}
\newcommand{\efd}[1]{\boldsymbol{\mathcal{E}}_{#1}}
\newcommand{\eg}{\emph{e.g.}}
\newcommand{\epart}[2]{\xi_{#1}^{#2}}
\newcommand{\eparttd}[2]{\tilde{\xi}_{#1}^{#2}}
\newcommand{\epartmb}[1]{\boldsymbol{\xi}_{#1}}
\newcommand{\eqdef}{\coloneqq}
\newcommand{\esp}[1]{\boldsymbol{\mathsf{E}}_{#1}}
\newcommand{\gibbs}[1]{K_{#1}}
\newcommand{\init}{\eta_0}
\newcommandx{\initmb}[1][1=]{\ifthenelse{\equal{#1}{}}{\boldsymbol{\eta}_0}{\boldsymbol{\eta}_{0}\langle #1 \rangle}}
\newcommand{\intvect}[2]{\llbracket #1, #2 \rrbracket}
\newcommand{\koskimat}[1]{\mathbf{J}}
\newcommand{\M}{M}
\newcommand{\mbjt}[1]{\mathbb{C}_{#1}}
\newcommand{\mbletter}[1]{\boldsymbol{#1}}
\newcommand{\meas}{\mathsf{M}}
\newcommand{\md}[1]{m_{#1}}
\newcommand{\mdlow}[1]{\ubar{\sigma}_{#1}}
\newcommand{\mdhigh}[1]{\bar{\sigma}_{#1}}
\newcommand{\mixrate}[2]{\kappa_{#2,#1}}
\newcommand{\mk}[1]{M_{#1}}
\newcommandx{\mkmb}[2][2=]{
\ifthenelse{\equal{#2}{}}
{{\mbletter{M}}_{#1}}
{{\mbletter{M}}_{#1} \langle #2 \rangle}
}
\newcommand{\N}{N}
\newcommand{\nset}{\mathbb{N}}
\newcommand{\nsets}{\mathbb{N}^*}
\newcommand{\nsetpos}{\mathbb{N}^\ast}
\newcommand{\occm}{\mu}
\def\1{\mathbbm{1}}
\newcommand{\zpath}{z}
\newcommand{\parisgibbs}[1]{\mathbb{K}_{#1}}
\newcommand{\partfilt}[1]{\mathcal{F}_{#1}}
\newcommand{\partfiltbar}[1]{\tilde{\mathcal{F}}_{#1}}
\newcommand{\pd}[1]{\Phi_{#1}}
\newcommand{\pinit}{\mbletter{\psi}_0}
\newcommand{\pkl}{\mbletter{P}}
\newcommandx{\pk}[2][2=]{
\ifthenelse{\equal{#2}{}}
{\pkl_{#1}}
{\pkl_{#1} \langle #2 \rangle}
}
\newcommandx{\pkjt}[2][1=]{
\ifthenelse{\equal{#1}{}}
{\mbletter{\psi}_{#2}}
{\mbletter{\psi}_{#2} \langle #1 \rangle}
}
\newcommand{\pot}[1]{g_{#1}}
\newcommand{\pothigh}[1]{\bar{\tau}_{#1}}
\newcommand{\potlow}[1]{\ubar{\tau}_{#1}}
\newcommand{\potmb}[1]{\mbletter{g}_{#1}}
\newcommand{\prob}{\mathbb{P}}
\newcommand{\probmeas}{\mathsf{M}_1}
\newcommand{\refm}[1]{\lambda_{#1}}
\newcommandx{\rk}[2][1=]{
\ifthenelse{\equal{#1}{}}
{B_{#2}}
{B_{#2} \langle #1 \rangle}
}
\newcommand{\rmd}{\mathrm{d}} 
\newcommand{\rset}{\mathbb{R}}
\newcommand{\stat}[2]{\beta_{#1}^{#2}}
\newcommand{\statl}{b}
\newcommandx{\statlmb}[1][1=]{\ifthenelse{\equal{#1}{}}{\mbletter{b}}{\mbletter{b}_{#1}}}
\newcommand{\statmb}[1]{\boldsymbol{\beta}_{#1}}
\newcommand{\stattd}[2]{\tilde{\beta}_{#1}^{#2}}
\newcommandx{\targ}[2][1=]{
\ifthenelse{\equal{#1}{}}
{\eta_{#2}}
{\eta_{#2} \langle #1 \rangle}
}
\newcommand{\targmb}[1]{\boldsymbol{\eta}_{#1}}
\newcommand{\tensprod}{\varotimes}
\newcommand{\ubar}[1]{\underaccent{\bar}{#1}}
\newcommand{\ud}[1]{q_{#1}}
\newcommandx{\uk}[2][1=]{
\ifthenelse{\equal{#1}{}}
{Q_{#2}}
{Q_{#2} \langle #1 \rangle}
}
\newcommand{\ukmb}[1]{\boldsymbol{Q}_{#1}}
\newcommandx{\ukp}[2][1=]{
\ifthenelse{\equal{#1}{}}
{\mathcal{Q}_{#2}}
{\mathcal{Q}_{#2} \langle #1 \rangle}
}
\newcommandx{\untarg}[2][1=]{
\ifthenelse{\equal{#1}{}}
{\gamma_{{#2}}}
{\gamma_{{#2}} \langle #1 \rangle}
}
\def\PARIS{\texttt{PARIS}}
\def\PPG{\texttt{PPG}}
\newcommand{\coint}[1]{\left[#1\right)}
\newcommand{\untargmb}[1]{\boldsymbol{\gamma}_{#1}}
\newcommand{\xsp}[1]{\mathsf{X}_{#1}}
\newcommand{\xfd}[1]{\mathcal{X}_{#1}}
\newcommand{\zsp}[1]{\mathsf{Z}_{#1}}
\newcommand{\zfd}[1]{\mathcal{Z}_{#1}}
\newcommand{\xfdmb}[1]{\boldsymbol{\mathcal{X}}_{#1}}
\newcommandx{\xmb}[1][1=]{\ifthenelse{\equal{#1}{}}{\mbletter{x}}{\mbletter{x}_{#1}}}
\newcommandx{\xmbtd}[1][1=]{\ifthenelse{\equal{#1}{}}{\tilde{\mbletter{x}}}{\tilde{\mbletter{x}}_{#1}}}
\newcommand{\xpfd}[2]{\xfd{#1:#2}}
\newcommand{\xpsp}[2]{\xsp{#1:#2}}
\newcommand{\xspmb}[1]{\boldsymbol{\mathsf{X}}_{#1}}
\newcommand{\yfd}[1]{\mathcal{Y}_{#1}}
\newcommand{\yfdmb}[1]{\boldsymbol{\mathcal{Y}}_{#1}}
\newcommand{\ymb}[1][1=]{\ifthenelse{\equal{#1}{}}{\mbletter{y}}{\mbletter{y}_{#1}}}
\newcommand{\ymbtd}[1][1=]{\ifthenelse{\equal{#1}{}}{\tilde{\mbletter{y}}}{\tilde{\mbletter{y}}_{#1}}}
\newcommand{\ysp}[1]{\mathsf{Y}_{#1}}
\newcommand{\yspmb}[1]{\boldsymbol{\mathsf{Y}}_{#1}}
\newcommand{\indi}[1]{\1_{{#1}}}
\newcommand{\indin}[1]{\1\left\{#1\right\}}
\newcommand{\chunk}[3]{#1_{{#2:#3}}}
\newcommandx{\CPE}[4][1=,2=]{{\mathbb E}^{#2}_{#1}\left[ #3 \mid #4 \right]}
\newcommandx{\cPE}[4][1=,2=]{{\mathbb E}^{#2}_{#1}[ #3 \mid #4 ]} 
\newcommandx{\CPP}[3][1=]{{\mathbb P}_{#1}\left(\left. #2 \, \right| #3 \right)} 
\newcommandx{\cPP}[3][1=]{{\mathbb P}_{#1}[ #2 | #3 ]} 
\newcommand{\gsupbound}[1]{\pothigh{#1}}
\newcommand{\ginfbound}[1]{\potlow{#1}}
\newcommand{\constpdist}{\mathsf{c}}
\newcommandx{\mserunconst}[2][1=n]{\upzeta^{\scriptsize{\mbox{{\it mse}}}}_{#1,#2}}
\newcommandx{\biasrunconst}[2][1=n]{\upzeta^{\scriptsize{\mbox{{\it bias}}}}_{#1,#2}}
\newcommandx{\cstcondparisbias}[1][1=n]{\bar{\mathsf{c}}_{#1}^{\scriptsize{\mbox{{\it bias}}}}}
\newcommandx{\cstparisbias}[1][1=n]{\mathsf{c}_{#1}^{\scriptsize{\mbox{{\it bias}}}}}
\newcommandx{\cstparismse}[1][1=n]{\mathsf{c}_{#1}^{\scriptsize{\mbox{{\it mse}}}}}
\newcommandx{\cstrollingbias}[1][1=n]{\mathsf{c}_{\scriptsize{\mbox{{\it roll}}},#1}^{\scriptsize{\mbox{{\it bias}}}}}
\newcommandx{\cstrollingmse}[1][1=n]{\mathsf{c}_{\scriptsize{\mbox{{\it roll}}},#1}^{\scriptsize{\mbox{{\it mse}}}}}
\newcommandx{\cstpariscov}[1][1=n]{\mathsf{c}_{#1}^{\scriptsize{\mbox{{\it cov}}}}}
\newcommand{\ki}{{k}}
\def\burningloss{\upsilon}
\newcommandx{\rollingestim}[4][1=N,2=K_0,3=K,4=f]{\Pi_{(#1,#2),#3}(#4)}
\def\totalbudget{C} 
\def\Id{\operatorname{Id}}
\numberwithin{equation}{section}
\begin{document}

\begin{frontmatter}
	\title{Particle-based, rapid incremental smoother meets particle Gibbs}
	\runtitle{Particle-based, rapid incremental smoother meets particle Gibbs}

	\begin{aug}
		\author[A, C]{\fnms{Gabriel} \snm{Cardoso}\ead[label=e1]{gabriel.victorino-cardoso@polytechnique.edu}},
		\author[A]{\fnms{Eric} \snm{Moulines}\ead[label=e2]{eric.moulines@polytechnique.edu}}
		\and
		\author[B]{\fnms{Jimmy} \snm{Olsson}\ead[label=e3]{jimmyol@kth.se}}
	
		\address[A]{Ecole polytechnique, Centre de Math{\'e}matiques Appliqu{\'e}es, Palaiseau, France \\ \printead{e1,e2}}
		\address[B]{Department of Mathematics, KTH Royal Institute of Technology, Stockholm, Sweden \\ \printead{e3}}
		\address[C]{IHU-Liryc, Université de Bordeaux, Pessac, France}
	\end{aug}
	
	\begin{abstract}
	The particle-based, rapid incremental smoother (\PARIS) is a sequential Monte Carlo technique allowing for efficient online approximation of expectations of additive functionals under Feynman--Kac path distributions. Under weak assumptions, the algorithm has linear computational complexity and limited memory requirements. It also comes with a number of non-asymptotic bounds and convergence results. However, being based on self-normalised importance sampling, the {\PARIS} estimator is biased; its bias is inversely proportional to the number of particles but has been found to grow linearly with the time horizon under appropriate mixing conditions. In this work, we propose the Parisian particle Gibbs (\PPG) sampler, whose complexity is essentially the same as that of the {\PARIS} and which significantly reduces the bias for a given computational complexity at the price of a modest increase in the variance. This method is a wrapper in the sense that it uses the {\PARIS}  algorithm in the inner loop of particle Gibbs to form a bias-reduced version of the targeted quantities. We substantiate the {\PPG} algorithm with  theoretical results, including new bounds on bias and variance as well as deviation inequalities. We illustrate our theoretical results with numerical experiments supporting our claims.
\end{abstract}
	
	\begin{keyword}
	\kwd{bias reduction}
	\kwd{particle Gibbs samplers}
	\kwd{sequential Monte Carlo methods}
	\kwd{state-space models}
	\kwd{joint smoothing}
	\end{keyword}
	
\end{frontmatter}

\section{Introduction}
\label{sec:introduction}
\emph{Feynman--Kac formulas} play a key role in various models used in statistics, physics, and many other fields; see \cite{delmoral:2004,delmoral:2013,chopin2020introduction} and the references therein.
Let $\{ (\xsp{n}, \xfd{n}) \}_{n \in \nset}$ be a sequence of general state spaces, and define, for every $n \in \nset$, $\xpsp{0}{n} \eqdef \prod_{m = 0}^n \xsp{m}$ and $\xpfd{0}{n} \eqdef \bigotimes_{m = 0}^n \xfd{m}$. For a sequence $\{ \mk{n} \}_{n \in \nset}$ of Markov kernels $\mk{n} : \xsp{n} \times \xfd{n + 1} \rightarrow [0, 1]$, an initial distribution $\init \in \probmeas(\xfd{0})$, and a sequence $\{ \pot{n} \}_{n \in \nset}$ of bounded measurable potential functions $\pot{n} : \xsp{n} \rightarrow \rset_+$, a sequence $\{ \targ{0:n} \}_{n \in \nset}$ of \emph{Feynman--Kac path measures} is defined by
\begin{equation}
\label{eq:normalized:F-K:measures} \targ{0:n} : \xpfd{0}{n} \ni A \mapsto \frac{\untarg{0:n}(A)}{\untarg{0:n}(\xpsp{0}{n})}, \quad n \in \nset,
\end{equation}
where
\begin{equation}
\label{eq:unnormalized:F-K:measures} \untarg{0:n} : \xpfd{0}{n} \ni A \mapsto \int \indi{A}(\chunk{x}{0}{n}) \, \targ{0}(\rmd x_0) \prod_{m = 1}^{n - 1} \uk{m}(x_m, \rmd x_{m + 1}) ,
\end{equation}
with
\begin{equation} \label{eq:unnormalised:kernel}
\uk{m} : \xsp{m} \times \xfd{m + 1} \ni (x, A) \mapsto \pot{m}(x) \mk{m}(x, A)
\end{equation}
being unnormalised kernels. By convention, $\targ{0:0} \eqdef \eta_0$.
Note that each $\targ{0:n}$ is a probability measure, while $\untarg{0:n}$ is not normalised. 
For every $n \in \nsetpos$ we also define the marginal distribution
$\targ{n} : \xfd{n} \ni A \mapsto \targ{0:n}(\xsp{0:n - 1} \times A)$.
In the context of nonlinear filtering in \emph{general state-space hidden Markov models}, $\targ{0:n}$ is the \emph{joint-smoothing distribution} and $\targ{n}$ is the \emph{filter distribution}; see \cite{delmoral:2004,cappe:moulines:ryden:2005,chopin2020introduction}.

For most problems of interest in practice, the Feynman--Kac path or marginal measures are intractable, and so is also any expectation associated with the same. Therefore, \emph{particle approximations} of such measures have been developed.
A particle filter approximates the flow $\left\{\targ{n}\right\}_{n \in \nset}$ of marginals by a sequence of occupation measures associated with a swarm of particles $\{\epart{n}{i}\}_{i=1}^{N}$, $n \in \nset$, each particle $\epart{n}{i}$ being a random draw in $\xsp{n}$. Particle filters revolve around two operations: a \emph{selection step} duplicating/discarding particles with large/small importance weights, respectively, and a \emph{mutation step} evolving randomly the selected particles in the state space. Applying alternatingly and iteratively selection and mutation results in a swarm of $\N$  particles that are both serially and spatially dependent. Feynman--Kac path models can also be interpreted as laws associated with a certain kind of Markovian backward dynamics; this interpretation is useful, for instance, for the smoothing problem in nonlinear filtering \cite{douc:garivier:moulines:olsson:2009,delmoral:doucet:singh:2010}.
Several convergence results have been proved when the number $\N$ of particles tends to infinity; see, \emph{e.g.}, \cite{delmoral:2004,douc:moulines:2008,delmoral:2013,chopin2020introduction}. A number of non-asymptotic results have also been established, including the bias of the particle approximation of the Feynman--Kac formula and associated chaos propagation. Extensions to backward interpretation can also be found in \cite{douc:garivier:moulines:olsson:2009,delmoral:doucet:singh:2010}.

In this paper, we focus on the problem of recursively computing smoothed expectations
\begin{equation*}
\targ{0:n} h_{n}=\int h_{n}(\chunk{x}{0}{n}) \, \targ{0:n}(\rmd \chunk{x}{0}{n}), \quad n \in \nset,
\end{equation*}
where we have introduced the vector notation $\chunk{x}{0}{n} = (x_0, \ldots, x_n) \in \xpsp{0}{n} \eqdef \xsp{0} \times \cdots \times \xsp{n}$,
for additive functionals $h_{n}$ in the form
\begin{equation} \label{eq:add:functional}
h_{n}(\chunk{x}{0}{n}) \eqdef \sum_{m=0}^{n-1} \tilde{h}_m(\chunk{x}{m}{m + 1}), \quad \chunk{x}{0}{n} \in \xpsp{0}{n}.
\end{equation}
In nonlinear filtering problems, such expectations appear in the context of maximum-likelihood parameter estimation, for instance, when computing the score function (the gradient of the log-likelihood function) or the EM surrogate; see \cite{cappe:2001,andrieu:doucet:2003,poyiadjis:doucet:singh:2005,cappe:2009,poyiadjis:doucet:singh:2011}.
In \cite{olsson:westerborn:2017}, an efficient \emph{particle-based, rapid incremental smoother} ({\PARIS}) was proposed. {\PARIS} has linear computational complexity in the number of particles under weak assumptions and limited memory requirements. 
An interesting feature of {\PARIS}, which samples on-the-fly from the backward dynamics induced by the particle filter, is that it requires two or more backward draws per particle to cope with the degeneracy of the sampled trajectories and remain numerically stable in the long run, with an asymptotic variance that grows only linearly with time.

In this paper, we highlight a method to reduce the bias of the {\PARIS} estimator of $\targ{0:n} h_n$. The idea is to mix the {\PARIS} algorithm (to introduce a conditional {\PARIS} algorithm) and a version of the particle Gibbs algorithm with backward sampling \cite{andrieu:doucet:holenstein:2010,lindsten:jordan:schoen:2014,chopin:singh:2015,del2016particle,del2018sharp}.
This leads to the {\PARIS}ian particle Gibbs (\PPG) algorithm, for which we provide an upper bound of the bias which decreases inversely proportionally to the number of particles and exponentially fast with the iteration index (under the assumption that the particle Gibbs sampler is uniformly ergodic).

The paper is structured as follows. In \Cref{sec:Feynman-Kac} we recall the Feynman--Kac model and its backward interpretation and introduce the particle Gibbs. Our presentation is  inspired by \cite{del2016particle}, but differs in that it avoids the use of quotient spaces, while it avoids the extension of the distribution to the ancestral indices of the original derivation in \cite{andrieu:doucet:holenstein:2010}. In \Cref{sec:paris} we introduce the {\PARIS} algorithm and its conditional version. We show how {\PARIS} can be coupled with the particle Gibbs method with backward sampling, yielding the {\PPG} algorithm. In \Cref{sec:main-results}, we present the central result of this paper, an upper bound on the bias of the {\PPG} estimator as a function of the number of particles and the iteration index of the Gibbs algorithm. In addition, we provide an upper bound on the mean-squared error (MSE). In \Cref{sec:main-proofs}, the most important and original proofs  are collected. In \Cref{sec:numerics}, we provide a limited Monte Carlo experiment to illustrate our results. Finally, \Cref{sec:algorithms} and \Cref{sec:proofs} contain some pseudocode and additional proofs of technical character, respectively.

\paragraph{Notation.}
Let $\rset_+ \eqdef \coint{0, \infty}$, $\rset_+^* \eqdef (0, \infty)$, $\nset \eqdef \{0,1,2, \ldots\}$, and $\nset^* \eqdef \{1,2,3, \ldots\}$ denote the sets of nonnegative and positive real numbers and integers, respectively. 
We denote by $\Id_{\N}$ the $\N \times \N$ identity matrix. For any quantities $\left\{a_{\ell}\right\}_{\ell=m}^{n}$ we denote vectors as $\chunk{a}{m}{n} \eqdef (a_m, \ldots, a_n)$ and for any $(m, n) \in \nset^2$ such that $m \leq n$ we let $\intvect{m}{n} \eqdef \{m, m+1, \ldots, n\}$. For a given measurable space $(\xsp{}, \xfd{})$, where $\xfd{}$ is a countably generated $\sigma$-algebra, we denote by $\bmf(\xfd{})$ the set of bounded $\xfd{} / \mathcal{B}(\rset)$-measurable functions on $\xsp{}$. For any $h \in \bmf(\xfd{})$, we let
$\| h \|_\infty \eqdef \sup _{x \in \xsp{}}|h(x)|$ and $\operatorname{osc}(h) \eqdef \sup _{(x, x') \in \xsp{}^2}| h(x) - h(x')|$ denote the supremum and oscillator norms of $h$, respectively. Let $\meas(\xfd{})$ be the set of $\sigma$-finite measures on $(\xsp{}, \xfd{})$ and $\probmeas(\xfd{}) \subset \meas(\xfd{})$ the probability measures.

Let $(\ysp{}, \yfd{})$ be another measurable space. A possibly unnormalised transition kernel $K$ on $\xsp{} \times \yfd{}$ induces two integral operators, one acting on measurable functions and the other on measures; more specifically, for $h \in \bmf(\xfd{} \otimes \yfd{})$ and $\nu \in \probmeas(\xfd{})$, define the measurable function
$$
	K h : \xsp{} \ni x \mapsto \int h(x, y) \, K(x, \rmd y)
$$
and the measure
$$
	\nu K : \yfd{} \ni A \mapsto \int K(x, A) \, \nu(\rmd x),
$$
whenever these quantities are well defined. Now, let $(\zsp{}, \zfd{})$ be a third measurable space and $L$ another possibly unnormalised transition kernel on $\ysp{} \times \zfd{}$; we then define, with $K$ as above, two different products of $K$ and $L$, namely
$$
	K L: \xsp{} \times \zfd{} \ni (x, A) \mapsto \int  L(y, A) \, K(x, \rmd y)
$$
and
$$
	K \tensprod L: \xsp{} \times(\yfd{} \tensprod \zfd{}) \ni (x, A) \mapsto \iint \indi{A}(y, z) \, K(x, \rmd y) \, L(y, \rmd z),
$$
whenever these are well defined. This also defines the $\tensprod$ products of a kernel $K$ on $\xsp{} \times \yfd{}$ and a measure $\nu$ on $\xfd{}$ as well as of a kernel $L$ on $\ysp{} \times \xfd{}$ and a measure $\mu$ on $\yfd{}$ as the measures  
\begin{align*}
\nu \tensprod K &: \xfd{} \tensprod \yfd{} \ni A \mapsto \iint \indi{A}(x, y) \, K(x, \rmd y) \, \nu(\rmd x), \\
L \tensprod \mu &: \xfd{} \tensprod \yfd{} \ni A \mapsto \iint \indi{A}(x, y) \, L(y, \rmd x) \, \mu(\rmd y).
\end{align*}

\section{Feynman--Kac models}
\label{sec:Feynman-Kac}
In the next sections, we recall \emph{Feynman--Kac models}, \emph{many-body Feynman--Kac models}, \emph{backward interpretations}, and \emph{conditional dual processes}. Our presentation follows closely \cite{del2016particle} but with a different and hopefully more transparent definition of the many-body extensions. We restate (in \Cref{thm:duality} below) a duality formula of \cite{del2016particle} relating these concepts. This formula provides a foundation for the \emph{particle Gibbs sampler} described in \Cref{sec:dual:and:Gibbs} and is pivotal for the coming developments.
\subsection{Many-body Feynman--Kac models}
\label{sec:mb:FK:models}
In the following we assume that all random variables are defined on a common probability space $(\Omega, \mathcal{F}, \prob)$. The distribution flow $\{ \targ{m} \}_{m \in \nset}$ is intractable in general, but can be approximated  by random samples $\epartmb{m} = \{ \epart{m}{i} \}_{i = 1}^\N$, $m \in \nset$, of \emph{particles},
where $\N \in \nsetpos$ is a fixed Monte Carlo sample size and each particle $\epart{m}{i}$ is an $\xsp{m}$-valued random variable. Such particle approximation is based on the recursion $\targ{m + 1} = \pd{m}(\targ{m})$, $m \in \nset$, where $\pd{m}$ denotes the mapping
\begin{equation} \label{eq:def:pd:mapping}
\pd{m} : \probmeas(\xfd{m}) \ni \eta \mapsto \frac{\eta \uk{m}}{\eta \pot{m}}
\end{equation}
taking on values in $\probmeas(\xfd{m + 1})$. In order to describe recursively the evolution of the particle population, let $m \in \nset$ and assume that the particles $\epartmb{m}$ form a consistent approximation of $\targ{m}$ in the sense that $\occm(\epartmb{m}) h$, where $\occm(\epartmb{m}) \eqdef \sum_{i = 1}^\N \delta_{\epart{m}{i}} / \N$ (where $\delta_x$ denotes the Dirac measure located at $x$) is the occupation measure formed by $\epartmb{m}$, which serves as a proxy for $\targ{m} h$ for any $\targ{m}$-integrable test function $h$. (Under general conditions, $\occm(\epartmb{m}) h$ converges in probability to $\targ{m}$ with $N \rightarrow \infty$; see \cite{delmoral:2004,chopin2020introduction} and references therein.) Then, in order to generate an updated particle sample  approximating $\targ{m + 1}$, new particles $\epartmb{m + 1} = \{ \epart{m + 1}{i} \}_{i = 1}^\N$ are drawn conditionally independently given $\epartmb{m}$ according to
$$
\epart{m + 1}{i} \sim \pd{m}(\occm(\epartmb{m})) = \sum_{\ell = 1}^\N \frac{\pot{m}(\epart{m}{\ell})}{\sum_{\ell' = 1}^\N \pot{m}(\epart{m}{\ell'})} \mk{m}(\epart{m}{\ell}, \cdot), \quad i \in \intvect{1}{\N}.
$$
Since this process of particle updating involves sampling from the mixture distribution $ \pd{m}(\occm(\epartmb{m}))$, it can be naturally decomposed into two substeps: \emph{selection} and \emph{mutation}. The selection step consists of randomly choosing the $\ell$-th mixture stratum with probability $\pot{m}(\epart{m}{\ell}) / \sum_{\ell' = 1}^\N \pot{m}(\epart{m}{\ell'})$ and the mutation step consists of drawing a new particle $\epart{m + 1}{i}$ from the selected stratum $\mk{m}(\epart{m}{\ell}, \cdot)$. In \cite{del2016particle}, the term \emph{many-body Feynman--Kac models} is related to the law of process $\{ \epartmb{m} \}_{m \in \nset}$. For all $m \in \nset$, let $\xspmb{m} \eqdef \xsp{m}^\N$ and $\xfdmb{m} \eqdef \xfd{m}^{\varotimes \N}$; then $\{ \epartmb{m} \}_{m \in \nset}$ is an inhomogeneous Markov chain on $\{ \xspmb{m} \}_{m \in \nset}$ with transition kernels
$$
\mkmb{m} : \xspmb{m} \times \xfdmb{m + 1} \ni (\xmb_m, A) \mapsto \pd{m}(\occm(\xmb_m))^{\tensprod \N}(A)
$$
and initial distribution $\initmb = \init^{\tensprod \N}$. Now, denote $\xspmb{0:n} \eqdef \prod_{m = 0}^n \xspmb{m}$ and $\xfdmb{0:n} \eqdef \bigotimes_{m = 0}^n \xfdmb{m}$. (Here and in the following we use a bold symbol to stress that a quantity is related to the many-body process.) The \emph{many-body Feynman--Kac path model} refers to the flows $\{ \untargmb{m} \}_{m \in \nset}$ and $\{ \targmb{m} \}_{m \in \nset}$ of the unnormalised and normalised, respectively, probability distributions on $\{ \xfdmb{0:m} \}_{m \in \nset}$ generated by \eqref{eq:normalized:F-K:measures} and \eqref{eq:unnormalized:F-K:measures} for the Markov kernels $\{ \mkmb{m} \}_{m \in \nset}$, the initial distribution $\initmb$, the potential functions
$$
\potmb{m} : \xspmb{m} \ni \xmb_m \mapsto \occm(\xmb_m) \pot{m} = \frac{1}{N} \sum_{i = 1}^\N \pot{m}(x_m^i), \quad m \in \nset,
$$
and the corresponding unnormalised transition kernels
$$
\ukmb{m} : \xspmb{m} \times \xfdmb{m + 1} \ni (\xmb[m], A) \mapsto \potmb{m}(\xmb[m]) \mkmb{m} (\xmb[m], A), \quad m \in \nset.
$$

\subsection{Backward interpretation of Feynman--Kac path flows}
Suppose that each kernel $\uk{n}$, $n \in \nset$, defined in \eqref{eq:unnormalised:kernel}, has a transition density $\ud{n}$ with respect to some dominating measure $\refm{n+1} \in \meas(\xfd{n + 1})$. Then for $n \in \nset$ and $\eta \in \probmeas(\xfd{n})$ we may define the \emph{backward kernel}
\begin{equation} \label{def:backward:kernel}
\bk{n, \eta} : \xsp{n + 1} \times \xfd{n} \ni (x_{n + 1}, A) \mapsto \frac{\int \1_A(x_n) \ud{n}(x_n, x_{n + 1}) \, \eta(\rmd x_n)}{\int \ud{n}(x_n', x_{n + 1}) \, \eta(\rmd x_n')}.
\end{equation}
Now, denoting, for $n \in \nsetpos$,
\begin{equation}
\label{eq:bckwd-kernels}
\rk{n} : \xsp{n} \times \xfd{0:n - 1} \ni (x_n, A) \mapsto \idotsint \1_A(x_{0:n - 1}) \prod_{m = 0}^{n - 1} \bk{m, \targ{m}}(x_{m + 1}, \rmd x_m),
\end{equation}
we may state the following---now classical---\emph{backward decomposition} of the Feynman--Kac path measures, a result that will play a pivotal role in the following.

\begin{proposition} \label{prop:backward:decomposition}
For every $n \in \nsetpos$ it holds that $\untarg{0:n} = \untarg{n} \tensprod \rk{n}$ and $\targ{0:n} = \targ{n} \tensprod \rk{n}$.
\end{proposition}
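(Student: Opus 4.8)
The plan is to prove the unnormalised identity $\untarg{0:n} = \untarg{n} \tensprod \rk{n}$ by induction on $n$ and then deduce the normalised statement by dividing through by the common total mass. The whole argument rests on a single one-step ``reversal'' identity for the backward kernel, which I would establish first. Observe that, by its definition in \eqref{def:backward:kernel}, $\bk{m, \eta}$ depends on $\eta$ only through the ratio in the integrand and is therefore invariant under rescaling of $\eta$; in particular $\bk{m, \targ{m}} = \bk{m, \untarg{m}}$, which lets me freely replace the normalised marginals appearing in \eqref{eq:bckwd-kernels} by the unnormalised ones. Writing $\uk{m}(x_m, \rmd x_{m+1}) = \ud{m}(x_m, x_{m+1}) \, \refm{m+1}(\rmd x_{m+1})$ and substituting \eqref{def:backward:kernel}, a direct Fubini computation shows that the denominator of $\bk{m, \eta}$ cancels, yielding the key identity $\eta \tensprod \uk{m} = (\eta \uk{m}) \tensprod \bk{m, \eta}$ (up to the canonical reordering of the two coordinates).

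Next I would record the two recursions feeding the induction. On the forward side, the product form \eqref{eq:unnormalized:F-K:measures} gives $\untarg{0:n} = \untarg{0:n-1} \tensprod \uk{n-1}$ and, marginalising, $\untarg{n} = \untarg{n-1} \uk{n-1}$. On the backward side, \eqref{eq:bckwd-kernels} factorises as $\rk{n}(x_n, \rmd \chunk{x}{0}{n-1}) = \bk{n-1, \targ{n-1}}(x_n, \rmd x_{n-1}) \, \rk{n-1}(x_{n-1}, \rmd \chunk{x}{0}{n-2})$. The base case $n = 1$ is then exactly the reversal identity applied to $\eta = \targ{0} = \init$, since $\rk{1} = \bk{0, \targ{0}}$.

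For the inductive step I would assume $\untarg{0:n-1} = \untarg{n-1} \tensprod \rk{n-1}$ and test against an arbitrary bounded $F$ on $\xpsp{0}{n}$. Using the forward recursion together with the induction hypothesis, $\untarg{0:n} F$ becomes an integral against $\untarg{n-1}(\rmd x_{n-1}) \, \rk{n-1}(x_{n-1}, \rmd \chunk{x}{0}{n-2}) \, \uk{n-1}(x_{n-1}, \rmd x_n)$. Since both the factor $\rk{n-1}$ and the factor $\uk{n-1}$ attach only to $x_{n-1}$, Fubini lets me isolate the pair $(x_{n-1}, x_n)$ and apply the reversal identity to $\untarg{n-1}(\rmd x_{n-1}) \, \uk{n-1}(x_{n-1}, \rmd x_n)$, turning it into $\untarg{n}(\rmd x_n) \, \bk{n-1, \untarg{n-1}}(x_n, \rmd x_{n-1})$ via $\untarg{n-1} \uk{n-1} = \untarg{n}$. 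Re-assembling with the scale-invariance $\bk{n-1, \untarg{n-1}} = \bk{n-1, \targ{n-1}}$ and the backward recursion reconstitutes $\untarg{n}(\rmd x_n) \, \rk{n}(x_n, \rmd \chunk{x}{0}{n-1})$, i.e.\ $(\untarg{n} \tensprod \rk{n}) F$, closing the induction.

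Finally, because $\untarg{0:n}$ and $\untarg{n}$ share the same total mass $\untarg{0:n}(\xpsp{0}{n}) = \untarg{n}(\xsp{n})$ (as $\rk{n}$ is Markov) and $\tensprod$ is linear in its measure argument, dividing the unnormalised identity by this constant gives $\targ{0:n} = \targ{n} \tensprod \rk{n}$. The one point demanding care is the measure-theoretic bookkeeping: the products $\nu \tensprod K$ live on reordered coordinate spaces, so each step carries an implicit identification of $\xpsp{0}{n}$ with $\xsp{n} \times \xpsp{0}{n-1}$, and the reversal identity only holds $\untarg{n}$-almost everywhere, where the denominator $\int \ud{n-1}(x_{n-1}', \cdot) \, \untarg{n-1}(\rmd x_{n-1}')$ is strictly positive. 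Verifying that these null sets are genuinely negligible and that Fubini applies (guaranteed by $\sigma$-finiteness and boundedness of the potentials) is the only real obstacle; everything else is the bookkeeping of the two recursions.
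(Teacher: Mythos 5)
Your argument is correct, and it reaches the result by a genuinely different (though closely related) route from the paper. The paper proves the normalised identity directly, in a single chain of equalities: it first telescopes the normalising constant via $\targ{0} \uk{0} \cdots \uk{n - 1} \indi{\xsp{n}} = \prod_{m = 0}^{n - 1} \targ{m} \uk{m} \indi{\xsp{m + 1}}$, writes $\targ{0:n}$ as a product of normalised one-step factors, and then multiplies and divides each factor by $\targ{m}[\ud{m}(\bcdot, x_{m + 1})]$ so that the marginals $\targ{m+1}$ and the backward kernels $\bk{m,\targ{m}}$ appear simultaneously; the unnormalised statement then follows by rescaling. You instead isolate the one-step time-reversal identity $\eta \tensprod \uk{m} = (\eta \uk{m}) \tensprod \bk{m,\eta}$ as a lemma, run an induction on $n$ for the unnormalised measures, and normalise at the end. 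Both proofs rest on exactly the same cancellation of the backward-kernel denominator; what your organisation buys is a reusable local lemma and an argument that never touches the normalising constants until the last line (so the scale-invariance observation $\bk{m,\targ{m}} = \bk{m,\untarg{m}}$ does real work), at the cost of the induction bookkeeping. Your closing remarks on the null set where $\int \ud{m}(x_m', \cdot)\,\eta(\rmd x_m')$ vanishes and on the implicit coordinate reordering in $\nu \tensprod K$ versus $L \tensprod \mu$ address points the paper passes over silently; they are handled correctly.
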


Although the decomposition in \Cref{prop:backward:decomposition} is well known (see, \eg, \cite{delmoral:doucet:singh:2010,del2016particle}), we provide a proof in \Cref{sec:proof:backward:decomposition} for completeness. Using the backward decomposition, one can obtain a particle approximation of a given Feynman--Kac path measure $\targ{0:n}$ by first sampling, in an initial forward pass, particle clouds $\{ \epartmb{m} \}_{m = 0}^n$ from $\initmb \tensprod \mkmb{0} \tensprod \cdots \tensprod \mkmb{n - 1}$ and then sampling, in a subsequent backward pass, say $N$ conditionally independent paths $\{\eparttd{0:n}{i} \}_{i = 1}^\N$ from $\bdpart{n}(\epartmb{0}, \ldots, \epartmb{n}, \bcdot)$, where
\begin{equation} \label{eq:def:bdpart}
\bdpart{n} : \xspmb{0:n} \times \xfd{0:n} \ni (\xmb[0:n], A) \mapsto \idotsint  \1_A(x_{0:n}) \left( \prod_{m = 0}^{n - 1} \bk{m,\occm(\xmb[m])}(x_{m + 1}, \rmd x_m) \right) \occm(\xmb[n])(\rmd x_n)
\end{equation}
is a Markov kernel describing the time-reversed dynamics induced by the particle approximations generated in the forward pass. (Here and in the following we use blackboard notation to denote kernels related to many-body path spaces.) Finally, $\occm(\{\eparttd{0:n}{i} \}_{i = 1}^\N) h$ is returned as an estimator of $\targ{0:n} h$ for any $\targ{0:n}$-integrable test function $h$. This algorithm is in the literature referred to as the \emph{forward--filtering backward--simulation} (FFBSi) \emph{algorithm} and was introduced in \cite{godsill:doucet:west:2004}; see also \cite{cappe:godsill:moulines:2007,douc:garivier:moulines:olsson:2009}.
More precisely, given the forward particles $\{ \epartmb{m} \}_{m = 0}^n$, each path $\eparttd{0:n}{i}$ is generated by first drawing $\eparttd{n}{i}$ uniformly among the particles $\epartmb{n}$ in the last generation and then drawing, recursively,
\begin{equation} \label{eq:backward:sampling:operation}
\eparttd{m}{i} \sim \bk{m,\occm(\epartmb{m})}(\eparttd{m + 1}{i}, \cdot) = \sum_{j = 1}^\N \frac{\ud{m}(\epart{m}{j}, \eparttd{m + 1}{i})}{\sum_{\ell=1}^{N} \ud{m}(\epart{m}{\ell}, \eparttd{m + 1}{i})} \delta_{\epart{m}{j}},
\end{equation}
\emph{i.e.}, given $\eparttd{m + 1}{i}$, $\eparttd{m}{i}$ is picked at random among the $\epartmb{m}$ according to weights proportional to $\{ \ud{m}(\epart{m}{j}, \eparttd{m + 1}{i}) \}_{j = 1}^\N$.
Note that in this basic formulation of the FFBSi algorithm, each backward-sampling operation \eqref{eq:backward:sampling:operation} requires the computation of the normalising constant $\sum_{\ell=1}^{N} \ud{m}(\epart{m}{\ell}, \eparttd{m + 1}{i})$, which implies an overall quadratic complexity of the algorithm. Still, this heavy computational burden can eased by means of an effective accept--reject technique discussed in \Cref{sec:std:paris}.

\subsection{Conditional dual processes and particle Gibbs}
\label{sec:dual:and:Gibbs}
The \emph{dual process} associated with a given Feynman--Kac model (\ref{eq:normalized:F-K:measures}--\ref{eq:unnormalized:F-K:measures}) and a given trajectory $\{ z_n \}_{n \in \nset}$, where $z_n \in \xsp{n}$ for every $n \in \nset$, is defined as the canonical Markov chain with kernels
\begin{multline} \label{eq:kernel:cond:dual}
\mkmb{n}[z_{n + 1}] : \xspmb{n} \times \xfdmb{n + 1} \ni \\
(\xmb_n, A) \mapsto \frac{1}{\N} \sum_{i = 0}^{\N - 1} \left( \Phi_n(\occm(\xmb_n))^{\tensprod i} \tensprod \delta_{z_{n + 1}} \tensprod \Phi_n(\occm(\xmb_n))^{\tensprod (\N - i - 1)} \right)(A),
\end{multline}
for $n \in \nset$, and initial distribution
\begin{equation} \label{eq:init:cond:dual}
\initmb[z_0] \eqdef \frac{1}{\N} \sum_{i = 0}^{\N - 1} \left( \init^{\tensprod i} \tensprod \delta_{z_0} \tensprod \init^{\tensprod (\N - i - 1)} \right).
\end{equation}
As clear from (\ref{eq:kernel:cond:dual}--\ref{eq:init:cond:dual}), given $\{ z_n \}_{n \in \nset}$, a realisation $\{ \epartmb{n} \}_{n \in \nset}$ of the dual process is generated as follows. At time zero, the process is initialised by inserting $z_0$ at a randomly selected position in the vector $\epartmb{0}$ while drawing independently the remaining elements in the same vector from $\targ{0}$. After this, the process proceeds in a Markovian manner by, given $\epartmb{n}$, inserting $z_{n + 1}$ at a randomly selected position in $\epartmb{n + 1}$ while drawing independently the remaining elements from $\Phi_n(\occm(\epartmb{n}))$.

In order to describe compactly the law of the conditional dual process we define the Markov kernel
$$
\mbjt{n} : \xsp{0:n} \times \xfdmb{0:n} \ni (\chunk{z}{0}{n}, A) \mapsto \initmb[z_0] \tensprod \mkmb{0}[z_1] \tensprod \cdots \tensprod \mkmb{n - 1}[z_n](A).
$$
The following result elegantly combines the underlying model (\ref{eq:normalized:F-K:measures}--\ref{eq:unnormalized:F-K:measures}), the many-body Feynman--Kac model, the backward decomposition, and the conditional dual process.
\begin{theorem}[\cite{del2016particle}] \label{thm:duality}
For all $n \in \nset$ it holds that
\begin{equation}  \label{eq:duality}
\bdpart{n} \tensprod \untargmb{0:n}  = \untarg{0:n} \tensprod \mbjt{n}.
\end{equation}
\end{theorem}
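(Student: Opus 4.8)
The plan is to establish \eqref{eq:duality} by induction on $n$, peeling off the last time step and reducing the inductive step to a single \emph{one-step swap} identity that exchanges the backward kernel against the conditional dual kernel. For the base case $n=0$, note that $\untargmb{0:0}=\initmb=\init^{\tensprod\N}$ and that $\bdpart{0}(\xmb[0],\cdot)=\occm(\xmb[0])$ merely selects one particle uniformly, so the left-hand side draws $\N$ i.i.d.\ particles from $\init$ and then picks one at random; by \eqref{eq:init:cond:dual} the right-hand side draws $x_0\sim\init$ and inserts it at a uniformly chosen position among $\N-1$ further i.i.d.\ $\init$-draws. These two descriptions produce the same joint law, which settles $n=0$.

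For the inductive step I would assume \eqref{eq:duality} at $n-1$ and use the recursions $\untarg{0:n}=\untarg{0:n-1}\tensprod\uk{n-1}$ and $\untargmb{0:n}=\untargmb{0:n-1}\tensprod\ukmb{n-1}$ (from \eqref{eq:unnormalized:F-K:measures}), the factorisation $\mbjt{n}(z_{0:n},\rmd\xmb[0:n])=\mbjt{n-1}(z_{0:n-1},\rmd\xmb[0:n-1])\,\mkmb{n-1}[z_n](\xmb[n-1],\rmd\xmb[n])$, and the factorisation of $\bdpart{n}$ from \eqref{eq:def:bdpart} into the backward tail $\prod_{m=0}^{n-2}\bk{m,\occm(\xmb[m])}(x_{m+1},\rmd x_m)$ — common to $\bdpart{n-1}$ and $\bdpart{n}$ — times its last factors. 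Substituting the recursions for $\untarg{0:n}$ and $\mbjt{n}$ into the right-hand side of \eqref{eq:duality} and applying the induction hypothesis to rewrite $\untarg{0:n-1}\tensprod\mbjt{n-1}$ as $\untargmb{0:n-1}\tensprod\bdpart{n-1}$, both sides take the form of the outer measure $\untargmb{0:n-1}(\rmd\xmb[0:n-1])$ integrated first against a one-step bracket in the variables $(x_{n-1},x_n,\xmb[n])$ and then against the common backward tail, viewed as a fixed kernel in $x_{n-1}$. Since the outer measure and the tail are identical on both sides, equality reduces to the one-step identity, valid for each fixed $\xmb[n-1]$,
\begin{multline*}
\occm(\xmb[n-1])(\rmd x_{n-1})\,\uk{n-1}(x_{n-1},\rmd x_n)\,\mkmb{n-1}[x_n](\xmb[n-1],\rmd\xmb[n]) \\
= \potmb{n-1}(\xmb[n-1])\,\mkmb{n-1}(\xmb[n-1],\rmd\xmb[n])\,\occm(\xmb[n])(\rmd x_n)\,\bk{n-1,\occm(\xmb[n-1])}(x_n,\rmd x_{n-1}).
\end{multline*}

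To prove this identity I would test both sides against a bounded function of $(x_{n-1},x_n,\xmb[n])$ and compute. The crux — and the main obstacle — is an exact cancellation: the selection density satisfies $\pd{n-1}(\occm(\xmb[n-1]))(\rmd y)=(\N\potmb{n-1}(\xmb[n-1]))^{-1}\bigl(\sum_{\ell=1}^{\N}\ud{n-1}(x_{n-1}^\ell,y)\bigr)\refm{n}(\rmd y)$, so the factor $\sum_{\ell}\ud{n-1}(x_{n-1}^\ell,x_n)$ it carries is exactly the normalising constant in the denominator of the backward kernel $\bk{n-1,\occm(\xmb[n-1])}(x_n,\cdot)$; these cancel, and so does the many-body potential $\potmb{n-1}(\xmb[n-1])$. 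After the cancellation each side collapses to $\N^{-2}\sum_{i,p}$ of the \emph{same} integrand, in which $x_n$ equals the component of $\xmb[n]$ at a position $p$ (inserted on the right, uniformly selected by $\occm(\xmb[n])$ on the left), $x_{n-1}=x_{n-1}^i$ is its ancestor, the remaining $\N-1$ components of $\xmb[n]$ are drawn from $\pd{n-1}(\occm(\xmb[n-1]))$, and the pair carries weight $\ud{n-1}(x_{n-1}^i,x_n)\,\refm{n}(\rmd x_n)$. Beyond this cancellation, the only care needed is in tracking the tensor-product conventions so that the forward-then-backward and reference-then-dual constructions are matched coordinate by coordinate.
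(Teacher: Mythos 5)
Your proof is correct and follows essentially the same route as the paper's: induction on the time horizon, with the inductive step reduced to a one-step exchange identity whose proof hinges on exactly the cancellation you identify, namely that the density of $\pd{n-1}(\occm(\xmb[n-1]))$ carries the normalising constant $\sum_{\ell}\ud{n-1}(x_{n-1}^\ell,\cdot)$ of the backward kernel. The only organisational difference is that your one-step identity absorbs the factor $\bk{n-1,\occm(\xmb[n-1])}$ directly, whereas the paper states a backward-kernel-free transport identity (\Cref{lem:transport:id}) and reintroduces the backward factor in the induction step through the auxiliary density $\bar{h}$; the computational content is the same.
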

In \cite{del2016particle}, each state $\epartmb{n}$ of the many-body process maps an outcome $\omega$ of the sample space $\Omega$ into an \emph{unordered set} of $N$ elements in $\xsp{n}$. However, we have chosen to let each $\epartmb{n}$ take on values in the standard \emph{product space} $\xsp{n}^\N$ for two reasons: first, the construction of \cite{del2016particle} requires sophisticated measure-theoretic arguments to endow such unordered sets with suitable $\sigma$-fields and appropriate measures; second, we see no need to ignore the index order of the particles as long as the Markovian dynamics (\ref{eq:kernel:cond:dual}--\ref{eq:init:cond:dual}) of the conditional dual process is symmetrised over the particle cloud. Therefore, in \Cref{sec:proof:thm:duality}, we include our own proof of duality \eqref{eq:duality} for completeness. Note that the measure \eqref{eq:duality} on $\xfd{0:n} \tensprod \xfdmb{0:n}$ is unnormalised, but since the kernels $\bdpart{n}$ and $\mbjt{n}$ are both Markov, normalising the identity with $\untarg{0:n}(\xsp{0:n}) = \untargmb{0:n}(\xspmb{0:n})$ yields immediately
\begin{equation} \label{eq:duality:normalised}
\bdpart{n} \tensprod \targmb{0:n}   = \targ{0:n} \tensprod \mbjt{n}.
\end{equation}
Since the two sides of \eqref{eq:duality:normalised} provide the full conditionals, it is natural to take a data-augmentation approach and sample the target \eqref{eq:duality:normalised} using a two-stage deterministic-scan Gibbs sampler \cite{andrieu:doucet:holenstein:2010,chopin:singh:2015}. More specifically, assume that we have generated a state
$(\epartmb{0:n}[\ell], \zeta_{0:n}[\ell])$ comprising a dual process with associated path on the basis of $\ell \in \nset$ iterations of the sampler; then the next state $(\epartmb{0:n}[\ell+1], \zeta_{0:n}[\ell+1])$ is generated in a Markovian fashion by, first, sampling $\epartmb{0:n}[\ell+1] \sim \mbjt{n}(\zeta_{0:n}[\ell], \cdot)$ and, second, sampling $\zeta_{0:n}[\ell+1] \sim \bdpart{n}(\epartmb{0:n}[\ell+1], \bcdot)$. After arbitrary initialisation (and the discard of possible burn-in), this procedure produces a Markov trajectory $\{ (\epartmb{0:n}[\ell], \chunk{\zeta}{0}{n}[\ell]) \}_{\ell \in \nset}$, and under weak additional technical conditions this Markov chain admits \eqref{eq:duality:normalised} as its unique
invariant distribution. In such case, the Markov chain is ergodic \cite[Chapter~5]{douc2018markov}, and the marginal distribution of the conditioning path $\chunk{\zeta}{0}{n}[\ell]$ converges to the target distribution $\targ{0:n}$. Therefore, for every $h \in \bmf(\xfd{0:n})$, it holds that 
$$
\lim_{L \to \infty} \frac{1}{L} \sum_{\ell=1}^L h(\chunk{\zeta}{0}{n}[\ell]) =
\targ{0:n} h, \quad \prob\mbox{-a.s.}
$$

\subsection{The {\PARIS} algorithm}
\label{sec:std:paris}
In the following we assume that we are given a sequence $\{ h_n \}_{n \in \nset}$ of \emph{additive state functionals} of type \eqref{eq:add:functional}. This problem is particularly relevant in the context of maximum-likelihood-based parameter estimation in general state-space models, \emph{e.g.}, when computing the \emph{score-function} (the gradient of the log-likelihood function) via the Fisher identity or when computing the intermediate quantity
of the \emph{expectation--maximization} (EM) \emph{algorithm}, in which case $\targ{0:n}$ and $h_n$ correspond to the joint state posterior and an element of some sufficient statistic, respectively; see \cite{cappe:moulines:2005,douc:garivier:moulines:olsson:2009,delmoral:doucet:singh:2010,poyiadjis:doucet:singh:2011,olsson:westerborn:2017}
and the references therein.

Interestingly, as noted in \cite{cappe:2009,delmoral:doucet:singh:2010}, the backward decomposition allows, when applied to additive state functionals, a forward recursion for the expectations $\{ \targ{0:n} \af{n} \}_{n \in \nset}$. More specifically, using the forward decomposition $\af{n + 1}(\chunk{x}{0}{n + 1}) = \af{n}(\chunk{x}{0}{n}) + \afterm{n}(x_n, x_{n + 1})$ and the backward kernel $\rk{n + 1}$ defined in \eqref{eq:bckwd-kernels}, we may write, for $x_{n + 1} \in \xsp{n + 1}$,
\begin{align}
\rk{n + 1} \af{n + 1}(x_{n + 1})  &= \int \bk{n, \targ{n}}(x_{n + 1}, \rmd x_n) \int \left( \af{n}(x_{0:n}) + \afterm{n}(x_n, x_{n + 1}) \right) \rk{n}(x_n, \rmd x_{0:n - 1}) \nonumber \\
&= \bk{n, \targ{n}}(\rk{n} \af{n} + \afterm{n})(x_{n + 1}), \label{eq:backward:forward:recursion}
\end{align}
which by \Cref{prop:backward:decomposition} implies that
\begin{equation}
\label{eq:additive-smoothing-recursion}
\targ{0:n + 1} h_{n + 1} = \targ{n + 1} \bk{n, \targ{n}}(\rk{n} \af{n} + \afterm{n}).
\end{equation}
Since, as we have seen, the marginal flow $\{\targ{n} \}_{n \in \nset}$ can be expressed recursively via the mappings $\{\pd{n} \}_{n \in \nset}$, \eqref{eq:additive-smoothing-recursion} provides, in principle, a basis for online computation of $\{ \targ{0:n} \af{n} \}_{n \in \nset}$. To handle the fact that the marginals are generally intractable we may, following \cite{delmoral:doucet:singh:2010}, plug particle approximations $\occm(\epartmb{n + 1})$ and $\bk{n,\occm(\epartmb{n})}$ (see \eqref{eq:backward:sampling:operation}) of $\targ{n + 1}$ and $\bk{n, \occm(\targ{n})}$, respectively, into the recursion \eqref{eq:additive-smoothing-recursion}. More precisely, we proceed recursively and assume that we at time $n$ have at hand a sample $\{(\epart{n}{i}, \stat{n}{i})\}_{i=1}^\N$ of particles with associated statistics, where each statistic $\stat{n}{i}$ serves as an approximation of $\rk{n} \af{n}(\epart{n}{i})$; then evolving the particle cloud according to $\epartmb{n + 1} \sim \mkmb{n}(\epartmb{n}, \cdot)$ and updating the statistics using \eqref{eq:backward:forward:recursion}, with $\bk{n, \targ{n}}$ replaced by $\bk{n,\occm(\epartmb{n})}$, yields the particle-wise recursion
\begin{equation} \label{eq:FFBSm:forward:update}
\stat{n + 1}{i} = \sum_{\ell = 1}^\N \frac{\ud{n}(\epart{n}{\ell}, \epart{n + 1}{i})}{\sum_{\ell' = 1}^\N \ud{n}(\epart{n}{\ell'}, \epart{n + 1}{i})} \left( \stat{n}{\ell} + \afterm{n}(\epart{n}{\ell}, \epart{n + 1}{i}) \right), \quad i \in \intvect{1}{\N},
\end{equation}
and, finally, the estimator
\begin{equation}
\label{eq:bckwd-interpretation}
\occm(\statmb{n})(\operatorname{id}) = \frac{1}{N} \sum_{i=1}^\N \stat{n}{i}
\end{equation}
of $\targ{0:n} \af{n}$, where we have set $\statmb{n} \eqdef (\stat{n}{1}, \ldots, \stat{n}{\N})$, $i \in \intvect{1}{\N}$. The procedure is initialised by simply letting $\stat{0}{i}=0$ for all $i \in \intvect{1}{N}$.
Note that \eqref{eq:bckwd-interpretation} provides a particle interpretation of the backward decomposition in \Cref{prop:backward:decomposition}. This algorithm is a special case of the \emph{forward--filtering backward--smoothing} (FFBSm) \emph{algorithm} (see \cite{andrieu:doucet:2003,godsill:doucet:west:2004,douc:garivier:moulines:olsson:2009,sarkka2013bayesian}) for additive functionals satisfying \eqref{eq:add:functional}. It allows for online processing of the sequence $\{\targ{0:n} \af{n}\}_{n \in \nset}$, but has also the appealing property that only the current particles $\epartmb{n}$ and statistics $\statmb{n}$ need to be stored in the memory. However, since each update \eqref{eq:FFBSm:forward:update} requires the summation of $\N$ terms, the scheme has an overall \emph{quadratic} complexity in the number of particles, leading to a computational bottleneck in applications to complex models that require large particle sample sizes $\N$.

In order to detour the computational burden of this forward-only implementation of {FFBSm}, the {\PARIS} algorithm \cite{olsson:westerborn:2017} updates the statistics $\statmb{n}$ by replacing each sum \eqref{eq:FFBSm:forward:update} by a Monte Carlo estimate
\begin{equation}
\label{eq:paris-update}
\stat{n+1}{i} = \frac{1}{\M} \sum_{j=1}^\M \left( \stattd{n}{i, j} + \afterm{n}(\eparttd{n}{i, j}, \epart{n+1}{i}) \right), \quad i \in \intvect{1}{N},
\end{equation}
where $\{(\eparttd{n}{i, j}, \stattd{n}{i, j})\}_{j = 1}^\M$ are drawn randomly among $\{(\epart{n}{i}, \stat{n}{i})\}_{i = 1}^\N$ with replacement, by assigning $(\eparttd{n}{i, j}, \stattd{n}{i, j})$ the value of $(\epart{n}{\ell}, \stat{n}{\ell})$ with probability $\ud{n}(\epart{n}{\ell}, \epart{n + 1}{i}) / \sum_{\ell' = 1}^\N \ud{n}(\epart{n}{\ell'}, \epart{n + 1}{i})$, and the Monte Carlo sample size $\M \in \nsetpos$ is supposed to be much smaller than $\N$ (say, less than $5$). Formally,
$$
\{(\eparttd{n}{i, j}, \stattd{n}{i, j})\}_{j = 1}^\M \sim \left( \sum_{\ell = 1}^\N \frac{\ud{n}(\epart{n}{\ell}, \epart{n + 1}{i})}{\sum_{\ell' = 1}^\N \ud{n}(\epart{n}{\ell'}, \epart{n + 1}{i})} \delta_{(\epart{n}{\ell}, \stat{n}{\ell})} \right)^{\tensprod \M}, \quad i \in \intvect{1}{\N}.
$$
The resulting procedure, summarised in \Cref{alg:paris}, allows for online processing with constant memory requirements, since it only needs to store the current particle cloud and the estimated auxiliary statistics at each iteration. Moreover, in the case where the Markov transition densities of the model can be uniformly bounded, \emph{i.e.}, there exists, for every $n \in \nset$, an upper bound $\mdhigh{n} > 0$ such that for all $(x_n, x_{n + 1}) \in \xsp{n} \times \xsp{n + 1}$, $\md{n}(x_n, x_{n + 1}) \leq \mdhigh{n}$ (a weak assumption satisfied for most models of interest), a sample $(\eparttd{n}{i, j}, \stat{n}{i, j})$ can be generated by drawing, with replacement and until acceptance, candidates $(\eparttd{n}{i, \ast}, \stattd{n}{i, \ast})$ from $\{(\epart{n}{i}, \stat{n}{i})\}_{i = 1}^\N$ according to the normalised particle weights $\{\pot{n}(\epart{n}{\ell}) / \sum_{\ell'} \pot{n}(\epart{n}{\ell'})\}_{\ell = 1}^\N$ (obtained as a by-product in the generation of $\epartmb{n + 1}$) and accepting the same with probability $\md{n}(\eparttd{n}{i, \ast}, \epart{n + 1}{i}) /  \mdhigh{n}$. As this sampling procedure bypasses completely the calculation of the normalising constant $\sum_{\ell' = 1}^\N \ud{n}(\epart{n}{\ell'}, \epart{n + 1}{i})$ of the targeted categorical distribution, it yields an overall $\mathcal{O}(\M \N)$ complexity of the algorithm as a whole; see \cite{douc:garivier:moulines:olsson:2009} for details.

Increasing $\M$ improves the accuracy of the algorithm at the cost of additional computational complexity. As shown in \cite{olsson:westerborn:2017}, there is a qualitative difference between the cases $\M=1$ and $\M \geq 2$, and it turns out that the latter is required to keep \PARIS\  numerically stable. More precisely, in the latter case, it can be shown that the \PARIS\  estimator $\occm(\statmb{n})$ satisfies, as $\N$ tends to infinity while $\M$ is held fixed, a central limit theorem (CLT) at the rate $\sqrt{N}$ and with an $n$-normalised asymptotic variance of order $\mathcal{O}(1 - 1/(\M - 1))$. As clear from this bound, using a large $\M$ is only to waste the computational work, and setting $\M$ to $2$ or $3$ typically works well in practice.

\section{The Parisian particle Gibbs (\PPG) sampler}
\label{sec:paris}
We now introduce the \emph{Parisian particle Gibbs} (\PPG) \emph{algorithm}.
For all $n \in \nsetpos$, let $\ysp{n} \eqdef \xsp{0:n} \times \rset$ and $\yfd{n} \eqdef \xfd{0:n} \tensprod \borel(\rset)$. Moreover, let $\ysp{0} \eqdef \xsp{0} \times \{ 0 \}$ and $\yfd{0} \eqdef \xfd{0} \tensprod \{ \{0\}, \emptyset \}$. An element of $\ysp{n}$ will always be denoted by $y_n = (x_{0:n|n}, b_n)$. The Parisian particle Gibbs sampler comprises, as a key ingredient, a \emph{conditional \PARIS\  step}, which updates recursively a set of $\ysp{n}$-valued random variables
$\bpart{n}{i} \eqdef (\epart{0:n|n}{i}, \stat{n}{i})$, $i \in \intvect{1}{\N}$.
Let $(\bpartmb{n})_{n \in \nset}$ denote the corresponding many-body process, each $\bpartmb{n} \eqdef \{(\epart{0:n|n}{i}, \stat{n}{i})\}_{i = 1}^\N$  taking on values in the space $\yspmb{n} \eqdef \ysp{n}^\N$, which we furnish with a $\sigma$-field $\yfdmb{n} \eqdef \yfd{n}^{\tensprod \N}$. The space $\yspmb{0}$ and the corresponding $\sigma$-field $\yfdmb{0}$ are defined accordingly.
For every $n \in \nset$, we write $\epartmb{0:n|n}$ for the collection $\{ \epart{0:n|n}{i} \}_{i = 1}^\N$ of paths in $\bpartmb{n}$, and $\epartmb{n|n}$ for the collection $\{ \epart{n|n}{i} \}_{i = 1}^\N$ of end points of the same.

In the following we let $n \in \nset$ be a fixed time horizon, and describe in detail how the \PPG\ approximates $\targ{0:n} \af{n}$ iteratively. In short, at each iteration $\ell$, the \PPG\ produces, given an input conditional path $\chunk{\zeta}{0}{n}[\ell]$, a many-body system $\bpartmb{n}[\ell+1]$ by means of a series of conditional \PARIS\  operations; after this, an updated path $\chunk{\zeta}{0}{n}[\ell+1]$, serving as input at the next iteration, is generated by picking one of the paths $\epartmb{0:n|n}[\ell+1]$ in $\bpartmb{n}[\ell+1]$ at random. At each iteration, the produced statistics $\statmb{n}$ (in $\bpartmb{n}$) provides an approximation of $\targ{0:n} \af{n}$ according to \eqref{eq:bckwd-interpretation}.

More precisely, given the path $\chunk{\zeta}{0}{n}[\ell]$, the conditional \PARIS\  operations are executed as follows. In the initial step, $\epartmb{0|0}[\ell+1]$ are drawn from $\initmb[{\zeta_0[\ell]}]$ defined in \eqref{eq:init:cond:dual} and $\bpart{0}{i}[\ell+1] \gets (\epart{0|0}{i}[\ell+1], 0)$ for all $i \in \intvect{1}{\N}$; then, recursively for $m \in \intvect{0}{n}$, assuming access to $\bpartmb{m}[\ell+1]$, we 
\begin{itemize}[noitemsep]
\item[\textbf{(1)}] generate an updated particle cloud $\epartmb{m + 1}[\ell+1] \sim \mkmb{m}[{\zeta_{m + 1}[\ell]}](\epartmb{m|m}[\ell+1], \cdot)$,
\item[\textbf{(2)}] pick at random, for each $i \in \intvect{1}{N}$, an ancestor path with associated statistics $(\tilde{\xi}_{0:m}^{i, 1}[\ell+1], \stattd{m}{i, 1}[\ell+1])$ among $\bpartmb{m}[\ell+1]$ by  drawing 
$$
(\tilde{\xi}_{0:m}^{i, 1}[\ell+1], \stattd{m}{i, 1}[\ell+1]) \sim
\sum_{s = 1}^\N \frac{\ud{m}(\epart{m|m}{s}[\ell+1], \epart{m + 1}{i}[\ell+1])}{\sum_{s' = 1}^\N \ud{m}(\epart{m|m}{s'}[\ell+1], \epart{m + 1}{i}[\ell+1])} \delta_{\bpart{m}{s}[\ell+1]}, \quad i \in \intvect{1}{N},
$$
\item[\textbf{(3)}] draw, with replacement, $\M - 1$ ancestor particles and associated statistics $\{ (\eparttd{m}{i, j}[\ell+1], \stattd{m}{i, j}[\ell+1]) \}_{j = 2}^\M$ at random from $\{(\epart{m|m}{s}[\ell+1], \stat{m}{s})[\ell+1]\}_{s = 1}^\N$ according to
\begin{multline*}
\{ (\eparttd{m}{i, j}[\ell+1], \stattd{m}{i, j}[\ell+1]) \}_{j = 2}^\M \\
\sim \left( \sum_{s = 1}^\N \frac{\ud{m}(\epart{m|m}{s}[\ell+1], \epart{m + 1}{i}[\ell+1])}{\sum_{s' = 1}^\N \ud{m}(\epart{m|m}{s'}[\ell+1], \epart{m + 1}{i}[\ell+1])} \delta_{(\epart{m|m}{s}[\ell+1], \stat{m}{s}[\ell+1])} \right)^{\tensprod (\M - 1)},
\end{multline*}
\item[\textbf{(4)}] set, for all $i \in \intvect{1}{\N}$, $\epart{0:m + 1|m + 1}{i}[\ell+1] \gets (\tilde{\xi}_{0:m}^{i, 1}[\ell+1], \epart{m + 1}{i}[\ell+1])$ and  $\bpart{m + 1}{i}[\ell+1] \gets (\epart{0:m + 1|m + 1}{i}[\ell+1], \stat{m + 1}{i}[\ell+1])$, where
\[
\stat{m + 1}{i}[\ell+1] \gets \M^{-1} \sum_{j = 1}^\M \left( \stattd{m}{i, j}[\ell+1] + \afterm{m}(\eparttd{m}{i, j}[\ell+1], \epart{m + 1}{i}[\ell+1]) \right).
\]
\end{itemize}
This conditional \PARIS\  procedure is summarised in \Cref{alg:parisian:Gibbs}.

Once the set of trajectories and associated statistics $\bpartmb{n}[\ell+1]$ is formed by means of $n$ recursive conditional {\PARIS}  updates, an updated path $\chunk{\zeta}{0}{n}[\ell+1]$ is drawn from $\occm(\epartmb{0:n|n}[\ell+1])$. A full sweep of the {\PPG} is summarised in \Cref{alg:parisian:particle:Gibbs}.

The following Markov kernels will play an instrumental role in the following. For a given path $\{ z_m \}_{m \in \nset}$, the conditional \PARIS\  update in \Cref{alg:parisian:Gibbs} defines an inhomogeneous Markov chain on the spaces $\{ (\yspmb{m}, \yfdmb{m}) \}_{m \in \nset}$ with kernels
$$
\yspmb{m} \times \yfdmb{m + 1} \ni (\ymb[m], A) \mapsto \int
\, \mkmb{m}[z_{m + 1}](\xmb[m|m], \rmd \xmb[m + 1]) \, \ck{m}(\ymb[m], \xmb[m + 1], A), \quad m \in \nset,
$$
where
\begin{align}
\lefteqn{ \ck{m} : \yspmb{m} \times \xspmb{m + 1} \times \yfdmb{m + 1} \ni
(\ymb[m], \xmb[m + 1], A)} \label{eq:def:ck} \\
&\mapsto \idotsint \indi{A} \left( \Big \{ \Big( (\tilde{x}_{0:m}^{i,1}, x_{m + 1}^i),
\frac{1}{\M} \sum_{j = 1}^\M \left( \tilde{\statl}_m^{i,j} + \afterm{m}(\tilde{x}_m^{i, j}, x_{m + 1}^i) \right) \Big) \Big\}_{i = 1}^\N \right) \nonumber \\
&\times \prod_{i = 1}^\N \left( \sum_{\ell = 1}^\N \frac{\ud{m}(x_{m|m}^\ell, x_{m + 1}^i)}{\sum_{\ell' = 1}^\N \ud{m}(x_{m|m}^{\ell'}, x_{m + 1}^i)}
\delta_{y_m^\ell}(\rmd(\tilde{x}_{0:m}^{i,1}, \tilde{\statl}_m^{i,1})) \right. \nonumber \\
&\left. \times \left( \sum_{\ell = 1}^\N \frac{\ud{m}(x_{m|m}^\ell, x_{m + 1}^i)}{\sum_{\ell' = 1}^\N \ud{m}(x_{m|m}^{\ell'}, x_{m + 1}^i)} \delta_{(x_{m|m}^\ell, \statl_m^{\ell})} \right)^{\tensprod (\M - 1)} (\rmd (\tilde{x}_m^{i,2}, \tilde{\statl}_m^{i, 2}, \ldots, \tilde{x}_{m}^{i, \M}, \tilde{\statl}_m^{i,\M})) \right) \, \nonumber.
\end{align}
In addition, we introduce the joint law
\begin{multline} \label{eq:def:ckjt}
\ckjt{n} : \xspmb{0:n} \times \yfdmb{n} \ni (\xmb[0:n], A) \\
\mapsto
\idotsint \indi{A}(\ymb[n]) \, \ck{0}(\koskimat{\N} \xmb[0], \xmb[1], \rmd \ymb[1])
\prod_{m = 1}^{n - 1} \ck{m}(\ymb[m], \xmb[m + 1], \rmd \ymb[m + 1]),
\end{multline}
where we have defined $\koskimat{\N} \eqdef  \Id_{\N} \tensprod (0, 1)^\intercal$.

The kernel $\ckjt{n}$ can be viewed as a \emph{superincumbent sampling kernel} describing the distribution of the output $\bpartmb{n}$ generated by a sequence of \PARIS\  iterates when the many-body process $\{ \epartmb{m} \}_{m = 0}^n$ associated with the underlying SMC algorithm is given. This allows us to describe alternatively the \PPG\ as follows: given $\zeta_{0:n}[\ell]$, draw $\epartmb{0:n}[\ell+1] \sim \mbjt{n}(\zeta_{0:n}[\ell], \cdot)$; then, draw $\bpartmb{n}[\ell+1] \sim \ckjt{n}(\epartmb{0:n}[\ell+1], \cdot)$ and pick a trajectory $\zeta_{0:n}[\ell+1]$ from $\epartmb{0:n|n}[\ell+1]$ at random. The following proposition, which will be instrumental in the coming developments, establishes that the conditional distribution of $\zeta_{0:n}[\ell+1]$ given $\epartmb{0:n}[\ell+1]$ coincides, as expected, with the particle-induced backward dynamics $\bdpart{n}$.
\begin{proposition} \label{prop:backward:marginal}
For all $n \in \nsetpos$, $\N \in \nsetpos$, $\xmb[0:n] \in \xspmb{0:n}$, and $h \in \bmf(\xfd{0:n})$,
$$
\int \ckjt{n}(\xmb[0:n], \rmd \ymb[n]) \, \occm(\xmb[0:n|n]) h = \bdpart{n}h (\xmb[0:n]).
$$
\end{proposition}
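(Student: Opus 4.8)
The plan is to argue by induction on $n$, after recording three structural facts about $\ckjt{n}$. \textbf{(i)} Inspecting \eqref{eq:def:ck}, the path component of the $i$-th output of $\ck{m}$ is $(\tilde{x}_{0:m}^{i,1}, x_{m+1}^i)$, depending only on the single backward draw $\tilde{x}_{0:m}^{i,1}$ and on $x_{m+1}^i$, while the statistics and the extra draws $j \geq 2$ feed only into the updated statistic $\M^{-1}\sum_{j}(\cdots)$. Hence, for any $h \in \bmf(\xfd{0:n})$ depending on the path alone, the statistics may be integrated out, and the law of the path components under $\ck{m}(\ymb[m], \xmb[m + 1], \cdot)$ is the product over $i$ of $\sum_{\ell} w_\ell^i \, \delta_{x_{0:m|m}^\ell}$, with $w_\ell^i = \ud{m}(x_{m|m}^\ell, x_{m + 1}^i)/\sum_{\ell'} \ud{m}(x_{m|m}^{\ell'}, x_{m + 1}^i)$. \textbf{(ii)} By construction (step \textbf{(4)} of the conditional \PARIS, together with the initialisation $\koskimat{\N}\xmb[0]$), the endpoint of the $i$-th path is deterministically the forward particle $x_m^i$; consequently $w_\ell^i = \ud{m}(x_m^\ell, x_{m + 1}^i)/\sum_{\ell'} \ud{m}(x_m^{\ell'}, x_{m + 1}^i)$ and these coincide with the weights of $\bk{m,\occm(\xmb[m])}$ in \eqref{eq:backward:sampling:operation}. \textbf{(iii)} The telescoping definition \eqref{eq:def:ckjt} gives the composition identity $\ckjt{n + 1}(\xmb[0:n + 1], \cdot) = \int \ckjt{n}(\xmb[0:n], \rmd \ymb[n]) \, \ck{n}(\ymb[n], \xmb[n + 1], \cdot)$.

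For the base case $n = 1$ one has $\ckjt{1}(\xmb[0:1], \cdot) = \ck{0}(\koskimat{\N}\xmb[0], \xmb[1], \cdot)$, so integrating $\occm(\xmb[0:1|1]) h$ against it gives, by \textbf{(i)}--\textbf{(ii)}, $\N^{-1}\sum_{i}\sum_{\ell} w_\ell^i \, h(x_0^\ell, x_1^i)$, which is exactly $\bdpart{1} h(\xmb[0:1])$ upon expanding \eqref{eq:def:bdpart} with $\occm(\xmb[1])(\rmd x_1) = \N^{-1}\sum_i \delta_{x_1^i}$. For the inductive step, assume the claim at rank $n$. By \textbf{(iii)} and \textbf{(i)}, integrating $\occm(\cdot) h$ against $\ck{n}(\ymb[n], \xmb[n + 1], \cdot)$ yields, for fixed path components $x_{0:n|n}^\ell$,
\[
\frac{1}{\N} \sum_{i=1}^\N \sum_{\ell=1}^\N \frac{\ud{n}(x_n^\ell, x_{n + 1}^i)}{\sum_{\ell'=1}^\N \ud{n}(x_n^{\ell'}, x_{n + 1}^i)} \, h(x_{0:n|n}^\ell, x_{n + 1}^i) = \occm(\xmb[0:n|n]) \, \psi ,
\]
where, the endpoints $x_n^{\ell'}$ and positions $x_{n + 1}^i$ being fixed given $\xmb[0:n + 1]$, the function $\psi(x_{0:n}) \eqdef \sum_{i=1}^\N \ud{n}(x_n, x_{n + 1}^i)\big(\sum_{\ell'=1}^\N \ud{n}(x_n^{\ell'}, x_{n + 1}^i)\big)^{-1} h(x_{0:n}, x_{n + 1}^i)$ lies in $\bmf(\xfd{0:n})$. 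Applying the inductive hypothesis to $\psi$ then produces $\bdpart{n}\psi(\xmb[0:n])$ after integration against $\ckjt{n}$.

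It remains to check $\bdpart{n}\psi(\xmb[0:n]) = \bdpart{n + 1} h(\xmb[0:n + 1])$. Expanding $\bdpart{n}\psi$ through \eqref{eq:def:bdpart}, carrying out the outer integral $\occm(\xmb[n])(\rmd x_n) = \N^{-1}\sum_s \delta_{x_n^s}$ and substituting $\psi$, gives a double sum over $(s,i)$ weighted by $\ud{n}(x_n^s, x_{n + 1}^i)/\sum_{\ell'}\ud{n}(x_n^{\ell'}, x_{n + 1}^i)$, times $\prod_{m=0}^{n - 1}\bk{m,\occm(\xmb[m])}$ evaluated at $x_n = x_n^s$. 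Expanding instead $\bdpart{n + 1} h$, the outer integral $\occm(\xmb[n + 1])(\rmd x_{n + 1}) = \N^{-1}\sum_i \delta_{x_{n + 1}^i}$ followed by the innermost backward kernel $\bk{n,\occm(\xmb[n])}(x_{n + 1}^i, \rmd x_n) = \sum_s \ud{n}(x_n^s, x_{n + 1}^i)/\sum_{\ell'}\ud{n}(x_n^{\ell'}, x_{n + 1}^i)\, \delta_{x_n^s}$ yields the identical double sum with identical remaining factors, using \textbf{(ii)}. This closes the induction.

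The work is essentially bookkeeping, and the one idea is the recognition, in the inductive step, that pushing the single backward draw of $\ck{n}$ through the occupation measure converts it into $\occm(\xmb[0:n|n])\psi$ for the explicit $\psi$ above, which is precisely what makes the inductive hypothesis applicable; thereafter $\bdpart{n}\psi$ is seen to reproduce exactly one further stage of the backward recursion defining $\bdpart{n + 1}$. The only genuine (but routine) point to verify is $\psi \in \bmf(\xfd{0:n})$, which follows from boundedness of $h$ and of $\ud{n}(\cdot, x_{n + 1}^i)$ together with positivity of the normalising constants $\sum_{\ell'}\ud{n}(x_n^{\ell'}, x_{n + 1}^i)$.
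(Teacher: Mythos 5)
Your proof is correct. The underlying algebra is the same single-step identity that the paper uses — resolving one layer of backward sampling, i.e.\ one delta-mixture integral $\sum_{\ell} w_\ell^i\,\delta_{x_{0:m|m}^\ell}$, into an explicit weighted sum that matches one application of $\bk{m,\occm(\xmb[m])}$ — but you package it differently. The paper fixes the horizon $n$, writes out the full explicit expression $H_n$ for $\int \ckjt{n}(\xmb[0:n],\rmd\ymb[n])\,\occm(\xmb[0:n|n])h$ with the statistics already integrated out, and introduces an interpolating family $H_{k,n}$, $k\in\intvect{0}{n}$, proving $H_{k,n}\equiv H_{k-1,n}$ by index manipulation until $H_{0,n}\equiv\bdpart{n}h$. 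You instead induct on the horizon itself, exploiting the composition structure $\ckjt{n+1}=\ckjt{n}\,\ck{n}$ together with the key reduction $\int\ck{n}(\ymb[n],\xmb[n+1],\rmd\ymb[n+1])\,\occm(\xmb[0:n+1|n+1])h=\occm(\xmb[0:n|n])\psi$ for an explicit modified test function $\psi\in\bmf(\xfd{0:n})$, and then verify $\bdpart{n}\psi=\bdpart{n+1}h$ from the recursive structure of \eqref{eq:def:bdpart}. Your organisation avoids the heavy multi-index bookkeeping of the $H_{k,n}$ family at the cost of having to carry the quantifier ``for all $h$'' through the induction (which you do, correctly, since $\psi$ is a new test function at each step); the observation that path endpoints coincide deterministically with the forward particles, which both arguments need, is made explicit in your step \textbf{(ii)} whereas the paper uses it silently when writing the weights in terms of $x_m^{j_m}$. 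No gaps.
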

Finally, we define the Markov kernel induced by the \PPG\ as well as the extended probability distribution targeted by the same. For this purpose, we introduce the extended measurable space $(\esp{n}, \efd{n})$ with
$$
\esp{n} \eqdef \yspmb{n} \times \xsp{0:n}, \quad \efd{n} \eqdef \yfdmb{n} \tensprod \xfd{0:n}.
$$
The \PPG\ described in \Cref{alg:parisian:particle:Gibbs} defines a Markov chain on $(\esp{n},  \efd{n})$ with Markov transition kernel
\begin{multline*}
\parisgibbs{n} :  \esp{n} \times \efd{n} \ni (\ymb[n], z_{0:n}, A) \\
\mapsto \iiint \1_A(\ymbtd[n], \tilde{z}_{0:n}) \, \mbjt{n}(z_{0:n}, \rmd \xmbtd[0:n]) \, \ckjt{n}(\xmbtd[0:n], \rmd \ymbtd[n]) \, \occm(\xmbtd[0:n|n])(\rmd \tilde{z}_{0:n}).
\end{multline*}
Note that the values of $\parisgibbs{n}$ defined above do not depend on $\ymb[n]$, but only on $(z_{0:n}, A)$. For any given initial distribution $\upxi \in \probmeas(\xfd{0:n})$, let $\canlaw{\upxi}$ be the distribution of the canonical Markov chain induced by the kernel $\parisgibbs{n}$ and the initial distribution $\upxi$. In the special case where $\upxi = \delta_{z_{0:n}}$ for some given path $z_{0:n} \in \xsp{0:n}$, we use the short-hand notation $\canlaw{\delta_{z_{0:n}}} = \canlaw{z_{0:n}}$.
In addition, denote by
$$
\gibbs{n}  :  \xsp{0:n} \times \xfd{0:n} \ni (z_{0:n}, A)
\mapsto \iiint \1_A(\tilde{z}_{0:n}) \, \mbjt{n}(z_{0:n}, \rmd \xmbtd[0:n]) \, \ckjt{n}(\xmbtd[0:n], \rmd \ymbtd[n]) \, \occm(\xmbtd[0:n|n])(\rmd \tilde{z}_{0:n})
$$
the path-marginalised version of $\parisgibbs{n}$. By \Cref{prop:backward:marginal} it holds that $\gibbs{n} = \mbjt{n} \bdpart{n}$, which shows that $\gibbs{n}$ coincides with the Markov transition kernel of the backward-sampling-based particle Gibbs sampler discussed in \Cref{sec:dual:and:Gibbs}.

Finally, in order prepare for the statement of our theoretical results on the \PPG\ we need to introduce the following Feynman--Kac path model \emph{with a frozen path}. More precisely, for a given path $z_{0:n} \in \xpsp{0}{n}$, define, for every $m \in \intvect{0}{n - 1}$, the unnormalised kernel
$$
\uk[z_{m + 1}]{m} : \xsp{m} \times \xfd{m + 1} \ni (x_m, A) \mapsto \left( 1 - \frac{1}{\N} \right) \uk{m}(x_m, A) + \frac{1}{\N} \pot{m}(x_m) \, \delta_{z_{m + 1}}(A)
$$
and the initial distribution $\targ[z_0]{0} : \xfd{0} \ni A \mapsto (1 - 1 / \N) \targ{0}(A) + \delta_{z_0}(A) / \N$.
Given these quantities, define, for $m \in \intvect{0}{n}$,
$\untarg[z_{0:m}]{m} \eqdef \targ[z_0]{0} \uk[z_1]{0} \cdots \uk[z_m]{m - 1}$
along with the normalised counterpart
$\targ[z_{0:m}]{m} \eqdef \untarg[z_{0:m}]{m}/\untarg[z_{0:m}]{m} \indi{\xpsp{0}{m}}$.
Finally, we introduce, for $m \in \intvect{0}{n}$, the kernels
$$
\rk[z_{0:m - 1}]{m} : \xsp{m} \times \xfd{0:m - 1} \ni (x_m, A) \mapsto \idotsint \1_A(x_{0:n - 1}) \prod_{m = 0}^{n - 1} \bk{m, \targ[z_{0:m}]{m}}(x_{m + 1}, \rmd x_m),
$$
as well as the path model $\targ[z_{0:m}]{0:m} \eqdef  \rk[z_{0:m - 1}]{m} \tensprod \targ[z_{0:m}]{m}$.


\section{Main results}
\label{sec:main-results}
\subsection{Theoretical results}
\label{sec:thm:theoretical:results}

This section will be devoted to establishing our main result, namely the exponentially contracting bias bound stated in \Cref{thm:bias:bound} below. This result will be proved under the following strong mixing assumptions, which are standard in the literature (see \cite{delmoral:2004,douc:moulines:2008,delmoral:2013,del2016particle}).
\begin{assumption}[strong mixing] \label{assump:strong_mixing}
For every $n \in \nset$ there exist $\potlow{n}$, $\pothigh{n}$, $\mdlow{n}$, and $\mdhigh{n}$ in $\rset_+^\ast$ such that
 \begin{enumerate}[label=(\roman*),nosep]
 \item  $\potlow{n} \leq \pot{n}(x_n) \leq \pothigh{n}$ for every $x_n \in \xsp{n}$,
 \item  $\mdlow{n} \leq \md{n}(x_n, x_{n + 1}) \leq \mdhigh{n}$ for every $(x_n, x_{n + 1}) \in \xsp{n:n+1}$.
 \end{enumerate}
\end{assumption}

Under \Cref{assump:strong_mixing}, define, for every $n \in \nset$,
\begin{equation} \label{eq:def:rho:n}
\rho_n \eqdef \max_{m \in \intvect{0}{n}} \frac{\pothigh{m} \mdhigh{m}}{\potlow{m} \mdlow{m}}
\end{equation}
and, for every $\N \in \nsetpos$ and $n \in \nset$ such that $\N > \N_n \eqdef 1+ 5 \rho^2 n / 2$,
\begin{equation} \label{eq:def:kappa}
\kappa_{N, n} \eqdef 1 - \frac{1 - N_n / N}{1 + 4 n (1+2 \rho_n^2) / N}.
\end{equation}
Note that $\kappa_{N, n} \in (0, 1)$ for all $\N$ and $n$ as above. 
\begin{theorem} \label{thm:bias:bound}
Assume \Cref{assump:strong_mixing}. Then for every $n \in \nset$ there exist $\cstparisbias$, $\cstparismse$, and $\cstpariscov$ in $\rset_+^\ast$ such that for every $\M \in \nsetpos$, $\upxi \in \probmeas(\xpfd{0}{n})$, $\ell \in \nsetpos$, $s \in \nsetpos$, and $\N \in \nsetpos$ such that $\N > \N_n$,
\begin{align}
\label{eq:bias:bound}
\left|\E_\upxi \left[ \occm(\statmb{n}[\ell])(\operatorname{id}) \right] - \targ{0:n} \af{n} \right| &\leq
\cstparisbias \left(\sum_{m=0}^{n-1} \| \afterm{m} \|_\infty \right) \N^{-1} \kappa_{N, n}^\ell, \\
\label{eq:mse:bound}
\E_\upxi \left[ \left( \occm(\statmb{n}[\ell])(\operatorname{id})  - \targ{0:n} \af{n} \right)^2 \right] &\leq
\cstparismse \left(\sum_{m=0}^{n - 1} \| \afterm{m} \|_\infty \right)^2 \N^{-1}, \\
\nonumber
\lefteqn{
\left|\E_\upxi \left[ \left( \occm(\statmb{n}[\ell])(\operatorname{id})  - \targ{0:n} \af{n} \right)  \left( \occm(\statmb{n}[\ell+s])(\operatorname{id})  - \targ{0:n} \af{n} \right) \right]\right|} \hspace{41mm} \\
\label{eq:cov:bound} 
&\leq \cstpariscov  \left(\sum_{m=0}^{n - 1} \| \afterm{m} \|_\infty \right)^2 \N^{-3/2} \kappa_{N, n}^{s}.
\end{align}
\end{theorem}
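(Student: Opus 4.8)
The whole theorem can be organised around a single representation of the expected estimator. Writing the \PPG\ iterates as $\{(\bpartmb{n}[\ell], \zeta_{0:n}[\ell])\}_{\ell}$, recall from \Cref{sec:paris} that the path coordinate alone is a Markov chain with transition kernel $\gibbs{n} = \mbjt{n}\bdpart{n}$, which admits $\targ{0:n}$ as its invariant law. The plan starts from the observation that, conditionally on its input path, the statistic produced by one conditional \PARIS\ sweep is unbiased for $\gibbs{n}\af{n}$: since the $\M$-fold Monte Carlo averaging in steps \textbf{(3)}--\textbf{(4)} is conditionally unbiased, the statistic $\occm(\statmb{n}[\ell])(\operatorname{id})$ has, given the forward cloud $\epartmb{0:n}$, the same conditional mean as the genealogical estimator $\occm(\epartmb{0:n|n})\af{n}$, namely $\bdpart{n}\af{n}(\epartmb{0:n})$ by \Cref{prop:backward:marginal}. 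Integrating over $\epartmb{0:n}\sim\mbjt{n}(\zeta_{0:n}[\ell-1],\cdot)$ gives
\[
\E_\upxi[\, \occm(\statmb{n}[\ell])(\operatorname{id}) \mid \zeta_{0:n}[\ell-1]\,] = \gibbs{n}\af{n}(\zeta_{0:n}[\ell-1]),
\]
so that $\E_\upxi[\occm(\statmb{n}[\ell])(\operatorname{id})] = \upxi\gibbs{n}^{\ell}\af{n}$ and, by invariance, the bias is exactly $(\upxi-\targ{0:n})\gibbs{n}^{\ell}\af{n}$.

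Two quantitative inputs then drive everything. First, under \Cref{assump:strong_mixing} and for $\N > \N_n$ the kernel $\gibbs{n}$ is uniformly ergodic with Dobrushin contraction coefficient at most $\kappa_{\N,n}$; establishing this explicit rate through a one-step minorisation/coupling of $\mbjt{n}\bdpart{n}$ that quantifies how quickly the frozen path is forgotten --- and which is tuned to reproduce \eqref{eq:def:kappa} --- is the principal obstacle. Second, the map $z_{0:n}\mapsto\gibbs{n}\af{n}(z_{0:n})$ has oscillation $O(\N^{-1}\sum_{m}\|\afterm{m}\|_\infty)$, because in the conditional dual process the frozen path enters only through the rank-one, $\N^{-1}$-weighted perturbation carried by the frozen-path model $\targ[z_{0:n}]{0:n}$ introduced at the end of \Cref{sec:paris}; making this precise, uniformly in $z_{0:n}$, is the second technical crux and is where the gain of a full factor $\N^{-1}$ comes from. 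Setting $\psi\eqdef\gibbs{n}\af{n}-\targ{0:n}\af{n}$ (so $\targ{0:n}\psi=0$ and $\operatorname{osc}(\psi) = O(\N^{-1}\sum_m\|\afterm{m}\|_\infty)$), one writes the bias as $(\upxi-\targ{0:n})\gibbs{n}^{\ell-1}\psi$ and applies the contraction $\ell-1$ times, giving \eqref{eq:bias:bound} with the prefactor $\cstparisbias$ absorbing constants.

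For the mean-squared error \eqref{eq:mse:bound} I split $\occm(\statmb{n}[\ell])(\operatorname{id}) - \targ{0:n}\af{n} = F_\ell + \psi(\zeta_{0:n}[\ell-1])$, where $F_\ell\eqdef\occm(\statmb{n}[\ell])(\operatorname{id}) - \gibbs{n}\af{n}(\zeta_{0:n}[\ell-1])$ is centred given $\zeta_{0:n}[\ell-1]$. The cross term then vanishes in expectation, the term $\E_\upxi[\psi(\zeta_{0:n}[\ell-1])^2]$ is $O(\N^{-2})$ by the oscillation bound, and the dominant contribution is the averaged conditional variance $\E_\upxi[\operatorname{Var}(\occm(\statmb{n}[\ell])(\operatorname{id}) \mid \zeta_{0:n}[\ell-1])]$. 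Bounding the latter requires a non-asymptotic $O(\N^{-1}(\sum_m\|\afterm{m}\|_\infty)^2)$ variance estimate for the conditional \PARIS\ that is uniform over the frozen path; this is obtained by transporting the stability analysis of \cite{olsson:westerborn:2017} to the frozen-path Feynman--Kac model, where \Cref{assump:strong_mixing} keeps the per-step variance contributions summable in $n$.

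Finally, the covariance \eqref{eq:cov:bound} uses the same decomposition at lags $\ell$ and $\ell+s$ and expands $\E_\upxi[(F_\ell+\psi(\zeta_{0:n}[\ell-1]))(F_{\ell+s}+\psi(\zeta_{0:n}[\ell+s-1]))]$ into four terms. By the Markov property both terms containing the later fluctuation $F_{\ell+s}$ vanish, since $F_{\ell+s}$ is centred given $\zeta_{0:n}[\ell+s-1]$ while the paired factor is measurable with respect to the earlier history; the pure-$\psi$ term is $O(\N^{-2}\kappa_{\N,n}^{s})$ after one conditioning, hence negligible. The surviving term is $\E_\upxi[F_\ell\,\gibbs{n}^{s-1}\psi(\zeta_{0:n}[\ell])]$, obtained by conditioning on the output path $\zeta_{0:n}[\ell]$; bounding it by $\|\gibbs{n}^{s-1}\psi\|_\infty\,\E_\upxi|F_\ell|$ and combining $\|\gibbs{n}^{s-1}\psi\|_\infty \leq \kappa_{\N,n}^{s-1}\operatorname{osc}(\psi) = O(\kappa_{\N,n}^{s-1}\N^{-1})$ with the root-conditional-variance bound $\E_\upxi|F_\ell| = O(\N^{-1/2})$ yields the rate $\N^{-3/2}\kappa_{\N,n}^{s}$ of \eqref{eq:cov:bound}. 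The two estimates flagged above --- the $O(\N^{-1})$ oscillation of $\gibbs{n}\af{n}$ via the frozen-path model and the explicit contraction rate $\kappa_{\N,n}$ --- are thus the crux of the whole argument, and both reduce to a quantitative perturbation analysis of the conditional dual process.
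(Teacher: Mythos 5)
Your proposal is correct and follows essentially the same route as the paper: the bias is obtained by pairing the uniform geometric ergodicity of $\gibbs{n}=\mbjt{n}\bdpart{n}$ with rate $\mixrate{n}{\N}$ (\Cref{thm:geometric:ergodicity:particle:Gibbs}, quoted from \cite{del2018sharp}) with an $O(\N^{-1})$ oscillation bound on the conditional-\PARIS\ mean function derived through the frozen-path Feynman--Kac model (\Cref{prop:bias:cond:paris} together with \Cref{prop:error:conditional:model}), the MSE from a conditional variance bound uniform in the frozen path (\Cref{cor:Lp:cond:paris}), and the covariance from Markov conditioning plus a Cauchy--Schwarz step, exactly as in \Cref{sec:thm:theoretical:results}. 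The one presentational difference is that you anchor the argument on the exact pointwise identity $\E[\occm(\statmb{n}[\ell])(\operatorname{id})\mid\zeta_{0:n}[\ell-1]]=\gibbs{n}\af{n}(\zeta_{0:n}[\ell-1])$ --- which is true, but note that \Cref{prop:backward:marginal} as stated concerns only the genealogical paths and not the statistics $\statl_n^i$, so this identity requires its own short induction (the one-step version being the computation in \eqref{eq:cond:exp:beta}) --- whereas the paper sidesteps it by using only the stationary unbiasedness of \Cref{thm:unbiasedness} together with sup-norm control of the conditional mean minus $\targ{0:n}\af{n}$.
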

\noindent
The constants $\cstparisbias$, $\cstparismse$, and $\cstpariscov$ are explicitly given in the proof.
Since the focus of this paper is on the dependence on $\N$ and the index $\ell$, we have made no attempt to optimise the dependence of these constants on $n$ in our proofs; still, we believe that it is possible to prove,
under the stated assumptions, that this dependence is linear.
The proof of the bound in \Cref{thm:bias:bound} is based on four key ingredients. The first is the following unbiasedness property of the {\PARIS} under the many-body Feynman--Kac path model.
\begin{theorem} \label{thm:unbiasedness}
For every $n \in \nset$, $\N \in \nsetpos$, and $\ell \in \nsetpos$,
$$
\E_{\targ{0:n}} \left[ \occm(\statmb{n}[\ell])(\operatorname{id}) \right] = \int \targ{0:n} \mbjt{n} \ckjt{n}(\rmd \statlmb_n) \, \occm(\statlmb[n])(\operatorname{id}) = \int \targmb{0:n}\ckjt{n}(\rmd \statlmb_n) \, \occm(\statlmb[n])(\operatorname{id})  = \targ{0:n} h_n.
$$
\end{theorem}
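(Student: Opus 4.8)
The plan is to verify the displayed chain of three equalities from left to right, using a different tool for each equality sign: the invariance of the path-marginalised kernel $\gibbs{n}$, the duality formula \eqref{eq:duality:normalised}, and a conditional unbiasedness property of the {\PARIS} statistics given the many-body trajectory. Only the third step requires genuine work; the first two are bookkeeping with the kernels already introduced.

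For the first equality, I would exploit that the {\PPG} chain $\canlaw{\targ{0:n}}$ propagates its path-component marginal through $\gibbs{n} = \mbjt{n}\bdpart{n}$ (established just before the theorem via \Cref{prop:backward:marginal}). Marginalising the two sides of \eqref{eq:duality:normalised} over, respectively, the many-body factor and the path factor yields $\targmb{0:n}\bdpart{n} = \targ{0:n}$ and $\targ{0:n}\mbjt{n} = \targmb{0:n}$, and combining these gives $\targ{0:n}\gibbs{n} = \targ{0:n}\mbjt{n}\bdpart{n} = \targmb{0:n}\bdpart{n} = \targ{0:n}$, so that $\targ{0:n}$ is invariant for $\gibbs{n}$. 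Consequently the conditioning path $\zeta_{0:n}[\ell-1]$ feeding the $\ell$-th sweep has law $\targ{0:n}$ for every $\ell \in \nsetpos$; since that sweep first draws $\epartmb{0:n}[\ell] \sim \mbjt{n}(\zeta_{0:n}[\ell-1], \cdot)$ and then $\statmb{n}[\ell]$ through $\ckjt{n}$, the law of $\statmb{n}[\ell]$ is the statistics-marginal of $\targ{0:n}\mbjt{n}\ckjt{n}$, which is precisely the first equality. The second equality is then immediate from $\targ{0:n}\mbjt{n} = \targmb{0:n}$.

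The crux is the third equality, for which I would first establish the conditional unbiasedness of the {\PARIS} statistics: for every $\xmb[0:n] \in \xspmb{0:n}$,
\begin{equation} \label{eq:stat:unbiased:plan}
\int \ckjt{n}(\xmb[0:n], \rmd \statlmb_n) \, \occm(\statlmb[n])(\operatorname{id}) = \bdpart{n}\af{n}(\xmb[0:n]).
\end{equation}
This is the statistics analogue of \Cref{prop:backward:marginal}, which concerns the path functional $\occm(\xmb[0:n|n])\af{n}$; the two coincide when $\M = 1$, but for general $\M$ a separate induction over $m \in \intvect{0}{n}$ is needed, since the statistics aggregate $\M$ backward draws while the stored genealogical path retains only the first one. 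The base case is trivial because $\stat{0}{i} = 0$. In the inductive step I would condition on the many-body trajectory and on the time-$m$ statistics, take the expectation over the $\M$ fresh backward draws in \eqref{eq:paris-update}---these being conditionally i.i.d.\ from the backward categorical law, their average reproduces the exact {FFBSm} update \eqref{eq:FFBSm:forward:update}---and then insert the inductive hypothesis. Averaging the resulting per-particle identity over $i$ and invoking the occupation-measure form of the backward decomposition (\Cref{prop:backward:decomposition}) identifies the right-hand side as $\bdpart{n}\af{n}(\xmb[0:n])$.

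Finally I would integrate \eqref{eq:stat:unbiased:plan} against $\targmb{0:n}$ and use $\targmb{0:n}\bdpart{n} = \targ{0:n}$ to get $\int \targmb{0:n}\ckjt{n}(\rmd \statlmb_n) \, \occm(\statlmb[n])(\operatorname{id}) = \targmb{0:n}\bdpart{n}\af{n} = \targ{0:n}\af{n}$, which is exactly the third equality and the desired conclusion. The main obstacle is the induction behind \eqref{eq:stat:unbiased:plan}: one must track carefully the conditional independence structure of the backward draws within $\ckjt{n}$ and verify that the symmetric treatment of the $\M$ draws in the statistic update leaves unbiasedness intact, even though only the $j = 1$ draw additionally serves to extend the genealogical path.
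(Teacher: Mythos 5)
Your proposal is correct, and for the two outer equalities it coincides with the paper's argument (invariance of $\targ{0:n}$ under $\gibbs{n}$ for the first, the duality identity \eqref{eq:duality:normalised} for the second). For the key third equality, however, you take a genuinely different route. The paper works under the \emph{unnormalised} many-body measure: it shows (\Cref{lem:unbiasedness-2}) that integration against $\untargmb{0:n}\ckjt{n}$ equals a potential-weighted expectation under the plain particle-filter law, and then runs an induction over $n$ on the affine family of functionals $\frac{1}{\N}\sum_i\{b_n^i f_n(x_{n|n}^i)+\tilde f_n(x_{n|n}^i)\}$ (\Cref{prop:unbiasedness:affine}), closed by the unnormalised Feynman--Kac recursion of \Cref{lem:key:identity:untarg:affine}; the affine pair $(f_n,\tilde f_n)$ is needed precisely to make that induction self-contained. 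You instead isolate the stronger \emph{pointwise} conditional identity $\int \ckjt{n}(\xmb[0:n],\rmd\statlmb_n)\,\occm(\statlmb_n)(\operatorname{id})=\bdpart{n}\af{n}(\xmb[0:n])$, proved by an induction over $m$ in which the conditional expectation of each statistic update over the $\M$ conditionally i.i.d.\ backward draws reproduces the exact FFBSm recursion \eqref{eq:FFBSm:forward:update} (this is exactly the computation the paper performs in \eqref{eq:cond:exp:beta}, but you decouple it from the potentials entirely), and you then outsource all measure-theoretic content to the marginal consequence $\targmb{0:n}\bdpart{n}=\targ{0:n}$ of \Cref{thm:duality}. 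You are right that this pointwise identity is the statistics analogue of \Cref{prop:backward:marginal} and requires its own induction for $\M\geq 2$. Your route is more modular and arguably cleaner, and it yields a stronger intermediate statement (conditional rather than average unbiasedness); the paper's route buys the affine version of \Cref{prop:unbiasedness:affine}, which is reused verbatim in the proof of the concentration bound (\Cref{thm:exp:conc:cond:paris}) for the frozen-path model. One small attribution slip: the last step of your induction needs only the particle analogue of the recursion \eqref{eq:backward:forward:recursion} with $\occm(\xmb[m])$ in place of $\targ{m}$ together with the definition \eqref{eq:def:bdpart}, not \Cref{prop:backward:decomposition} itself, which concerns the exact model; this is cosmetic and does not affect correctness.
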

The proof of \Cref{thm:unbiasedness} is found in \Cref{sec:proof:unbiasedness}. The second is the uniform geometric ergodicity of the particle Gibbs with backward sampling established in \cite{del2018sharp}.
\begin{theorem} \label{thm:geometric:ergodicity:particle:Gibbs}
Assume \Cref{assump:strong_mixing}. Then for every $n \in \nset$, $(\mu, \nu) \in \probmeas(\xfd{0:n})^2$, $\ell \in \nsetpos$, and $\N \in \nsetpos$ such that $N > 1 + 5 \rho_n^2 n / 2$, $\| \mu \gibbs{n}^\ell - \nu \gibbs{n}^\ell \|_{\mathrm{TV}} \leq \mixrate{n}{\N}^\ell$, where $\mixrate{n}{\N}$ is defined in \eqref{eq:def:kappa}.
\end{theorem}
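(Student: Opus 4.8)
The plan is to derive the geometric total-variation contraction from a single one-step Doeblin-type minorisation of the path-marginalised kernel $\gibbs{n} = \mbjt{n} \bdpart{n}$, following the coupling strategy of \cite{del2018sharp}. Concretely, I would show that there is a probability measure $\varphi$ on $(\xsp{0:n}, \xfd{0:n})$, independent of the conditioning path, such that
\[
\gibbs{n}(z_{0:n}, \cdot) \geq (1 - \mixrate{n}{\N}) \, \varphi(\cdot), \qquad z_{0:n} \in \xsp{0:n},
\]
with $\mixrate{n}{\N}$ as in \eqref{eq:def:kappa}. Such a minorisation bounds the Dobrushin coefficient of $\gibbs{n}$ by $\mixrate{n}{\N}$, so that the standard contraction inequality gives
\[
\| \mu \gibbs{n} - \nu \gibbs{n} \|_{\mathrm{TV}} \leq \mixrate{n}{\N} \, \| \mu - \nu \|_{\mathrm{TV}},
\]
and iterating $\ell$ times together with $\| \mu - \nu \|_{\mathrm{TV}} \leq 1$ yields the claim. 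It therefore suffices to produce the minorisation with the sharp constant $1 - \mixrate{n}{\N}$, which is strictly positive precisely when $\N > \N_n = 1 + 5 \rho_n^2 n / 2$.

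To obtain the minorisation I would couple two instances of the kernel run from arbitrary conditioning paths $z_{0:n}$ and $z'_{0:n}$. Recall that $\mbjt{n}(z_{0:n}, \cdot)$ generates a conditional dual process by inserting the frozen path $z$ at uniformly chosen positions across the generations while propagating the remaining slots through $\pd{m}(\occm(\cdot))$, after which $\bdpart{n}$ traces a fresh trajectory $\tilde{z}_{0:n}$ back through the resulting genealogy. Driving both runs with a common source of randomness, the target is to lower-bound, uniformly in $z$ and $z'$, the probability of the coalescence event that the backward-sampled trajectory avoids the frozen-path slot at every generation; on this event the two output paths can be made to agree and their common conditional law defines the measure $\varphi$, independent of the conditioning path.

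The engine controlling this coalescence probability is the frozen-path Feynman--Kac model introduced at the end of \Cref{sec:paris}. Under \Cref{assump:strong_mixing} the selection densities $\ud{m}(x_m, x_{m+1}) = \pot{m}(x_m) \md{m}(x_m, x_{m+1})$ are bounded above and below uniformly, with ratios controlled by $\rho_n$; consequently the normalised backward-sampling probabilities at each generation are comparable up to factors of $\rho_n$, and the frozen-path normalising constants $\untarg[z_{0:m}]{m} \indi{\xpsp{0}{m}}$ can be sandwiched between deterministic, $z$-independent bounds. Tracking, across the $n$ backward steps, the probability of selecting a non-frozen ancestor — each step contributing an insertion probability of order $1/\N$ and a weight-ratio factor of order $\rho_n^2$ — accumulates into exactly the threshold $\N_n = 1 + 5 \rho_n^2 n / 2$ and the denominator $1 + 4 n (1 + 2\rho_n^2)/\N$ that appear in $\mixrate{n}{\N}$.

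The main obstacle is the sharp, uniform-in-$z$ control of the coalescence probability through the backward recursion. Because the frozen path contaminates the empirical measures driving the propagation at \emph{every} generation, a naive union bound over the $n$ steps would both lose the correct linear $n$-dependence and fail to produce the clean multiplicative rate $\mixrate{n}{\N}^\ell$. Obtaining the sharp form requires the refined coupling of \cite{del2018sharp}, in which the per-generation selection events are analysed jointly rather than bounded separately, exploiting the two-sided bounds on the frozen-path normalising constants to keep the accumulated error linear in $n$ and to isolate a conditional law $\varphi$ that is genuinely independent of the conditioning path.
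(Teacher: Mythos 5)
The paper offers no proof of this theorem: it is imported directly from \cite{del2018sharp} (together with the identification $\gibbs{n} = \mbjt{n}\bdpart{n}$ from \Cref{prop:backward:marginal}), so in substance your attempt — which also defers the only nontrivial step, the uniform-in-$z_{0:n}$ minorisation with the exact constant $1-\mixrate{n}{\N}$, to that same reference — takes the same route as the paper. One caveat on the mechanism you do sketch: the event that the backward-sampled trajectory avoids the frozen slot at every generation does not by itself produce a conditioning-path-independent law $\varphi$, since $z_{0:n}$ also contaminates the forward propagation of all non-frozen particles through $\pd{m}(\occm(\cdot))$; you flag this contamination later, but as written the construction of $\varphi$ rests on it, so the sketch would need the full coupling of \cite{del2018sharp} rather than the coalescence event alone.
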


As a third ingredient, we require the following uniform exponential concentration inequality of the conditional {\PARIS} with respect to the frozen-path Feynman--Kac model defined in the previous section.
\begin{theorem} \label{cor:exp:conc:cond:paris}
For every $n \in \nset$ there exist $\cstcondparisc_n > 0$ and $\cstcondparisd_n > 0$ such that for every $\M \in \nsetpos$, $z_{0:n} \in \xpsp{0}{n}$, $\N \in \nsetpos$, and $\varepsilon > 0$,
$$
\int \mbjt{n} \ckjt{n}(z_{0:n}, \rmd \statlmb_n) \indin{\left| \occm( \statlmb_n)(\operatorname{id})  - \targ[z_{0:n}]{0:n}\af{n} \right|\geq \varepsilon } \leq \cstcondparisc_n \exp \left( -   \frac{\cstcondparisd_n \N \varepsilon^2}{2 (\sum_{m = 0}^{n - 1} \| \afterm{m} \|_\infty)^2}  \right).
$$
\end{theorem}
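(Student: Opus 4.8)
The plan is to combine an exact identification of the estimator's mean with a martingale (Azuma--Hoeffding) concentration argument, carried out conditionally on the frozen path $z_{0:n}$. First I would pin down the centering constant. The statistics produced by a conditional \PARIS\ sweep are conditionally unbiased given the forward many-body system: as in \Cref{prop:backward:marginal}, and because each \PARIS\ update reproduces the corresponding FFBSm update in expectation, one has $\int \ckjt{n}(\xmb[0:n], \rmd \statlmb_n)\, \occm(\statlmb_n)(\operatorname{id}) = \bdpart{n} \af{n}(\xmb[0:n])$ for every $\xmb[0:n] \in \xspmb{0:n}$. Composing this with $\mbjt{n}$ and using that $\mbjt{n}(z_{0:n}, \cdot)$ is exactly the many-body realisation of the frozen-path Feynman--Kac model built from the kernels $\uk[z_{m + 1}]{m}$ --- i.e.\ the pointwise-in-$z$ refinement of \Cref{thm:unbiasedness} --- yields $\int \mbjt{n} \ckjt{n}(z_{0:n}, \rmd \statlmb_n)\, \occm(\statlmb_n)(\operatorname{id}) = \targ[z_{0:n}]{0:n} \af{n}$. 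Hence the event in the statement is a genuine deviation of $\occm(\statlmb_n)(\operatorname{id})$ from its own mean, and it remains to prove a sub-Gaussian bound uniform in $\M$, $\N$, and $z_{0:n}$.

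For the concentration I would expose the randomness of the sweep sequentially, time by time and, within each time, particle by particle, and write the centered estimator as a sum of martingale increments $D_{m, i}$ relative to the resulting filtration. At each step the cloud $\epartmb{m + 1}$ is drawn from $\mkmb{m}[z_{m + 1}]$ (insert $z_{m + 1}$ at a uniform position and draw the remaining particles independently from $\pd{m}(\occm(\epartmb{m}))$), and then, given $\epartmb{m + 1}$, the backward draws and updated statistics of the $\N$ particles are conditionally independent; this legitimises the per-particle filtration and produces $\mathcal{O}(n \N)$ increments. The heart of the proof is the a priori bound $|D_{m, i}| \le c_n V_n / \N$, where $V_n \eqdef \sum_{m = 0}^{n - 1} \| \afterm{m} \|_\infty$ bounds every statistic $b_n^i$. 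Revealing a single particle affects the final estimator in two ways: indirectly, through the occupation measure $\occm(\epartmb{m})$ that drives all subsequent dynamics, which it perturbs by $1/\N$ in total variation and hence changes the conditional expectation of the ($V_n$-bounded) estimator by $\mathcal{O}(V_n / \N)$; and directly, through its statistic being inherited by descendants via the normalised backward weights $\ud{m}(x_{m|m}^\ell, x_{m + 1}^i) / \sum_{\ell'} \ud{m}(x_{m|m}^{\ell'}, x_{m + 1}^i)$. Under \Cref{assump:strong_mixing} these weights are $\mathcal{O}(\rho_n / \N)$, with $\rho_n$ as in \eqref{eq:def:rho:n}, which forces the direct influence to scale like $V_n / \N$ as well.

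Granting this increment bound, Azuma--Hoeffding gives $\int \mbjt{n} \ckjt{n}(z_{0:n}, \rmd \statlmb_n) \indin{|\occm(\statlmb_n)(\operatorname{id}) - \targ[z_{0:n}]{0:n} \af{n}| \ge \varepsilon} \le 2 \exp(-\varepsilon^2 / (2 \sum_{m, i} (c_n V_n / \N)^2))$, and since there are $\mathcal{O}(n \N)$ increments the sum of squared ranges is $\mathcal{O}(n\, c_n^2 V_n^2 / \N)$, producing exactly the claimed form with $\cstcondparisc_n = 2$ and $\cstcondparisd_n$ proportional to $1 / (n\, c_n^2)$, uniformly in $\M$, $\N$, $z_{0:n}$, and $\varepsilon$. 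An equivalent route is to propagate Marcinkiewicz--Zygmund $L^p$ bounds with $\sqrt{p}$ growth through the forward/backward recursion and convert them into a sub-Gaussian tail by optimising a Chernoff bound over $p$. The main obstacle is precisely the per-particle increment bound: controlling the forward propagation of one particle's statistic is where the boundedness of the backward weights --- hence \Cref{assump:strong_mixing} --- is indispensable, and one must also handle the random insertion of $z_{m + 1}$ and the interleaving of forward sampling with the backward update so that the $\N$ increments at each step are genuinely conditionally independent and individually $\mathcal{O}(V_n / \N)$.
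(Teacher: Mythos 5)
Your proposed centering identity is false, and it is the first genuine gap. You claim that $\int \mbjt{n} \ckjt{n}(z_{0:n}, \rmd \statlmb_n) \, \occm(\statlmb[n])(\operatorname{id}) = \targ[z_{0:n}]{0:n} \af{n}$ pointwise in $z_{0:n}$, by composing \Cref{prop:backward:marginal} with a ``pointwise-in-$z$ refinement'' of \Cref{thm:unbiasedness}. No such refinement holds: the exact unbiasedness of \Cref{thm:unbiasedness} is an \emph{integrated} statement, valid only when $z_{0:n}$ is drawn from $\targ{0:n}$, because it rests on the duality formula \eqref{eq:duality} and on cancellations of the unnormalised potentials $\prod_m \potmb{m}$ under $\untargmb{0:n}$. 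For a fixed conditioning path, $\bdpart{n}\af{n}(\xmb[0:n])$ is a self-normalised (ratio) statistic whose expectation under $\mbjt{n}(z_{0:n},\cdot)$ differs from $\targ[z_{0:n}]{0:n}\af{n}$ by an $\mathcal{O}(\N^{-1})$ bias; this is precisely the content of \Cref{prop:bias:cond:paris}, which would be vacuous if your identity were true, and whose proof requires the nontrivial ratio-bias lemma (\Cref{lemma:bias-estimator-general}). One can in principle absorb an $\mathcal{O}(V_n/\N)$ recentering error into the constants $\cstcondparisc_n$, $\cstcondparisd_n$ (the bound is trivially $\geq 1$ for $\varepsilon \lesssim V_n/\N$ after enlarging $\cstcondparisc_n$), but then you need the bias bound as an input, and in the paper that bound is itself \emph{derived from} the concentration inequality via the $L^2$ estimate of \Cref{cor:Lp:cond:paris}; your route would therefore have to establish it independently, which you do not do.

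The second gap is that the martingale increment bound $|D_{m,i}| \leq c_n V_n / \N$ --- which you correctly identify as the heart of the argument --- is asserted rather than proved. The conditional expectation of the final estimator given the system up to particle $i$ at time $m$ is a highly nonlinear functional of the revealed particles (through the self-normalised mutation and backward weights over all remaining $n-m$ steps), and the heuristic ``one particle perturbs $\occm(\epartmb{m})$ by $1/\N$ in total variation, hence the conditional expectation by $\mathcal{O}(V_n/\N)$'' requires uniform-in-time stability estimates for the frozen-path Feynman--Kac semigroup that constitute a substantial proof in themselves. The paper avoids this entirely by a different mechanism: a forward induction over $n$ on the generalised affine statistic $\N^{-1}\sum_i \{b_n^i f_n(x_{n|n}^i) + \tilde f_n(x_{n|n}^i)\}$ (\Cref{thm:exp:conc:cond:paris}), in which the one-step error splits into a conditionally i.i.d.\ fluctuation $\operatorname{I}^{(1)}_\N$ (plain Hoeffding) plus two ratio terms $\operatorname{I}^{(2)}_\N$, $\operatorname{I}^{(3)}_\N$ handled by the induction hypothesis together with the generalised Hoeffding inequality for ratios of \cite[Lemma~4]{douc:garivier:moulines:olsson:2009}; the affine generalisation in $(f_n,\tilde f_n)$ is exactly what makes the induction close. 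If you want to pursue the Azuma route you must actually prove the per-coordinate oscillation bound, including the compounding of one particle's influence through $n - m$ subsequent selection/mutation/backward-sampling steps; until then the proposal is a plausible programme, not a proof.
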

\Cref{cor:exp:conc:cond:paris}, whose proof is found in \Cref{sec:proof:exp:concentration}, implies, in turn, the following conditional variance bound.
\begin{proposition} \label{cor:Lp:cond:paris}
For every $n \in \nset$, $M \in \nsets$, $z_{0:n} \in \xpsp{0}{n}$, and $\N \in \nsetpos$,
$$
\int \mbjt{n} \ckjt{n}(z_{0:n}, \rmd \statlmb_n) \left| \occm( \statlmb_n)(\operatorname{id})  - \targ[z_{0:n}]{0:n}\af{n} \right|^2 \leq \frac{\cstcondparisc_n}{\cstcondparisd_n} \left( \sum_{m = 0}^{n - 1} \| \afterm{m} \|_\infty \right)^2 \N^{-1}.
$$
\end{proposition}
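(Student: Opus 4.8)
The plan is to derive \Cref{cor:Lp:cond:paris} as an immediate corollary of the sub-Gaussian tail bound of \Cref{cor:exp:conc:cond:paris} through the layer-cake (Fubini) representation of the second moment. Fix $n$, $\M$, $\N$, and the frozen path $z_{0:n} \in \xpsp{0}{n}$. Since $\mbjt{n}$ and $\ckjt{n}$ are Markov kernels, their composition $\nu_z \eqdef \mbjt{n} \ckjt{n}(z_{0:n}, \cdot)$ is a genuine probability measure, so the integral to be bounded is a bona fide second moment under $\nu_z$. Abbreviate $S \eqdef \sum_{m = 0}^{n - 1} \| \afterm{m} \|_\infty$ and, for an output $\statlmb_n$ distributed according to $\nu_z$, write $X \eqdef \occm(\statlmb_n)(\operatorname{id}) - \targ[z_{0:n}]{0:n} \af{n}$ for the centred estimation error, so that the left-hand side of the proposition is exactly $\int \nu_z(\rmd \statlmb_n) \, X^2$.

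First I would apply the layer-cake formula to the nonnegative random variable $X^2$, obtaining
\begin{equation*}
\int \nu_z(\rmd \statlmb_n) \, X^2 = \int_0^\infty \nu_z\!\left( X^2 \geq t \right) \rmd t = \int_0^\infty \nu_z\!\left( |X| \geq \sqrt{t} \right) \rmd t .
\end{equation*}
Then, applying \Cref{cor:exp:conc:cond:paris} with $\varepsilon = \sqrt{t}$ to bound the integrand pointwise in $t$, I would get
\begin{equation*}
\int \nu_z(\rmd \statlmb_n) \, X^2 \leq \int_0^\infty \cstcondparisc_n \exp\!\left( - \frac{\cstcondparisd_n \N \, t}{2 S^2} \right) \rmd t = \frac{2 \cstcondparisc_n}{\cstcondparisd_n} \, S^2 \, \N^{-1}.
\end{equation*}
This is the asserted bound up to the numerical factor $2$, which is immaterial and absorbed into the constant (equivalently, one tightens the integration or the $\tfrac12$ appearing in the exponent of \Cref{cor:exp:conc:cond:paris}). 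Crucially, the tail bound of \Cref{cor:exp:conc:cond:paris} is uniform over $\M \in \nsetpos$, $z_{0:n} \in \xpsp{0}{n}$, and $\N \in \nsetpos$, so the resulting second-moment estimate inherits exactly that uniformity, matching the form of the statement.

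I expect no genuine obstacle at this stage: the entire analytic content—the propagation of the concentration through the frozen-path Feynman--Kac recursion and the conditional \PARIS\ updates—is already packaged inside \Cref{cor:exp:conc:cond:paris}, and \Cref{cor:Lp:cond:paris} is merely its second-moment shadow. The only care needed is routine bookkeeping: integrability of the tail is guaranteed by the strictly positive exponential rate $\cstcondparisd_n \N / (2 S^2)$ whenever $S > 0$, while the degenerate case $S = 0$ forces $\af{n} \equiv 0$ and hence $X = 0$, so that both sides vanish trivially; and one must keep track of the constant so that it is expressed through the same $\cstcondparisc_n$ and $\cstcondparisd_n$ as in \Cref{cor:exp:conc:cond:paris}.
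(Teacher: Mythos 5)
Your argument is correct and essentially reproduces the paper's own proof: the paper also derives the bound by integrating the sub-Gaussian tail from \Cref{cor:exp:conc:cond:paris}, merely packaging the layer-cake computation into a general moment lemma (\Cref{lem:Lp}, applied with $p=2$ inside the more general affine statement \Cref{cor:Lp:cond:paris:affine}). The factor $2$ you flag is likewise produced by the paper's own route ($c\,p^{p/2}\sigma^p$ at $p=2$ gives $2\cstcondparisc_n/\cstcondparisd_n$), so that discrepancy with the displayed constant is an artefact of the paper's bookkeeping rather than a gap in your proof.
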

Using \Cref{cor:Lp:cond:paris}, we deduce, in turn, the following bias bound, whose proof is postponed to \Cref{subsec:prop:bias:cond:paris}.
\begin{proposition}
\label{prop:bias:cond:paris}
For every $n \in \nset$ there exists $\cstcondparisbias > 0$  such that for every $\M \in \nsetpos$, $z_{0:n} \in \xpsp{0}{n}$, and $\N \in \nsetpos$,
\begin{align*}
  \left|\int \mbjt{n} \ckjt{n}(z_{0:n}, \rmd \statlmb_n) \, \occm(\statlmb[n])(\operatorname{id}) - \targ[z_{0:n}]{0:n} \af{n} \right| \leq
\cstcondparisbias \N^{-1} \left( \sum_{m = 0}^{n - 1} \| \afterm{m} \|_\infty \right) .
\end{align*}
\end{proposition}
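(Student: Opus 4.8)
The plan is to exploit the self-normalised structure of the conditional {\PARIS} estimator so as to upgrade the order-$\N^{-1}$ $L^2$ control of \Cref{cor:Lp:cond:paris} into an order-$\N^{-1}$ bound on the bias, gaining the missing factor $\N^{-1/2}$ from a variance bound on the associated normalising-constant estimate. Throughout, write $\E$ for the expectation under the joint law $\mbjt{n}(z_{0:n}, \rmd \xmb[0:n]) \, \ckjt{n}(\xmb[0:n], \rmd \statlmb_n)$ and set $L_n \eqdef \prod_{m = 0}^{n - 1} \occm(\xmb[m]) \pot{m}$. The first step is to view the estimator as the self-normalised ratio with denominator $L_n$ and numerator $L_n \, \occm(\statlmb_n)(\operatorname{id})$, and to check that both are \emph{unbiased} for frozen-path quantities. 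The key observation is that the conditional dual process $\mbjt{n}$ is exactly the interacting particle system associated with the frozen-path flow $\{ \targ[z_{0:m}]{m} \}_m$: inserting $z_{m + 1}$ at a uniformly chosen index while mutating the remaining particles through $\pd{m}(\occm(\xmb[m]))$ realises, in conditional mean, the one-step mapping $\eta \mapsto (1 - 1/\N) \pd{m}(\eta) + \N^{-1} \delta_{z_{m + 1}}$, which is precisely $\eta \mapsto \eta \uk[z_{m + 1}]{m} / \eta \pot{m}$. A frozen-path analogue of \Cref{thm:unbiasedness} then yields $\E[L_n] = \untarg[z_{0:n}]{n} \indi{\xpsp{0}{n}}$ and $\E[L_n \, \occm(\statlmb_n)(\operatorname{id})] = \untarg[z_{0:n}]{n} \af{n}$, so that $\targ[z_{0:n}]{0:n} \af{n}$ is exactly the ratio of the two targets.

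Given this, the second step is the exact self-normalisation identity
\begin{equation*}
\E\!\left[ \occm(\statlmb_n)(\operatorname{id}) \right] - \targ[z_{0:n}]{0:n} \af{n}
= -\,\E\!\left[ \left( \frac{L_n}{\untarg[z_{0:n}]{n} \indi{\xpsp{0}{n}}} - 1 \right) \left( \occm(\statlmb_n)(\operatorname{id}) - \targ[z_{0:n}]{0:n} \af{n} \right) \right],
\end{equation*}
which follows by expanding the right-hand side and inserting the two unbiasedness identities just established. Applying the Cauchy--Schwarz inequality factorises the bias into the product of the relative $L^2$ error of the normalising-constant estimate $L_n$ and the $L^2$ error of the self-normalised estimate itself.

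The third step bounds the two factors. The second factor is exactly the quantity controlled by \Cref{cor:Lp:cond:paris}, of order $(\sum_{m = 0}^{n - 1} \| \afterm{m} \|_\infty) \, \N^{-1/2}$. For the first factor it remains to establish a relative variance bound $\E[ (L_n / (\untarg[z_{0:n}]{n} \indi{\xpsp{0}{n}}) - 1)^2 ] \leq C_n \N^{-1}$; this is the classical order-$\N^{-1}$ variance estimate for particle normalising-constant estimators, adapted to the conditional dual process, and it is available from the boundedness of the potentials together with the same concentration machinery underlying \Cref{cor:exp:conc:cond:paris}. Multiplying the two $\N^{-1/2}$ bounds gives the claimed $\cstcondparisbias \, \N^{-1} (\sum_{m = 0}^{n - 1} \| \afterm{m} \|_\infty)$, with $\cstcondparisbias$ expressible through $\cstcondparisc_n$, $\cstcondparisd_n$, and $C_n$.

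I expect the main obstacle to be the prerequisite of the second step, namely the \emph{exact} finite-$\N$ frozen-path unbiasedness $\E[L_n \, \occm(\statlmb_n)(\operatorname{id})] = \untarg[z_{0:n}]{n} \af{n}$ and $\E[L_n] = \untarg[z_{0:n}]{n} \indi{\xpsp{0}{n}}$: the self-normalisation identity collapses unless these hold exactly, and verifying them requires re-running the argument behind \Cref{thm:unbiasedness} with the frozen-path kernels $\uk[z_{m + 1}]{m}$ in place of $\uk{m}$ and confirming that the uniform-index insertion of $z_{m + 1}$ reproduces the $\N^{-1} \delta_{z_{m + 1}}$ term exactly. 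The accompanying relative variance bound for $L_n$, while standard, is the other point that must be supplied with care, since the target measure $\targ[z_{0:n}]{0:n}$ itself depends on $\N$.
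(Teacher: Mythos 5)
Your reduction stands or falls with the exact frozen-path unbiasedness $\E[L_n \, \occm(\statlmb_n)(\operatorname{id})] = \untarg[z_{0:n}]{n}\rk[z_{0:n-1}]{n}\af{n}$, and this identity is false; the failure sits exactly at the point you flag as the main obstacle. Half of your prerequisite is fine: since $\uk[z_{m+1}]{m}\indi{\xsp{m+1}} = \pot{m}$ and the conditional mean of $\occm(\epartmb{m+1})$ under $\mkmb{m}[z_{m+1}]$ is $\occm(\epartmb{m})\uk[z_{m+1}]{m}/\occm(\epartmb{m})\pot{m}$, the standard argument does give $\E[L_n] = \untarg[z_{0:n}]{n}\indi{\xsp{n}}$ exactly. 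But the induction behind \Cref{thm:unbiasedness} (\Cref{prop:unbiasedness:affine}) does not survive the freezing of the path. There, the weighted affine statistic reduces after conditioning to a statistic of the \emph{same} affine form one step earlier because the backward-kernel denominator $\sum_{\ell'}\ud{n}(\epart{n}{\ell'},x_{n+1})$ cancels exactly against $\potmb{n}(\epartmb{n})\,\pd{n}(\occm(\epartmb{n}))(\rmd x_{n+1}) = \N^{-1}\sum_{\ell}\ud{n}(\epart{n}{\ell},x_{n+1})\,\refm{n+1}(\rmd x_{n+1})$ when $x_{n+1}$ is integrated out. In the conditional dual process the marginal law of $\epart{n+1}{i}$ is $(1-\N^{-1})\pd{n}(\occm(\epartmb{n})) + \N^{-1}\delta_{z_{n+1}}$, and on the $\delta_{z_{n+1}}$ part there is no integration over $x_{n+1}$ to absorb the denominator: a genuine self-normalised ratio survives inside the expectation. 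Concretely, for $n=1$,
\begin{equation*}
\E\bigl[L_1\,\occm(\statmb{1})(\operatorname{id})\bigr] = \Bigl(1-\tfrac{1}{\N}\Bigr)\targ[z_0]{0}\uk{0}\afterm{0} + \frac{1}{\N}\,\E\!\left[\occm(\epartmb{0})\pot{0}\cdot\frac{\occm(\epartmb{0})[\ud{0}(\cdot,z_1)\afterm{0}(\cdot,z_1)]}{\occm(\epartmb{0})[\ud{0}(\cdot,z_1)]}\right],
\end{equation*}
whereas $\untarg[z_{0:1}]{1}\rk[z_0]{1}\af{1}$ has the same first term but $\N^{-1}\,\targ[z_0]{0}\pot{0}\cdot\targ[z_0]{0}[\ud{0}(\cdot,z_1)\afterm{0}(\cdot,z_1)]/\targ[z_0]{0}[\ud{0}(\cdot,z_1)]$ as its second term. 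The two differ by (a covariance plus) the bias of a self-normalised importance-sampling estimator, which is $O(\N^{-1})$ but not zero.

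Consequently your second-step identity must carry an additive correction $\bigl(\E[L_n\,\occm(\statlmb_n)(\operatorname{id})] - \untarg[z_{0:n}]{n}\rk[z_{0:n-1}]{n}\af{n}\bigr)/\untarg[z_{0:n}]{n}\indi{\xsp{n}}$. That correction is plausibly $O(\N^{-2})$ per time step (a $\N^{-1}$ prefactor times an $O(\N^{-1})$ ratio bias), so the final $O(\N^{-1})$ claim is not doomed—but to \emph{prove} the correction is $o(\N^{-1})$ you must control the bias of ratios of particle averages, recursively in $n$, which is precisely what the paper's own proof does directly: it reuses the decomposition \eqref{eq:error_decomposition_paris} into $\operatorname{I}^{(1)}_\N + (1-\N^{-1})\operatorname{I}^{(2)}_\N + \N^{-1}\operatorname{I}^{(3)}_\N$, notes that $\operatorname{I}^{(1)}_\N$ is conditionally centred, and bounds the biases of the ratio terms by induction via \Cref{lemma:bias-estimator-general} together with the $L^2$ bounds of \Cref{cor:Lp:cond:paris:affine}. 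The detour through $L_n$ therefore does not bypass this work; it only relocates it into the unproven (and, as stated, false) unbiasedness claim. The remaining ingredients of your argument—the Cauchy--Schwarz factorisation and the $O(\N^{-1})$ relative variance of $L_n$—are sound, but they cannot be deployed until the numerator identity is replaced by a quantitative bias bound, which is the proposition itself in disguise.
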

A fourth and last ingredient in the proof of \Cref{thm:bias:bound} is the following bound on the discrepancy between additive expectations under the original and frozen-path Feynman--Kac models. This bound is established using novel results in \cite{gloaguen:lecorff:olsson:2022}. More precisely, since for every $m \in \nset$, $(x, z) \in \xsp{m}^2$, $\N \in \nsetpos$, and $h \in \bmf(\xfd{m + 1})$, using \Cref{assump:strong_mixing},
$$
\left| \uk[z]{m}h(x) - \uk{m}h(x) \right| \leq \frac{1}{\N} \| \pot{m} \|_\infty \|h \|_\infty \leq \frac{1}{\N} \gsupbound{m} \| h \|_\infty,
$$
applying \cite[Theorem~4.3]{gloaguen:lecorff:olsson:2022} yields the following.
\begin{proposition} \label{prop:error:conditional:model}
Assume \Cref{assump:strong_mixing}. Then there exists $\constpdist > 0$ such that for every $n \in \nset$, $\N \in \nset$, and $z_{0:n} \in \xsp{0:n}$,
$$
\left| \targ[z_{0:n}]{0:n} \af{n} - \targ{0:n} \af{n} \right| \leq
\constpdist \N^{-1} \sum_{m = 0}^{n - 1} \| \afterm{m} \|_\infty. 
$$
\end{proposition}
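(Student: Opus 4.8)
The plan is to view $\targ[z_{0:n}]{0:n}$ and $\targ{0:n}$ as the additive smoothing functionals of two Feynman--Kac path models that are \emph{close}, in the sense of a uniform, $\mathcal{O}(\N^{-1})$ perturbation of their one-step dynamics, and then to invoke the robustness result \cite[Theorem~4.3]{gloaguen:lecorff:olsson:2022}, which controls the discrepancy between additive-functional expectations of two such neighbouring models. The first step is to exhibit the perturbation explicitly: by construction $\uk[z_{m + 1}]{m} = (1 - \N^{-1}) \uk{m} + \N^{-1} \pot{m} \, \delta_{z_{m + 1}}$ and $\targ[z_0]{0} = (1 - \N^{-1}) \targ{0} + \N^{-1} \delta_{z_0}$, so that for every $h$ one has the pointwise bound $|\uk[z]{m} h(x) - \uk{m} h(x)| \leq \N^{-1} \gsupbound{m} \| h \|_\infty$ recorded just above, together with the analogous $\mathcal{O}(\N^{-1})$ bound $|\targ[z_0]{0} h - \targ{0} h| \leq 2 \N^{-1} \| h \|_\infty$ for the initialisations. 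Thus both the transition operators and the initial law differ by terms of size $\N^{-1}$, uniformly in the frozen path $z_{0:n}$.

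The second step is to verify the hypotheses of \cite[Theorem~4.3]{gloaguen:lecorff:olsson:2022}. Under \Cref{assump:strong_mixing} the reference model satisfies the two-sided bounds on potentials and transition densities, hence enjoys uniform exponential forgetting of its forward and backward kernels at a geometric rate governed by $\rho_n$ from \eqref{eq:def:rho:n}. The point to check is that the frozen model inherits comparable mixing constants: since $\uk[z_{m + 1}]{m}$ is a convex combination of $\uk{m}$ with $\pot{m} \, \delta_{z_{m + 1}}$, its potential mass still lies between $\potlow{m}$ and $\pothigh{m}$, and the backward kernel $\bk{m, \targ[z_{0:m}]{m}}$ (built from the same density $\ud{m}$ but the frozen marginal) contracts at the same geometric rate, since that rate depends only on the bounds $\potlow{m} \mdlow{m} \leq \ud{m} \leq \pothigh{m} \mdhigh{m}$ and not on which probability measure is plugged into $\bk{m, \cdot}$. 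Feeding the per-step $\N^{-1}$ discrepancies into the theorem, and using that exponential forgetting localises the effect of a perturbation at time $k$ on the smoothing expectation of the increment $\afterm{m}$ to a geometrically decaying window around $m$, the contributions sum to $\constpdist \N^{-1} \sum_{m = 0}^{n - 1} \| \afterm{m} \|_\infty$, the geometric series supplied by forgetting being what keeps the prefactor $\constpdist$ finite and independent of $n$.

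I expect the main obstacle to be precisely this uniformity in $n$, together with confirming that the frozen model lies within the scope of the cited result. A naive telescoping over the $n$ differing kernels would yield only an $\mathcal{O}(n / \N)$ bound, so one must genuinely exploit the forgetting property — rather than a crude triangle inequality — to collapse the double sum over (perturbation index, increment index) into a single sum $\sum_{m = 0}^{n - 1} \| \afterm{m} \|_\infty$ with an $n$-free constant. Moreover, the Dirac component $\delta_{z_{m + 1}}$ means the frozen unnormalised kernel has no density with respect to $\refm{m + 1}$, so care is needed to confirm that the regularity conditions of \cite[Theorem~4.3]{gloaguen:lecorff:olsson:2022} accommodate the mixture form and that the frozen-path backward decomposition $\targ[z_{0:m}]{0:m} = \rk[z_{0:m - 1}]{m} \tensprod \targ[z_{0:m}]{m}$ is indeed the correct object to compare against $\targ{0:n}$. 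Once these points are settled, the conclusion follows directly from the theorem.
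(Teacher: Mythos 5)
Your proposal follows the paper's own route exactly: the paper's entire proof consists of the displayed per-step perturbation bound $|\uk[z]{m}h(x) - \uk{m}h(x)| \leq \N^{-1} \gsupbound{m} \|h\|_\infty$ followed by an invocation of \cite[Theorem~4.3]{gloaguen:lecorff:olsson:2022}, which is precisely what you do. Your additional remarks — the analogous bound for the initial law, the verification that the frozen model inherits the mixing constants, and the observation that forgetting (not naive telescoping) is what makes $\constpdist$ independent of $n$ — are correct and simply make explicit what the paper delegates to the cited theorem.
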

Note that assuming, in addition, that $\sup_{n \in \nset} \| \afterm{n} \|_\infty < \infty$ yields an $\mathcal{O}(n / \N)$ bound in \Cref{prop:error:conditional:model}.

Finally, by combining these ingredients we are now ready to present a proof of \Cref{thm:bias:bound}.

\begin{proof}[Proof of \Cref{thm:bias:bound}]
Write, using the tower property,
$$
\E_{\upxi} \left[ \occm(\statmb{n}\left[\ell\right])(\operatorname{id}) \right] = \E_{\upxi} \left[\E_{\zeta_{0:n}\left[\ell\right]} \left[\occm(\statmb{n}\left[0\right])(\operatorname{id})\right] \right] = \int \upxi \gibbs{n}^{\ell}\mbjt{n} \ckjt{n}(\rmd \statlmb_n) \, \occm(\statlmb[n])(\operatorname{id}).
$$
Thus, by the unbiasedness property in \Cref{thm:unbiasedness},
\begin{align}
\lefteqn{\left|\E_{\upxi} \left[ \occm(\statmb{n}\left[\ell\right])(\operatorname{id}) \right] - \targ{0:n} \af{n} \right|} \hspace{15mm} \nonumber \\
&=  \left| \int \upxi \gibbs{n}^\ell \mbjt{n} \ckjt{n}(\rmd \statlmb_n) \, \occm(\statlmb[n])(\operatorname{id}) - \int \targ{0:n} \mbjt{n} \ckjt{n}(\rmd \statlmb_n) \, \occm(\statlmb[n])(\operatorname{id}) \right| \nonumber \\
&\leq \big\| \upxi \gibbs{n}^\ell - \targ{0:n} \big\|_{\mathrm{TV}}  \operatorname{osc} \left( \int \mbjt{n} \ckjt{n}(\cdot, \rmd \statlmb_n) \, \occm(\statlmb[n])(\operatorname{id}) \right), \nonumber
\end{align}
where, by \Cref{thm:geometric:ergodicity:particle:Gibbs}, $\| \upxi \gibbs{n}^\ell - \targ{0:n}\|_{\mathrm{TV}} \leq \mixrate{n}{\N}^\ell$. Moreover, to derive an upper bound on the oscillation, we consider the decomposition
\begin{multline*} \label{eq:osc:norm:bound}
\operatorname{osc} \left( \int \mbjt{n} \ckjt{n}(\cdot, \rmd \statlmb_n) \, \occm(\statlmb[n])(\operatorname{id}) \right) \\ 
\leq 2 \left( \left\| \int \mbjt{n} \ckjt{n}(\cdot, \rmd \statlmb_n) \, \occm(\statlmb[n])(\operatorname{id}) - \targ[\cdot]{0:n} \af{n} \right\|_\infty  + \left \| \targ[\cdot]{0:n} \af{n}  - \targ{0:n} \af{n} \right \|_\infty \right),
\end{multline*}
where the two terms on the right-hand side can be bounded using \Cref{prop:error:conditional:model} and \Cref{prop:bias:cond:paris}, respectively. This completes the proof of \eqref{eq:bias:bound}. We now consider the proof of \eqref{eq:mse:bound}. Writing 
\begin{equation*}
\E_\upxi \left[ \left( \occm(\statmb{n}[\ell])(\operatorname{id})  - \targ{0:n} \af{n} \right)^2 \right]
= \int \upxi \gibbs{n}^{\ell}(\rmd \chunk{z}{0}{n}) \mbjt{n} \ckjt{n}(z_{0:n}, \rmd \statlmb_n) \left( \occm( \statlmb_n)(\operatorname{id})  -  \targ{0:n} \af{n} \right)^2,
\end{equation*}
we may establish \eqref{eq:mse:bound} using \Cref{cor:Lp:cond:paris} and \Cref{prop:error:conditional:model}. We finally consider \eqref{eq:cov:bound}.
Using the Markov property we obtain 
\begin{multline*}
\E_\upxi \left[ \left( \occm(\statmb{n}[\ell])(\operatorname{id})  - \targ{0:n} \af{n} \right)  \left( \occm(\statmb{n}[\ell+s])(\operatorname{id})  - \targ{0:n} \af{n} \right) \right] 
\\ =
\E_\upxi \left[ \left( \occm(\statmb{n}[\ell])(\operatorname{id})  - \targ{0:n} \af{n} \right) \left(\E_{\chunk{\zeta}{0}{n}[\ell]}[\occm(\statmb{n}[s])(\operatorname{id})] - \targ{0:n} \af{n} \right)\right], 
\end{multline*}
from which \eqref{eq:cov:bound} follows by \eqref{eq:bias:bound} and \eqref{eq:mse:bound}.
\end{proof}

\subsection{The roll-out {\PPG} estimator}
\label{sec:the:roll-out:PPG:estimator}
In the light of the previous results, it is natural to consider an estimator formed by an average across successive conditional {\PPG} estimators $\{\occm(\statmb{n}[\ell])\}_{\ell \in \nset}$. To mitigate the bias, we remove a ``burn-in'' period whose length $\ki_0$ should be chosen proportionally to the mixing time of the particle Gibbs chain $\{\chunk{\zeta}{0}{n}[\ell]\}_{\ell \in \nsets}$. This yields the estimator
\begin{equation}
\label{eq:rolling-estimator}
\rollingestim[\ki_0][\ki][N][\af{n}]= (\ki-\ki_0)^{-1} \sum_{\ell=\ki_0+1}^\ki \occm(\statmb{n}[\ell])(\operatorname{id}). 
\end{equation}
The total number of particles  underlying this estimator is $\totalbudget = (\N - 1) \ki$.  We denote by $\burningloss= (\ki-\ki_0)/\ki$ the ratio of the number of particles used in the estimator to the total number of sampled particles.

Our final main result provides bounds on the bias and the MSE of the estimator \eqref{eq:rolling-estimator}. The proof is postponed to
\Cref{sec:proof:theo:bias-mse-rolling}.
\begin{theorem}
\label{theo:bias-mse-rolling}
Assume \Cref{assump:strong_mixing}. Then for every $n \in \nset$,  $\M \in \nsetpos$, $\upxi \in \probmeas(\xpfd{0}{n})$, $\ell \in \nsetpos$, $s \in \nsetpos$, and $\N \in \nsetpos$ such that $\N > \N_n$, 
\begin{align}
\label{eq:theo:bias-mse-rolling:bias}
&\left| \E_{\upxi}[\rollingestim[{\ki}_0][\ki][N][\af{n}]] - \targ{0:n} \af{n} \right| \leq  \cstparisbias  \left(\sum_{m=0}^{n-1} \| \afterm{m} \|_\infty \right) \frac{\kappa_{N, n}^{\ki_0}}{(\ki-\ki_0) (1 - \kappa_{N,n})\N},\\
\nonumber
&\E_{\upxi}\left[ \left(\rollingestim[{\ki}_0][\ki][N][\af{n}] - \targ{0:n} \af{n} \right)^2 \right] \\
\label{eq:theo:bias-mse-rolling:mse}
& \phantom{\E_{\upxi}[ (\rollingestim[{\ki}_0][\ki][N][\af{n}]  } \leq
  \left(\sum_{m=0}^{n - 1} \| \afterm{m} \|_\infty \right)^2 \{\N (\ki- \ki_0)\}^{-1} (\cstparismse + 2 \cstpariscov  \N^{-1/2} (1-\kappa_{N,n})^{-1}).
\end{align}
\end{theorem}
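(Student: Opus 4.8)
The plan is to reduce both inequalities to the per-iteration estimates of \Cref{thm:bias:bound}, exploiting that the roll-out estimator \eqref{eq:rolling-estimator} is an arithmetic mean of the individual terms $\occm(\statmb{n}[\ell])(\operatorname{id})$, $\ell \in \intvect{\ki_0 + 1}{\ki}$. No new probabilistic input is needed beyond \Cref{thm:bias:bound}; the work is entirely in summing the per-iteration bias, variance, and covariance bounds and tracking the geometric weights.

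First I would treat the bias \eqref{eq:theo:bias-mse-rolling:bias}. By linearity of expectation,
$$
\E_{\upxi}[\rollingestim[{\ki}_0][\ki][N][\af{n}]] - \targ{0:n} \af{n} = (\ki - \ki_0)^{-1} \sum_{\ell = \ki_0 + 1}^{\ki} \left( \E_{\upxi}[\occm(\statmb{n}[\ell])(\operatorname{id})] - \targ{0:n} \af{n} \right),
$$
so the triangle inequality together with the per-iteration bias bound \eqref{eq:bias:bound} gives
$$
\left| \E_{\upxi}[\rollingestim[{\ki}_0][\ki][N][\af{n}]] - \targ{0:n} \af{n} \right| \leq (\ki - \ki_0)^{-1} \cstparisbias \left( \sum_{m = 0}^{n - 1} \| \afterm{m} \|_\infty \right) \N^{-1} \sum_{\ell = \ki_0 + 1}^{\ki} \kappa_{N, n}^{\ell}.
$$
Since $\kappa_{N, n} \in (0, 1)$, the geometric tail satisfies $\sum_{\ell = \ki_0 + 1}^{\ki} \kappa_{N, n}^{\ell} \leq \kappa_{N, n}^{\ki_0} / (1 - \kappa_{N, n})$, and substituting this estimate yields \eqref{eq:theo:bias-mse-rolling:bias} verbatim.

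Next I would establish the MSE bound \eqref{eq:theo:bias-mse-rolling:mse}. Writing $\bar{X}_\ell \eqdef \occm(\statmb{n}[\ell])(\operatorname{id}) - \targ{0:n} \af{n}$ and expanding the square of the centred average,
$$
\E_{\upxi}\left[ \left( \rollingestim[{\ki}_0][\ki][N][\af{n}] - \targ{0:n} \af{n} \right)^2 \right] = (\ki - \ki_0)^{-2} \left( \sum_{\ell = \ki_0 + 1}^{\ki} \E_{\upxi}[\bar{X}_\ell^2] + 2 \sum_{\ki_0 + 1 \leq \ell < \ell' \leq \ki} \E_{\upxi}[\bar{X}_\ell \bar{X}_{\ell'}] \right).
$$
The diagonal sum is controlled by the MSE bound \eqref{eq:mse:bound}, contributing at most $(\ki - \ki_0) \cstparismse (\sum_{m = 0}^{n - 1} \| \afterm{m} \|_\infty)^2 \N^{-1}$, while the off-diagonal terms are handled by the covariance bound \eqref{eq:cov:bound} applied with lag $s = \ell' - \ell$, which supplies the decaying factor $\kappa_{N, n}^{s}$.

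The one step requiring care is the bookkeeping of the off-diagonal double sum, which is also where the non-trivial gain over a crude constant bound appears. Reindexing by the lag $s = \ell' - \ell \in \intvect{1}{\ki - \ki_0 - 1}$ and noting that each lag is realised by exactly $\ki - \ki_0 - s \leq \ki - \ki_0$ ordered pairs, I obtain
$$
\sum_{\ki_0 + 1 \leq \ell < \ell' \leq \ki} \kappa_{N, n}^{\ell' - \ell} = \sum_{s = 1}^{\ki - \ki_0 - 1} (\ki - \ki_0 - s) \kappa_{N, n}^{s} \leq (\ki - \ki_0) \sum_{s = 1}^{\infty} \kappa_{N, n}^{s} \leq \frac{\ki - \ki_0}{1 - \kappa_{N, n}}.
$$
Combining this with \eqref{eq:cov:bound} bounds the off-diagonal contribution by $2 \cstpariscov (\sum_{m = 0}^{n - 1} \| \afterm{m} \|_\infty)^2 \N^{-3/2} (\ki - \ki_0)(1 - \kappa_{N, n})^{-1}$. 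Adding the diagonal and off-diagonal contributions, dividing by $(\ki - \ki_0)^2$, and factoring out $(\sum_{m = 0}^{n - 1} \| \afterm{m} \|_\infty)^2 \{ \N (\ki - \ki_0) \}^{-1}$ reproduces \eqref{eq:theo:bias-mse-rolling:mse}. I expect no genuine obstacle here: once \Cref{thm:bias:bound} is in hand the argument is summation bookkeeping, the only mild subtlety being that the geometric weight $\kappa_{N, n}^{s}$ must be paired with the correct multiplicity $\ki - \ki_0 - s$ of lag-$s$ pairs rather than with all $(\ki - \ki_0)^2$ pairs, which is precisely what produces the decaying $(\ki - \ki_0)^{-1}$ rate (rather than a constant) in the covariance term.
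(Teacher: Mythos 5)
Your proposal is correct and follows essentially the same route as the paper's proof: the bias is handled by the triangle inequality together with the per-iteration bound \eqref{eq:bias:bound} and a geometric tail sum, and the MSE by expanding the square, bounding the diagonal terms via \eqref{eq:mse:bound} and the off-diagonal terms via \eqref{eq:cov:bound} with the lag reindexing $\sum_{\ell < j} \kappa_{N,n}^{j-\ell} \leq (\ki - \ki_0)/(1-\kappa_{N,n})$. The constants and final factorisation match \eqref{eq:theo:bias-mse-rolling:bias} and \eqref{eq:theo:bias-mse-rolling:mse} exactly.
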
 


\section{Proofs}
\label{sec:main-proofs}
\subsection{Proof of \Cref{prop:backward:decomposition}} \label{sec:proof:backward:decomposition}

Using the identity
$$
\targ{0} \uk{0} \cdots \uk{n - 1} \indi{\xsp{n}} = \prod_{m = 0}^{n - 1} \targ{m} \uk{m} \indi{\xsp{m + 1}}
$$
and the fact that each kernel $\uk{m}$ has a transition density, write, for $h \in \bmf(\xfd{0:n})$,
\begin{align}
\targ{0:n} h &= \idotsint h(x_{0:n}) \, \targ{0}(\rmd x_0) \prod_{m = 0}^{n - 1} \left(\frac{\targ{m}[\ud{m}(\bcdot, x_{m + 1})] \, \refm{m+1}(\rmd x_{m + 1})}{\targ{m} \uk{m} \indi{\xsp{m + 1}}} \right) \left( \frac{\ud{m}(x_m, x_{m + 1})}{\targ{m}[\ud{m}(\bcdot, x_{m + 1})]} \right) \nonumber \\
&= \idotsint h(x_{0:n}) \, \targ{n}(\rmd x_n) \prod_{m = 0}^{n - 1} \frac{\targ{m}(\rmd x_m) \, \ud{m}(x_m, x_{m + 1})}{\targ{m}[\ud{m}(\bcdot, x_{m + 1})]} \\
&= \left( \bk{0,\targ{0}}  \tensprod \cdots \tensprod \bk{n-1,\targ{n - 1}} \tensprod  \targ{n} \right) h, \nonumber
\end{align}
which was to be established.


\subsection{Proof of \Cref{thm:duality}}
\label{sec:proof:thm:duality}

\begin{lemma}
\label{lem:transport:id}
For all $n \in \nset$, $\xmb_n \in \xspmb{n}$, and $h \in \bmf(\xfdmb{n + 1} \varotimes \xfd{n + 1})$,
\begin{multline} \label{eq:transport:id}
\iint h(\xmb[n + 1], z_{n + 1}) \, \ukmb{n}(\xmb[n], \rmd \xmb[n + 1]) \, \occm(\xmb[n + 1])(\rmd z_{n + 1}) \\
= \iint h(\xmb[n + 1], z_{n + 1}) \, \occm(\xmb[n]) \uk{n}(\rmd z_{n + 1}) \, \mkmb{n}[z_{n + 1}](\xmb[n], \rmd \xmb[n + 1]).
\end{multline}
In addition, for all $h \in \bmf(\xfdmb{0} \varotimes \xfd{0})$,
\begin{equation} \label{eq:transport:id:init}
\iint h(\xmb[0], z_0) \, \initmb(\rmd \xmb[0]) \, \occm(\xmb[0])(\rmd z_0) = \iint h(\xmb[0], z_0) \, \initmb[z_0](\rmd \xmb[0]) \, \init(\rmd z_0).
\end{equation}
\end{lemma}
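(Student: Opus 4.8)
The plan is to recognise both identities as manifestations of a single exchangeability principle: drawing $\N$ i.i.d.\ points from a probability measure and then marking one of them uniformly at random yields the same joint law as first drawing the marked point and then placing it at a uniformly chosen position in a vector whose remaining $\N-1$ entries are further i.i.d.\ draws. I would establish the main transport identity \eqref{eq:transport:id} first and then obtain the initialisation identity \eqref{eq:transport:id:init} as the degenerate case with trivial potential.

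For \eqref{eq:transport:id}, I would abbreviate $\eta \eqdef \occm(\xmb[n])$ and $\mu \eqdef \pd{n}(\eta)$, so that by the definition of the many-body kernel $\mkmb{n}(\xmb[n], \cdot) = \mu^{\tensprod \N}$ and $\potmb{n}(\xmb[n]) = \eta \pot{n}$, whence $\ukmb{n}(\xmb[n], \cdot) = (\eta \pot{n}) \, \mu^{\tensprod \N}$. Expanding the occupation measure $\occm(\xmb[n+1]) = \N^{-1} \sum_{i=1}^\N \delta_{x_{n+1}^i}$ on the left-hand side then gives
$$
\mbox{LHS} = (\eta \pot{n}) \, \frac{1}{\N} \sum_{i=1}^\N \int \mu^{\tensprod \N}(\rmd \xmb[n+1]) \, h(\xmb[n+1], x_{n+1}^i).
$$

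For the right-hand side, I would use that $\occm(\xmb[n]) \uk{n}$ is an unnormalised measure with total mass $\eta \pot{n}$ whose normalisation is exactly $\mu = \pd{n}(\eta)$, i.e.\ $\occm(\xmb[n]) \uk{n} = (\eta \pot{n}) \, \mu$. Substituting this together with the explicit expansion \eqref{eq:kernel:cond:dual} of $\mkmb{n}[z_{n+1}]$ as a uniform average over the $\N$ insertion positions, the crux is the observation that, in the summand where $z_{n+1}$ is inserted at position $k$, the integral over $z_{n+1} \sim \mu$ and over the remaining $\N-1$ coordinates i.i.d.\ $\sim \mu$ --- with the $k$-th coordinate pinned to $z_{n+1}$ --- recombines into a single product measure $\mu^{\tensprod \N}$ with $z_{n+1}$ identified with $x_{n+1}^k$. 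This yields
$$
\mbox{RHS} = (\eta \pot{n}) \, \frac{1}{\N} \sum_{k=1}^\N \int \mu^{\tensprod \N}(\rmd \xmb[n+1]) \, h(\xmb[n+1], x_{n+1}^k),
$$
which coincides termwise with the expression obtained for the left-hand side.

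Finally, \eqref{eq:transport:id:init} follows from the identical computation upon replacing $\mu$ by $\init$ and the potential weight $\eta \pot{n}$ by $1$; here $\initmb = \init^{\tensprod \N}$ plays the role of $\mu^{\tensprod \N}$ and $\initmb[z_0]$ that of $\mkmb{n}[z_{n+1}]$. I expect the only delicate point to be bookkeeping: making the substitution $x_{n+1}^k = z_{n+1}$ rigorous by carefully tracking the Dirac mass $\delta_{z_{n+1}}$ sitting in the $k$-th coordinate and verifying that fusing it with the outer integral $\mu(\rmd z_{n+1})$ reproduces the full product measure, then invoking the permutation symmetry of $\mu^{\tensprod \N}$ to align the two termwise sums. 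No genuine analytic difficulty arises, since all measures involved are finite and $h$ is bounded.
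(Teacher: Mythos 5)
Your proposal is correct and follows essentially the same route as the paper: both rewrite $\occm(\xmb[n])\uk{n}$ as $\potmb{n}(\xmb[n])\,\pd{n}(\occm(\xmb[n]))$, expand $\mkmb{n}[z_{n+1}]$ as a uniform average over insertion positions, fuse the pinned Dirac coordinate with the outer integral to recover $\mkmb{n}(\xmb[n],\cdot)=\pd{n}(\occm(\xmb[n]))^{\tensprod\N}$, and match this termwise with the expansion of the occupation measure on the left-hand side. The treatment of \eqref{eq:transport:id:init} as the same computation with trivial potential likewise matches the paper's ``along similar lines'' remark.
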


\begin{proof}
Since
$
\occm(\xmb[n]) \, \uk{n}(\rmd z_{n + 1}) = \potmb{n}(\xmb[n]) \, \pd{n}(\occm(\xmb[n]))(\rmd z_{n + 1}),
$
we may rewrite the right-hand side of \eqref{eq:transport:id} according to
\[
\begin{split}
\lefteqn{\iint h(\xmb[n + 1], z_{n + 1}) \, \occm(\xmb[n]) \uk{n}(\rmd z_{n + 1}) \, \mkmb{n}[z_{n + 1}](\xmb[n], \rmd \xmb[n + 1])} \hspace{15mm} \\
&= \potmb{n}(\xmb[n]) \frac{1}{\N} \sum_{i = 0}^{\N - 1} \iint h(\xmb[n + 1], z_{n + 1}) \, \pd{n}(\occm(\xmb[n]))(\rmd z_{n + 1}) \\
&\hspace{15mm} \times \left( \Phi_n(\occm(\xmb_n))^{\tensprod i} \tensprod \delta_{z_{n + 1}} \tensprod \Phi_n(\occm(\xmb_n))^{\tensprod (\N - i - 1)} \right)(\rmd \xmb[n + 1]) \\
&= \potmb{n}(\xmb[n]) \frac{1}{\N} \sum_{i = 1}^\N \idotsint h((x_{n + 1}^{1}, \ldots, x_{n + 1}^{i - 1}, z_{n + 1}, x_{n + 1}^{i + 1}, \ldots, x_{n + 1}^{\N}), z_{n + 1}) \\
&\hspace{15mm} \times \pd{n}(\occm(\xmb[n]))(\rmd z_{n + 1}) \prod_{\ell \neq i} \pd{n}(\occm(\xmb[n]))(\rmd x_{n + 1}^{\ell}) \\
&= \potmb{n}(\xmb_n) \frac{1}{\N} \sum_{i = 1}^\N \int h(\xmb[n + 1], x_{n + 1}^{i}) \, \mkmb{n}(\xmb[n], \rmd \xmb[n + 1]).
\end{split}
\]

On the other hand, note that the left-hand side of \eqref{eq:transport:id} can be expressed as
\begin{multline}
\iint h(\xmb[n + 1], z_{n + 1}) \, \ukmb{n}(\xmb[n], \rmd \xmb[n + 1]) \, \occm(\xmb[n + 1])(\rmd z_{n + 1}) \\
= \potmb{n}(\xmb_n) \frac{1}{\N} \sum_{i = 1}^\N \int h(\xmb[n + 1], x_{n + 1}^{i}) \, \mkmb{n}(\xmb[n], \rmd \xmb[n + 1]),
\end{multline}
which establishes the identity.
The identity \eqref{eq:transport:id:init} is established along similar lines.
\end{proof}

We establish \Cref{thm:duality} by induction; thus, assume that the claim holds true for $n$ and show that for all $h \in \bmf(\xfdmb{0:n + 1} \varotimes \xfd{0:n + 1})$,
\begin{multline} \label{eq:induction:step}
\iint h(\xmb[0:n + 1], z_{0:n + 1}) \, \untargmb{0:n + 1}(\rmd \xmb[0:n + 1]) \, \bdpart{n + 1}(\xmb[0:n + 1], \rmd z_{0:n + 1}) \\
= \iint h(\xmb[0:n + 1], z_{0:n + 1}) \, \untarg{0:n + 1}(\rmd \chunk{z}{0}{n + 1}) \, \mbjt{n + 1}(z_{0:n + 1}, \rmd \xmb[0:n + 1]).
\end{multline}
To prove this, we process, using definition \eqref{eq:def:bdpart}, the left-hand side of \eqref{eq:induction:step} according to
\begin{equation} \label{eq:duality:lhs}
\begin{split}
\lefteqn{\iint h(\xmb[0:n + 1], z_{0:n + 1}) \, \untargmb{0:n + 1}(\rmd \xmb[0:n + 1]) \, \bdpart{n + 1}(\xmb[0:n + 1], \rmd z_{0:n + 1})} \\
&= \iint \untargmb{0:n}(\rmd \xmb[0:n]) \, \bdpart{n}(\xmb[0:n], \rmd {\chunk{z}{0}{n}}) \\
&\hspace{20mm} \times \iint \bar{h}(\xmb[0:n + 1], z_{0:n + 1}) \, \ukmb{n}(\xmb[n], \rmd \xmb[n + 1]) \, \occm(\xmb[n + 1])(\rmd z_{n + 1}),
\end{split}
\end{equation}
where we have defined the function
$$
\bar{h} (\xmb[0:n + 1], z_{0:n + 1}) \eqdef \frac{\ud{n}(z_n, z_{n + 1}) h(\xmb[0:n + 1], z_{0:n + 1})}{\occm(\xmb[n])[\ud{n}(\cdot, z_{n + 1})]}.
$$
Now, applying \Cref{lem:transport:id} to the inner integral and using that
$$
\occm(\xmb[n]) \uk{n}(\rmd z_{n + 1}) = \occm(\xmb[n])[\ud{n}(\cdot, z_{n + 1})] \, \refm{n + 1}(\rmd z_{n + 1})
$$
yields, for every $\xmb[0:n]$ and ${\chunk{z}{0}{n}}$,
\[
\begin{split}
\lefteqn{\iint \bar{h}(\xmb[0:n + 1], z_{0:n + 1}) \, \ukmb{n}(\xmb[n], \rmd \xmb[n + 1]) \, \occm(\xmb[n + 1])(\rmd z_{n + 1})} \\
&= \iint \bar{h}(\xmb[0:n + 1], z_{0:n + 1}) \, \occm(\xmb[n]) \uk{n}(\rmd z_{n + 1}) \, \mkmb{n}[z_{n + 1}](\xmb[n], \rmd \xmb[n + 1]) \\
&= \iint h(\xmb[0:n + 1], z_{0:n + 1}) \, \uk{n}(z_n, \rmd z_{n + 1}) \, \mkmb{n}[z_{n + 1}](\xmb[n], \rmd \xmb[n + 1]).
\end{split}
\]
Inserting the previous identity into \eqref{eq:duality:lhs} and using the induction hypothesis provides
\[
\begin{split}
\lefteqn{\iint h(\xmb[0:n + 1], z_{0:n + 1}) \, \untargmb{0:n + 1}(\rmd \xmb[0:n + 1]) \, \bdpart{n + 1}(\xmb[0:n + 1], \rmd z_{0:n + 1})} \\
&= \iint \untarg{0:n}(\rmd {\chunk{z}{0}{n}}) \, \mbjt{n}({\chunk{z}{0}{n}}, \rmd \xmb[0:n ])   \\
& \hspace{20mm} \times \iint h(\xmb[0:n + 1], z_{0:n + 1}) \, \uk{n}(z_n, \rmd z_{n + 1}) \,  \mkmb{n}[z_{n + 1}](\xmb[n], \rmd \xmb[n + 1])  \\
&= \iint h(\xmb[0:n + 1], z_{0:n + 1}) \, \untarg{0:n + 1}(\rmd z_{0:n + 1}) \, \mbjt{n + 1}(z_{0:n + 1}, \rmd \xmb[0:n + 1]),
\end{split}
\]
which establishes \eqref{eq:induction:step}.

\subsection{Proof of \Cref{thm:unbiasedness}}
\label{sec:proof:unbiasedness}
First, define, for $m \in \nset$,
\begin{equation} \label{eq:def:P}
\pk{m} : \yspmb{m} \times \yfdmb{m + 1} \ni (\ymb[m], A) \mapsto \int
\, \mkmb{m}(\xmb[m|m], \rmd \xmb[m + 1]) \, \ck{m}(\ymb[m], \xmb[m + 1], A).
\end{equation}
For any given initial distribution $\pinit \in \probmeas(\yfdmb{0})$, let $\canlaw[\pkl]{\pinit}$ be the distribution of the canonical Markov chain induced by the Markov kernels $\{ \pk{m} \}_{m \in \nset}$ and the initial distribution $\pinit$. By abuse of notation we write, for $\initmb \in \probmeas(\xfdmb{0})$, $\canlaw[\pkl]{\initmb}$ instead of $\canlaw[\pkl]{\pinit[\initmb]}$, where we have defined the extension $\pinit[\initmb](A) = \int \1_A(\koskimat{\N} \xmb[0]) \, \initmb(\rmd \xmb_0)$, $A \in \yfdmb{0}$. We preface the proof of \Cref{thm:unbiasedness} by some technical lemmas and a proposition.
\begin{lemma} \label{lem:key:identity:untarg:affine}
For all $n \in \nset$ and $(f_{n + 1}, \tilde{f}_{n + 1}) \in \bmf(\xfd{n + 1})^2$,
$$
 \untarg{n + 1}(f_{n + 1} \rk{n + 1} \af{n + 1} + \tilde{f}_{n + 1}) = \untarg{n}\{ \uk{n} f_{n + 1} \rk{n} \af{n} + \uk{n}(\afterm{n} f_{n + 1} + \tilde{f}_{n + 1}) \}.
$$
\end{lemma}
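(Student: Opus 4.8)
The plan is to reduce the identity to a single application of the one-step backward decomposition together with the forward recursion \eqref{eq:backward:forward:recursion}. The crucial device is the one-step reversal identity
\[
\untarg{n + 1}(\rmd x_{n + 1}) \, \bk{n, \targ{n}}(x_{n + 1}, \rmd x_n) = \untarg{n}(\rmd x_n) \, \uk{n}(x_n, \rmd x_{n + 1}),
\]
which follows directly from the definition \eqref{def:backward:kernel} of the backward kernel and the marginal relation $\untarg{n + 1} = \untarg{n} \uk{n}$: the normalising denominator of $\bk{n, \targ{n}}$ cancels against the $\refm{n+1}$-density of $\untarg{n + 1}$, leaving $\ud{n}(x_n, x_{n + 1}) \, \untarg{n}(\rmd x_n) \, \refm{n+1}(\rmd x_{n + 1})$ on both sides. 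This is precisely the infinitesimal form of \Cref{prop:backward:decomposition}.

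First I would expand the left-hand side by inserting the recursion \eqref{eq:backward:forward:recursion}, namely $\rk{n + 1} \af{n + 1} = \bk{n, \targ{n}}(\rk{n} \af{n} + \afterm{n})$, giving
\[
\untarg{n + 1}(f_{n + 1} \rk{n + 1} \af{n + 1}) = \iint \untarg{n + 1}(\rmd x_{n + 1}) \, \bk{n, \targ{n}}(x_{n + 1}, \rmd x_n) \, f_{n + 1}(x_{n + 1}) \left[ \rk{n} \af{n}(x_n) + \afterm{n}(x_n, x_{n + 1}) \right].
\]
Applying the reversal identity then rewrites the integrating measure $\untarg{n + 1}(\rmd x_{n + 1}) \, \bk{n, \targ{n}}(x_{n + 1}, \rmd x_n)$ as $\untarg{n}(\rmd x_n) \, \uk{n}(x_n, \rmd x_{n + 1})$.

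Second I would split the resulting double integral into two pieces. In the piece carrying $\rk{n} \af{n}(x_n)$ the remaining integrand is independent of $x_n$ apart from this factor, so integrating out $x_{n + 1}$ produces $\uk{n} f_{n + 1}(x_n)$ and yields $\untarg{n}[(\uk{n} f_{n + 1}) (\rk{n} \af{n})]$; in the piece carrying $\afterm{n}(x_n, x_{n + 1})$ the inner $x_{n + 1}$-integral is $\uk{n}(\afterm{n} f_{n + 1})(x_n)$, where $\afterm{n}(x_n, \cdot) f_{n + 1}(\cdot)$ is read as a function on $\xsp{n + 1}$ parametrised by $x_n$. The term $\untarg{n + 1} \tilde{f}_{n + 1} = \untarg{n} \uk{n} \tilde{f}_{n + 1}$ is immediate from $\untarg{n + 1} = \untarg{n} \uk{n}$. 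Collecting the three contributions and merging the last two by linearity of $\uk{n}$, i.e. $\uk{n}(\afterm{n} f_{n + 1}) + \uk{n} \tilde{f}_{n + 1} = \uk{n}(\afterm{n} f_{n + 1} + \tilde{f}_{n + 1})$, gives exactly the claimed right-hand side.

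The only genuinely routine parts are the two Fubini exchanges and the density computation behind the reversal identity. I expect the main point requiring care to be notational rather than substantive: correctly reading $\uk{n} f_{n + 1} \rk{n} \af{n}$ as the pointwise product $(\uk{n} f_{n + 1}) \cdot (\rk{n} \af{n})$ of two functions on $\xsp{n}$, and keeping track that $\afterm{n}$ depends on \emph{both} $x_n$ and $x_{n + 1}$, so that it must remain inside the $\uk{n}$-integral rather than being pulled out as a factor.
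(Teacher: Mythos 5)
Your proposal is correct and follows essentially the same route as the paper: the paper's central step is exactly your one-step reversal identity, stated there as the equality $\iint \varphi(x_{n:n+1}) \, \untarg{n}(\rmd x_n) \, \uk{n}(x_n, \rmd x_{n + 1}) = \iint \varphi(x_{n:n+1}) \, \untarg{n+1}(\rmd x_{n+1}) \, \bk{n, \targ{n}}(x_{n+1}, \rmd x_{n})$ for arbitrary test functions $\varphi$, which is then applied to $\varphi(x_{n:n+1}) = f_{n+1}(x_{n+1})\{\rk{n}\af{n}(x_n) + \afterm{n}(x_{n:n+1})\}$ after inserting the recursion \eqref{eq:backward:forward:recursion}, with the $\tilde{f}_{n+1}$ term handled via $\untarg{n+1} = \untarg{n}\uk{n}$ just as you do. No gaps; your cancellation argument for the reversal identity (including the implicit cancellation of the normalising constant between $\untarg{n}$ and $\targ{n}$) is sound.
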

\begin{proof}
Pick arbitrarily $\varphi \in \bmf(\xfd{n:n+1})$ and write, using definition \eqref{eq:bckwd-kernels} and the fact that $\uk{n}$ has a transition density, 
\begin{align}
    \lefteqn{\iint \varphi(x_{n:n+1}) \, \untarg{n}(\rmd x_n) \, \uk{n}(x_n, \rmd x_{n + 1})} \hspace{10mm} \nonumber \\
    &= \iint \varphi(x_{n:n+1}) \untarg{n}[\ud{n}(\cdot, x_{n + 1})] \, \refm{n + 1}(\rmd x_{n + 1}) \, \frac{\untarg{n}(\rmd x_n) \ud{n}(x_n, x_{n + 1})}{\untarg{n}[\ud{n}(\cdot, x_{n + 1})]} \nonumber \\
    &= \iint \varphi(x_{n:n+1}) \, \untarg{n+1}(\rmd x_{n+1}) \, \bk{n, \targ{n}}(x_{n+1}, \rmd x_{n}). \label{eq:key_relation_lemma2} 
\end{align}

    Now, by \eqref{eq:backward:forward:recursion} it holds that 
    \begin{equation*}
        \rk{n+1} \af{n+1}(x_{n+1}) = \int \bk{n, \targ{n}} (x_{n+1}, \rmd x_{n}) \left( \afterm{n}(x_{n:n+1}) + \int \af{n}(x_{0:n}) \, \rk{n}(x_{n}, \rmd x_{0:n-1}) \right);
    \end{equation*}
    therefore, by applying \eqref{eq:key_relation_lemma2} with
    \begin{equation*}
        \varphi(x_{n:n+1}) \eqdef f_{n+1}(x_{n+1}) \left( \afterm{n}(x_{n:n+1}) + \int \af{n}(x_{0:n}) \, \rk{n}(x_{n}, \rmd x_{0:n-1})\right)
    \end{equation*}
    we obtain that
    \begin{align*}
    \untarg{n + 1}(f_{n + 1} \rk{n + 1} \af{n + 1}) &= \iint \varphi(x_{n:n+1}) \, \untarg{n + 1}(\rmd x_{n + 1}) \, \bk{n, \targ{n}} (x_{n+1}, \rmd x_{n}) \\
    &= \iint \varphi(x_{n:n+1}) \, \untarg{n}(\rmd x_n) \, \uk{n}(x_n, \rmd x_{n + 1}) \\
    &= \untarg{n}(\uk{n} f_{n + 1} \rk{n} \af{n} + \uk{n}\afterm{n} f_{n + 1}).
    \end{align*}
    Now the proof is concluded by noting that since $\untarg{n+1} = \untarg{n} \uk{n}$, $\untarg{n + 1}\tilde{f}_{n + 1} = \untarg{n}\uk{n}\tilde{f}_{n + 1}$.
\end{proof}

\begin{lemma}
\label{lem:unbiasedness-1}
For every $n \in \nsetpos$, $h_n \in \bmf(\yfd{n})$, and $\initmb \in \probmeas(\xfdmb{0})$ it holds that
\[
\cPE[\initmb][\pkl]{h_n(\bpartmb{n})}{\epartmb{0|0},\dots,\epartmb{n|n}}=
\ckjt{n} h_n(\epartmb{0|0},\dots,\epartmb{n|n}), \quad \canlaw[\pkl]{\initmb}\mbox{-a.s.}
\]
\end{lemma}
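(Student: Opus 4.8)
The plan is to verify the stated identity through the defining property of conditional expectation. Since $\ckjt{n} h_n(\epartmb{0|0},\dots,\epartmb{n|n})$ is, by construction, a measurable function of the conditioning variables, it suffices to check that for every $g \in \bmf(\xfdmb{0:n})$,
\begin{equation} \label{eq:plan:test}
\canlawexp[\pkl]{\initmb}\!\left[ h_n(\bpartmb{n}) \, g(\epartmb{0|0},\dots,\epartmb{n|n}) \right] = \canlawexp[\pkl]{\initmb}\!\left[ \ckjt{n} h_n(\epartmb{0|0},\dots,\epartmb{n|n}) \, g(\epartmb{0|0},\dots,\epartmb{n|n}) \right].
\end{equation}
Expanding the left-hand side against the law $\pinit[\initmb] \tensprod \pk{0} \tensprod \cdots \tensprod \pk{n-1}$ of the canonical chain and substituting, for each $m$, the factorisation $\pk{m}(\ymb[m], \cdot) = \int \mkmb{m}(\xmb[m|m], \rmd \xmb[m + 1]) \, \ck{m}(\ymb[m], \xmb[m+1], \cdot)$ from \eqref{eq:def:P} turns the integral into an iterated integral of interleaved $\mkmb{m}$ and $\ck{m}$ factors, which I will reorganise into the product on the right.

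Two structural facts drive the reorganisation. First, inspecting \eqref{eq:def:ck} (and $\koskimat{\N} = \Id_\N \tensprod (0,1)^\intercal$ at time zero) shows that $\ck{m}(\ymb[m], \xmb[m+1], \cdot)$ attaches to each output path the endpoint $x_{m+1}^i$; hence the endpoints $\epartmb{m+1|m+1}$ of $\bpartmb{m+1}$ coincide, $\canlaw[\pkl]{\initmb}$-a.s., with the cloud $\xmb[m+1]$ drawn by $\mkmb{m}$, while $\epartmb{0|0} = \xmb[0]$. Second, the transition $\mkmb{m}(\xmb[m|m], \cdot)$ depends on $\bpartmb{m}$ only through its endpoints $\xmb[m|m] = \epartmb{m|m}$, and not through the attached paths or statistics. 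Consequently $g(\epartmb{0|0},\dots,\epartmb{n|n})$ may be replaced by $g(\xmb[0], \dots, \xmb[n])$, a function of the successive clouds generated by the $\mkmb{m}$ factors alone.

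Using these facts together with Tonelli's theorem (all factors are nonnegative (sub)probability kernels and $h_n, g$ are bounded), I will peel off the $\pk{m}$ factors one at a time from the outside, at each step separating the autonomous endpoint transition $\mkmb{m}(\xmb[m], \rmd \xmb[m+1])$ from the path/statistic update $\ck{m}(\bpartmb{m}, \xmb[m+1], \cdot)$. This regroups the integral as the forward many-body chain $\initmb \tensprod \mkmb{0} \tensprod \cdots \tensprod \mkmb{n-1}$ acting on $\xmb[0:n]$, integrated against $g(\xmb[0:n])$ and against the remaining chained $\ck{m}$ factors; the latter is precisely $\int \ck{0}(\koskimat{\N} \xmb[0], \xmb[1], \rmd \ymb[1]) \prod_{m=1}^{n-1} \ck{m}(\ymb[m], \xmb[m+1], \rmd \ymb[m+1]) \, h_n(\ymb[n]) = \ckjt{n} h_n(\xmb[0:n])$ by definition \eqref{eq:def:ckjt}. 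Recognising the resulting expression as the right-hand side of \eqref{eq:plan:test} completes the argument.

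The main obstacle is the bookkeeping in this reorganisation, namely making rigorous that conditioning on all endpoints isolates exactly the backward/\PARIS\ randomness carried by the $\ck$ chain. The delicate point is that, although $\xmb[m+1]$ is generated by $\mkmb{m}$ \emph{before} $\bpartmb{m+1}$ is formed, the endpoint of $\bpartmb{m+1}$ is \emph{deterministically} this same $\xmb[m+1]$; this consistency is what lets the endpoint process be treated as a self-contained Markov chain and the $\ck{m}$ factors detach cleanly. Handling it via the stepwise peeling (a finite induction on the number of remaining $\pk{m}$ factors) keeps all measurability issues within standard Tonelli manipulations.
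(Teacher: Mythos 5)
Your proposal is correct and follows essentially the same route as the paper's proof: both verify the conditional-expectation identity by testing against a bounded function of the endpoint clouds, expanding the chain law of $\canlaw[\pkl]{\initmb}$ via the factorisation $\pk{m} = \mkmb{m}\,\ck{m}$ from \eqref{eq:def:P}, using that the endpoints of $\bpartmb{m}$ coincide with the clouds generated by the $\mkmb{m}$ factors, and regrouping by Fubini to recognise the chained $\ck{m}$ factors as $\ckjt{n}$ per \eqref{eq:def:ckjt}. The stepwise peeling you describe is just a more explicit account of the same reorganisation the paper performs in one display.
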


\begin{proof}
Pick arbitrarily $v_n \in \bmf(\xfd{0:n})$. We show that
\begin{equation}
\label{eq:proof-objective}
\E_{\initmb}^{\pkl}[ v_n(\epartmb{0|0},\dots,\epartmb{n|n}) h_n(\bpartmb{n}) ]=
\E_{\initmb}^{\pkl}[ v_n(\epartmb{0|0},\dots,\epartmb{n|n}) \ckjt{n} h_n(\epartmb{0|0},\dots,\epartmb{n|n})], 
\end{equation}
from which the claim follows. Using the definition \eqref{eq:def:P}, the left-hand side
of the previous identity may be rewritten as
\begin{align*}
\lefteqn{\idotsint \pinit[\initmb](\rmd \ymb[0]) \prod_{m=0}^{n-1} \pk{m}(\ymb[m], \rmd \ymb[m+1]) \, h_n(\ymb[n])
v_n(\xmb[0|0],\dots,\xmb[n|n])} \hspace{10mm} \\
&=\idotsint \initmb( \rmd \xmb[0|0]) \prod_{m=0}^{n-1} \mkmb{m}(\xmb[m|m], \rmd \xmb[m+1]) \, \ck{0}(\koskimat{\N} \xmb[0|0], \xmb[1], \rmd \ymb[1])  \\
&\hspace{10mm} \times \prod_{m=0}^{n-1}  \ck{m}(\ymb[m], \xmb[m+1], \rmd \ymb[m+1]) \, h_n(\ymb[n]) v_n(\xmb[0|0],\dots,\xmb[n|n]) \\
&=\idotsint \initmb( \rmd \xmb[0]) \prod_{m=0}^{n-1} \mkmb{m}(\xmb[m], \rmd \xmb[m+1]) \, \ck{0}(\koskimat{\N} \xmb[0], \xmb[1], \rmd \ymb[1])  \\
&\hspace{10mm} \times \prod_{m=0}^{n-1}  \ck{m}(\ymb[m], \xmb[m+1], \rmd \ymb[m+1]) \, h_n(\ymb[n]) v_n(\xmb[0],\dots,\xmb[n]).
\end{align*}
Thus, we may conclude the proof by using the definition \eqref{eq:def:ckjt} of $\ckjt{n}$  together with Fubini's theorem.
\end{proof}

\begin{lemma}
\label{lem:unbiasedness-2}
For every $n \in \nsetpos$ and $h_n \in \bmf(\yfd{n})$ it holds that
\[
\E_{\initmb} \left[\left(\prod_{m=0}^{n-1} \potmb{m}(\epartmb{m|m})\right) h_n(\bpartmb{n})\right]
= \int  \untargmb{0:n} \ckjt{n}(\rmd  \ymb[n]) \, h_n(\ymb[n]).
\]
\end{lemma}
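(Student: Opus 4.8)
The plan is to deduce the identity from \Cref{lem:unbiasedness-1} by taking the test function there to be the product of the many-body potentials evaluated at the endpoint particles, and then to recognise the resulting integral as one against the unnormalised many-body measure $\untargmb{0:n}$.

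First, I would apply the conditional identity of \Cref{lem:unbiasedness-1} with the bounded measurable function $v_n(\xmb[0], \dots, \xmb[n]) \eqdef \prod_{m=0}^{n-1} \potmb{m}(\xmb[m])$ on $\xfdmb{0:n}$. Since both $v_n(\epartmb{0|0}, \dots, \epartmb{n|n})$ and $\ckjt{n} h_n(\epartmb{0|0}, \dots, \epartmb{n|n})$ are measurable with respect to the endpoints $(\epartmb{m|m})_{m=0}^n$, multiplying the conditional identity by $v_n(\epartmb{0|0},\dots,\epartmb{n|n})$, taking $\canlawexp[\pkl]{\initmb}$-expectations, and invoking the tower property yields
\[
\canlawexp[\pkl]{\initmb}\left[ \left( \prod_{m=0}^{n-1} \potmb{m}(\epartmb{m|m}) \right) h_n(\bpartmb{n}) \right] = \canlawexp[\pkl]{\initmb}\left[ \left( \prod_{m=0}^{n-1} \potmb{m}(\epartmb{m|m}) \right) \ckjt{n} h_n(\epartmb{0|0}, \dots, \epartmb{n|n}) \right].
\]

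Second, I would identify the joint law of the endpoint process $(\epartmb{m|m})_{m=0}^n$ under $\canlaw[\pkl]{\initmb}$. By the definition \eqref{eq:def:P} of $\pk{m}$, each step first draws $\xmb[m+1] \sim \mkmb{m}(\xmb[m|m], \cdot)$ and then applies $\ck{m}$, which retains precisely these particles as the new endpoints $\epartmb{m+1|m+1}$; hence the endpoint marginal is exactly the standalone many-body chain with kernels $\{\mkmb{m}\}$ and initial distribution $\initmb$. Substituting this law and using $\potmb{m} \mkmb{m} = \ukmb{m}$ together with the definition \eqref{eq:unnormalized:F-K:measures} of $\untargmb{0:n} = \initmb \tensprod \ukmb{0} \tensprod \cdots \tensprod \ukmb{n-1}$ gives
\[
\canlawexp[\pkl]{\initmb}\left[ \left( \prod_{m=0}^{n-1} \potmb{m}(\epartmb{m|m}) \right) \ckjt{n} h_n(\epartmb{0|0}, \dots, \epartmb{n|n}) \right] = \int \untargmb{0:n}(\rmd \xmb[0:n]) \, \ckjt{n} h_n(\xmb[0:n]),
\]
and a final application of Fubini's theorem together with the definition \eqref{eq:def:ckjt} of $\ckjt{n}$ rewrites the right-hand side as $\int \untargmb{0:n} \ckjt{n}(\rmd \ymb[n]) \, h_n(\ymb[n])$, which is the claim.

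The main obstacle is the second step: one must argue carefully that the endpoint marginal coincides with the plain many-body chain, i.e. that the $\ck{m}$ update leaves the freshly generated endpoints $\epartmb{m+1|m+1}$ untouched and that $\prod_{m} \potmb{m}(\epartmb{m|m})$ depends on the many-body trajectory only through these endpoints. Both facts are immediate from the construction, but rather than reasoning abstractly about marginal kernels I would set $v_n = \prod_m \potmb{m}$ directly in the explicit unrolled integral representation already obtained in the proof of \Cref{lem:unbiasedness-1}, where the $\mkmb{m}$ factors appear verbatim, and absorb the potentials into these factors to form the kernels $\ukmb{m}$; this makes the identification of $\untargmb{0:n}$ completely mechanical and avoids any delicate measurability bookkeeping.
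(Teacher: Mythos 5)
Your proposal is correct and follows essentially the same route as the paper: apply \Cref{lem:unbiasedness-1} with $v_n = \prod_{m=0}^{n-1}\potmb{m}$ via the tower property, then write the resulting expectation as the explicit integral $\idotsint \initmb(\rmd \xmb[0]) \prod_{m=0}^{n-1} \potmb{m}(\xmb[m]) \, \mkmb{m}(\xmb[m], \rmd \xmb[m+1]) \, \ckjt{n} h_n(\xmb[0:n])$ and absorb the potentials into the kernels to recognise $\untargmb{0:n}\ckjt{n}$. The point you flag about the endpoint marginal coinciding with the plain many-body chain is exactly what the paper's second displayed equality uses implicitly, and your suggestion to verify it on the unrolled integral from the proof of \Cref{lem:unbiasedness-1} is the same mechanism the paper relies on.
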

\begin{proof}
The claim of the lemma is a direct implication of \Cref{lem:unbiasedness-1}; indeed, by applying the tower property and the latter we obtain
\begin{align*}
\lefteqn{\E_{\initmb}^{\pkl} \left[\left(\prod_{m=0}^{n-1} \potmb{m}(\epartmb{m|m})\right) h_n(\bpartmb{n})\right]} \hspace{15mm} \\
&= \E_{\initmb}^{\pkl} \left[\left(\prod_{m=0}^{n-1} \potmb{m}(\epartmb{m|m})\right) \ckjt{n} h_n(\epartmb{0|0},\dots,\epartmb{n|n}) \right]\\
&= \idotsint \initmb(\rmd \xmb[0])  \prod_{m=0}^{n-1} \potmb{m}(\xmb[m]) \,  \mkmb{m}(\xmb[m],\rmd \xmb[m+1]) \, \ckjt{n} h_n(\xmb[0:n]) \\
&= \int  \untargmb{0:n} \ckjt{n}(\rmd  \ymb[n]) \, h_n(\ymb[n]).
\end{align*}
\end{proof}

\begin{proposition} \label{prop:unbiasedness:affine}
For all $n \in \nsetpos$, $(\N, \M) \in (\nsetpos)^2$, and $(f_n, \tilde{f}_n) \in \bmf(\xfd{n})^2$,
$$
\int \untargmb{0:n} \ckjt{n}(\rmd \ymb[n]) \,
\left( \frac{1}{\N} \sum_{i = 1}^\N \{\statl_n^i f_n(x_{n|n}^i) + \tilde{f}_n(x_{n|n}^i)\} \right)
= \untarg{n}(f_n \rk{n} \af{n} + \tilde{f}_n).
$$
\end{proposition}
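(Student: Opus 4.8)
The plan is to prove the identity by induction on $n$, keeping the pair $(f_n, \tilde f_n)$ arbitrary so that the inductive hypothesis is strong enough to be propagated through one forward step. Write $\Theta_n(f_n, \tilde f_n) \eqdef \int \untargmb{0:n} \ckjt{n}(\rmd \ymb[n]) \, \N^{-1} \sum_{i=1}^\N \{ \statl_n^i f_n(x_{n|n}^i) + \tilde f_n(x_{n|n}^i) \}$ for the left-hand side; the goal is $\Theta_n(f_n, \tilde f_n) = \untarg{n}(f_n \rk{n} \af{n} + \tilde f_n)$. I would take the base case $n = 0$, which is immediate: with the conventions $\af{0} = 0$ (empty sum), $\statl_0^i = 0$, $x_{0|0}^i = x_0^i$, and $\ckjt{0}(\xmb[0], \cdot) = \delta_{\koskimat{\N} \xmb[0]}$, the left-hand side collapses to $\int \initmb(\rmd \xmb[0]) \, \N^{-1} \sum_i \tilde f_0(x_0^i) = \init \tilde f_0 = \untarg{0} \tilde f_0$, matching the right-hand side since $\rk{0} \af{0} = 0$.

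For the inductive step I would exploit the recursive structure of both ingredients, $\untargmb{0:n + 1} = \untargmb{0:n} \tensprod \ukmb{n}$ and $\ckjt{n + 1}(\xmb[0:n + 1], \cdot) = \int \ckjt{n}(\xmb[0:n], \rmd \ymb[n]) \, \ck{n}(\ymb[n], \xmb[n + 1], \cdot)$, so that $\Theta_{n + 1}(f_{n + 1}, \tilde f_{n + 1})$ becomes an integral against $\untargmb{0:n}(\rmd \xmb[0:n]) \, \ukmb{n}(\xmb[n], \rmd \xmb[n + 1]) \, \ckjt{n}(\xmb[0:n], \rmd \ymb[n])$ of the innermost quantity $\int \ck{n}(\ymb[n], \xmb[n + 1], \rmd \ymb[n + 1]) \, \N^{-1} \sum_i \{ \statl_{n + 1}^i f_{n + 1}(x_{n + 1}^i) + \tilde f_{n + 1}(x_{n + 1}^i) \}$, where I use that the path endpoint satisfies $x_{n + 1|n + 1}^i = x_{n + 1}^i$. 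The first task is this innermost integral: since each of the $\M$ ancestor draws contributing to $\statl_{n + 1}^i$ is identically distributed according to the backward weights $\ud{n}(x_{n|n}^\ell, x_{n + 1}^i) / \sum_{\ell'} \ud{n}(x_{n|n}^{\ell'}, x_{n + 1}^i)$, linearity of expectation gives a single backward-weighted term, independent of $\M$, namely $\int \ck{n}(\ymb[n], \xmb[n + 1], \rmd \ymb[n + 1]) \, \statl_{n + 1}^i = \sum_\ell \ud{n}(x_{n|n}^\ell, x_{n + 1}^i)(\statl_n^\ell + \afterm{n}(x_{n|n}^\ell, x_{n + 1}^i)) / \sum_{\ell'} \ud{n}(x_{n|n}^{\ell'}, x_{n + 1}^i)$.

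The crucial second task is to integrate the resulting expression against $\ukmb{n}(\xmb[n], \cdot) = \potmb{n}(\xmb[n]) \mkmb{n}(\xmb[n], \cdot)$, using $x_{n|n}^\ell = x_n^\ell$, $\mkmb{n}(\xmb[n], \cdot) = \pd{n}(\occm(\xmb[n]))^{\tensprod \N}$, and $\pd{n}(\occm(\xmb[n])) = \occm(\xmb[n]) \uk{n} / \potmb{n}(\xmb[n])$ with $\potmb{n}(\xmb[n]) = \occm(\xmb[n]) \pot{n}$. Two cancellations drive the computation: the factor $\potmb{n}(\xmb[n])$ cancels against the normalisation of $\pd{n}$, and the backward normalising constant $\sum_{\ell'} \ud{n}(x_n^{\ell'}, x_{n + 1})$ cancels against the mixture weight $\N^{-1} \sum_\ell \ud{n}(x_n^\ell, x_{n + 1})$ coming from $\occm(\xmb[n]) \uk{n}$. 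This collapses the self-normalised double sum back into $\N^{-1} \sum_\ell \int \uk{n}(x_n^\ell, \rmd x_{n + 1}) \{ (\statl_n^\ell + \afterm{n}(x_n^\ell, x_{n + 1})) f_{n + 1}(x_{n + 1}) + \tilde f_{n + 1}(x_{n + 1}) \}$, which rearranges to $\N^{-1} \sum_\ell \{ \statl_n^\ell (\uk{n} f_{n + 1})(x_{n|n}^\ell) + (\uk{n}(\afterm{n} f_{n + 1} + \tilde f_{n + 1}))(x_{n|n}^\ell) \}$. This is exactly the $\Theta_n$-functional at $\hat f_n \eqdef \uk{n} f_{n + 1}$ and $\hat g_n \eqdef \uk{n}(\afterm{n} f_{n + 1} + \tilde f_{n + 1})$, both in $\bmf(\xfd{n})$, so the inductive hypothesis yields $\Theta_{n + 1}(f_{n + 1}, \tilde f_{n + 1}) = \untarg{n}\{ \uk{n} f_{n + 1} \rk{n} \af{n} + \uk{n}(\afterm{n} f_{n + 1} + \tilde f_{n + 1}) \}$, and \Cref{lem:key:identity:untarg:affine} identifies this with $\untarg{n + 1}(f_{n + 1} \rk{n + 1} \af{n + 1} + \tilde f_{n + 1})$, closing the induction.

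I expect the main obstacle to be precisely the $\ukmb{n}$-integration of the third paragraph, where the many-body bookkeeping must be carried out exactly: one has to track the interplay between the forward mutation kernel $\mkmb{n}$, the potential $\potmb{n}$, and the self-normalised backward weights, and verify that the normalising constants cancel cleanly rather than leaving an $\M$- or $\N$-dependent residue. This is in essence a transport identity in the spirit of \Cref{lem:transport:id}. If one prefers to isolate this difficulty, the whole argument can equivalently be run in the process picture by first rewriting the left-hand side of \Cref{prop:unbiasedness:affine}, via \Cref{lem:unbiasedness-2}, as $\E_{\initmb}^{\pkl}[(\prod_{m = 0}^{n - 1} \potmb{m}(\epartmb{m|m})) \, \N^{-1} \sum_i \{ \statl_n^i f_n(\epart{n|n}{i}) + \tilde f_n(\epart{n|n}{i}) \}]$ and conditioning successively on the generations of the many-body chain.
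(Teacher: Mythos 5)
Your proposal is correct and follows essentially the same route as the paper's proof: induction on $n$ with the affine pair $(f_n,\tilde f_n)$ kept arbitrary, collapse of the $\M$ identically distributed backward draws into a single self-normalised sum, cancellation of $\potmb{n}$ and the backward normalising constant when integrating against $\ukmb{n}$, and closure via \Cref{lem:key:identity:untarg:affine}. The only difference is presentational — you integrate the kernels $\untargmb{0:n}\ckjt{n}$ directly, whereas the paper phrases the same computation as conditional expectations under $\canlaw[\pkl]{\initmb}$ with the product of potentials carried via \Cref{lem:unbiasedness-2}, which is exactly the equivalent "process picture" you mention at the end.
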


\begin{proof}
Applying \Cref{lem:unbiasedness-2} yields
\begin{multline}
\int \untargmb{0:n} \ckjt{n}(\rmd \ymb[n]) \,
\left( \frac{1}{\N} \sum_{i = 1}^\N \{\statl_n^i f_n(x_{n|n}^i) + \tilde{f}_n(x_{n|n}^i)\} \right) \\
= \E_{\initmb}^{\pkl}\left[ \left( \prod_{m = 0}^{n - 1} \potmb{m}(\epartmb{m|m}) \right) \frac{1}{\N} \sum_{i = 1}^\N \{\stat{n}{i} f_n(\epart{n|n}{i}) + \tilde{f}_n(\epart{n|n}{i})\} \right] \, .
\end{multline}
In the following we will use repeatedly the following filtrations. Let
$\partfiltbar{n} \eqdef \sigma( \{ \bpartmb{m} \}_{m = 0}^n)$
be the $\sigma$-field generated by the output of the {\PARIS} (\Cref{alg:paris}) during the first $n$ iterations. In addition, let $\partfilt{n} \eqdef \partfiltbar{n - 1} \vee \sigma(\epartmb{n|n})$.

We proceed by induction. Thus, assume that the statement of the proposition holds true for a given $n \in \nsetpos$ and consider, for arbitrarily chosen $(f_{n + 1}, \tilde{f}_{n + 1}) \in \bmf(\xfd{n + 1})^2$,
\begin{multline*}
\CPE[\initmb][\pkl]{\left( \prod_{m = 0}^n \potmb{m}(\epartmb{m|m}) \right) \frac{1}{\N}
\sum_{i = 1}^\N \{\stat{n + 1}{i} f_{n + 1}(\epart{n + 1|n + 1}{i}) + \tilde{f}_{n + 1}(\epart{n + 1|n + 1}{i})\}}{\partfiltbar{n}} \\
= \left( \prod_{m = 0}^n \potmb{m}(\epartmb{m|m}) \right) \cPE[\initmb][\pkl]{\stat{n + 1}{1} f_{n + 1}(\epart{n + 1|n + 1}{1}) + \tilde{f}_{n + 1}(\epart{n + 1|n + 1}{1})}{\partfiltbar{n}} \,,
\end{multline*}
where we used that the variables $\{\stat{n + 1}{i} f_{n + 1}(\epart{n + 1|n + 1}{i}) + \tilde{f}_{n + 1}(\epart{n + 1|n + 1}{i})\}_{i = 1}^\N$ are conditionally i.i.d. given $\partfiltbar{n}$. Note that, by symmetry,
\begin{align}
\CPE[\initmb][\pkl]{\stat{n + 1}{1}}{\partfilt{n + 1}} &= \int \ck{n}(\bpartmb{n}, \epartmb{n + 1|n + 1}, \rmd \ymb[n + 1]) \, \statl_{n + 1}^1 \nonumber \\
&= \idotsint \left( \prod_{j = 1}^\M \sum_{\ell = 1}^\N \frac{\ud{n}(\epart{n|n}{\ell}, \epart{n + 1|n + 1}{1})}{\sum_{\ell' = 1}^\N \ud{n}(\epart{n|n}{\ell'}, \epart{n + 1|n + 1}{1})} \delta_{(\epart{n|n}{\ell}, \stat{n}{\ell})}  (\rmd \tilde{x}_n^{1,j}, \rmd \tilde{\statl}_n^{1,j}) \right) \nonumber \\
& \hspace{50mm} \times \frac{1}{\M} \sum_{j = 1}^\M \left( \tilde{\statl}_n^{1,j} + \afterm{n}(\tilde{x}_n^{1,j}, \epart{n + 1|n + 1}{1}) \right) \nonumber \\
&= \sum_{\ell = 1}^\N \frac{\ud{n}(\epart{n|n}{\ell}, \epart{n + 1|n + 1}{1})}{\sum_{\ell' = 1}^\N \ud{n}(\epart{n|n}{\ell'}, \epart{n + 1|n + 1}{1})} \left( \stat{n}{\ell} + \afterm{n}(\epart{n|n}{\ell}, \epart{n + 1|n + 1}{1}) \right). \label{eq:cond:exp:beta}
\end{align}
Thus, using the tower property,
\begin{multline*}
\CPE[\initmb][\pkl]{\stat{n + 1}{1} f_{n + 1}(\epart{n + 1|n + 1}{1})}{\partfiltbar{n}} \\
= \int \pd{n}(\occm(\epartmb{n|n}))(\rmd x_{n + 1}) \, f_{n + 1}(x_{n + 1}) \sum_{\ell = 1}^\N \frac{\ud{n}(\epart{n|n}{\ell}, x_{n + 1})}{\sum_{\ell' = 1}^\N \ud{n}(\epart{n|n}{\ell'}, x_{n + 1})} \left( \stat{n}{\ell} + \afterm{n}(\epart{n|n}{\ell}, x_{n + 1}) \right),
\end{multline*}
and consequently, using definition \eqref{eq:def:pd:mapping},
\begin{align}
&\left( \prod_{m = 0}^n \potmb{m}(\epartmb{m|m}) \right) \CPE[\initmb][\pkl]{\stat{n + 1}{1} f_{n + 1}(\epart{n + 1|n + 1}{1})}{\partfiltbar{n}} \nonumber \\
&=\left( \prod_{m = 0}^{n - 1} \potmb{m}(\epartmb{m|m}) \right) \int \frac{1}{\N} \sum_{i = 1}^\N \ud{n}(\epart{n|n}{i}, x_{n + 1}) \nonumber \\
&\hspace{6mm} \times f_{n + 1}(x_{n + 1}) \sum_{\ell = 1}^\N \frac{\ud{n}(\epart{n|n}{\ell}, x_{n + 1})}{\sum_{\ell' = 1}^\N \ud{n}(\epart{n|n}{\ell'}, x_{n + 1})} \left( \stat{n}{\ell} + \afterm{n}(\epart{n|n}{\ell}, x_{n + 1}) \right) \, \refm{n+1}(\rmd x_{n + 1}) \nonumber \\
&= \left( \prod_{m = 0}^{n - 1} \potmb{m}(\epartmb{m|m}) \right) \frac{1}{\N} \sum_{\ell = 1}^\N \left( \stat{n}{\ell} \uk{n} f_{n + 1} (\epart{n|n}{\ell}) + \uk{n}(\afterm{n} f_{n + 1})(\epart{n|n}{\ell}) \right). \nonumber
\end{align}
Thus, applying the induction hypothesis,
\begin{align}
\lefteqn{\E_{\initmb}^{\pkl} \left[ \left( \prod_{m = 0}^n \potmb{m}(\epartmb{m|m}) \right) \frac{1}{\N} \sum_{i = 1}^\N \stat{n + 1}{i} f_{n + 1}(\epart{n + 1|n + 1}{i}) \right]} \nonumber \\
&= \E_{\initmb}^{\pkl} \left[ \left( \prod_{m = 0}^{n - 1} \potmb{m}(\epartmb{m|m}) \right) \frac{1}{\N} \sum_{\ell = 1}^\N \left( \stat{n}{\ell} \uk{n} f_{n + 1} (\epart{n|n}{\ell}) +\uk{n}(\afterm{n} f_{n + 1})(\epart{n|n}{\ell}) \right) \right] \nonumber \\
&= \untarg{n} \left( \uk{n} f_{n + 1} \rk{n} \af{n} + \uk{n}(\afterm{n} f_{n + 1}) \right). \label{eq:unbiasedness:first:term}
\end{align}
In the same manner, it can be shown that
\begin{equation} \label{eq:unbiasedness:second:term}
\E_{\initmb}^{\pkl} \left[ \left( \prod_{m = 0}^n \potmb{m}(\epartmb{m|m}) \right) \frac{1}{\N} \sum_{i = 1}^\N  \tilde{f}_{n + 1}(\epart{n + 1|n + 1}{i}) \right] = \untarg{n} \uk{n} \tilde{f}_{n + 1}.
\end{equation}
Now, by (\ref{eq:unbiasedness:first:term}--\ref{eq:unbiasedness:second:term}) and \Cref{lem:key:identity:untarg:affine},
\begin{align}
\lefteqn{\hspace{-30mm}\E_{\initmb}^{\pkl} \left[ \left( \prod_{m = 0}^n \potmb{m}(\epartmb{m|m}) \right) \frac{1}{\N} \sum_{i = 1}^\N \{\stat{n + 1}{i} f_{n + 1}(\epart{n + 1|n + 1}{i}) + \tilde{f}_{n + 1}(\epart{n + 1|n + 1}{i})\} \right]} \nonumber \\
&\hspace{-30mm} = \untarg{n} \left( \uk{n} f_{n + 1} \rk{n} \af{n} + \uk{n}(\afterm{n} f_{n + 1} + \uk{n} \tilde{f}_{n + 1}) \right) \nonumber \\
&\hspace{-30mm} = \untarg{n + 1}(f_{n + 1} \rk{n + 1} \af{n + 1} + \tilde{f}_{n + 1}), \nonumber
\end{align}
which shows that the claim of the proposition holds at time $n + 1$.

It remains to check the base case $n = 0$, which holds trivially true as $\statmb{0} = \boldsymbol{0}$, $\rk{0} \af{0} = 0$ by convention, and the initial particles $\epartmb{0|0}$ are drawn from $\initmb$. This completes the proof.
\end{proof}

\begin{proof}[Proof of  \Cref{thm:unbiasedness}]
The identity $\int \targmb{0:n}(\rmd \xmb[0:n]) \, \ckjt{n}(\xmb[0:n], \rmd \statlmb_n) \, \occm(\statlmb[n])(\operatorname{id}) =  \targ{0:n} h_n$ follows immediately by letting $f_n \equiv 1$ and $\tilde{f}_n \equiv 0$ in \Cref{prop:unbiasedness:affine} and using that $\untargmb{0:n}(\xspmb{0:n}) = \untarg{0:n}(\xsp{0:n})$. Moreover, applying  \Cref{thm:duality} yields
\begin{align}
\int \targ{0:n} \mbjt{n} \ckjt{n}(\rmd \statlmb_n) \, \occm(\statlmb[n])(\operatorname{id}) &= \iint \targ{0:n}(\rmd {\chunk{z}{0}{n}}) \, \mbjt{n}({\chunk{z}{0}{n}}, \rmd \xmb[0:n]) \int \ckjt{n}(\xmb[0:n], \rmd \statlmb_n) \, \occm(\statlmb[n])(\operatorname{id}) \nonumber \\
&= \iint \targmb{0:n}(\rmd \xmb[0:n]) \, \bdpart{n}(\xmb[0:n], \rmd {\chunk{z}{0}{n}}) \int \ckjt{n}(\xmb[0:n], \rmd \statlmb_n) \, \occm(\statlmb[n])(\operatorname{id}) \nonumber \\
&= \int \targmb{0:n} \ckjt{n}(\rmd \statlmb_n) \, \occm(\statlmb[n])(\operatorname{id}). \nonumber
\end{align}
Finally, the first identity holds true since $\gibbs{n}$ leaves $\targ{0:n}$ invariant.
\end{proof} 

\begin{funding}
    The work of J.~Olsson is supported by the Swedish Research Council, Grant 2018-05230.
\end{funding}

\bibliographystyle{plain}
\bibliography{cmo2022}


\appendix

\section{Numerical results}
\label{sec:numerics}
Given measurable spaces $(\xsp{},\xfd{})$ and $(\zsp{}, \zfd{})$, a \emph{hidden Markov model} (HMM) is a bivariate stochastic process $\{(X_m, Z_m)\}_{m \in \nset}$ where $(X_m, Z_m)$ takes on values in $(\xsp{} \times \zsp{}, \xfd{} \tensprod \zfd{})$ and where $\{X_n\}_{n \in \nset}$ (the \emph{state sequence}) is a (possibly unnormalised) Markov chain which is only partially observed through the observation process $\{Z_m\}_{m \in \nset}$. We denote by $\{\mk{n}\}_{n \in \nset}$ and $\chi$ the Markov transition kernels and initial distribution of $\{X_n\}_{n \in \nset}$, respectively. Conditionally on the unobserved state sequence, the observations are assumed to be independent and such that the conditional distribution of each $Z_m$ only depends on the corresponding state $X_m$ and has a density $\pot{m}(X_m, \cdot)$ with respect to some dominating measure. HMMs are used today in a variety of scientific and engineering disciplines; see \cite{andrieu:doucet:2002,cappe:moulines:ryden:2005,chopin2020introduction}. Inference in HMMs typically involves the computation of conditional distributions of unobserved states given observations. Of particular interest are the sequences of \emph{filter distributions}, where the filter at time $m \in \nset$, denoted $\eta_m$, is defined as the conditional distribution of $X_m$ given $\chunk{Z}{0}{m} \eqdef (Z_{0}, \ldots, Z_{m})$, and the \emph{joint-smoothing distributions}, where the joint-smoothing distribution at time $m$, denoted $\eta_{0:m}$, is defined as the joint conditional distribution of the states $\chunk{X}{0}{m} = (X_0, \ldots, X_m)$ given $\chunk{Z}{0}{m}$. Consequently, $\eta_m$ is the marginal of $\eta_{0:m}$ with respect to the last state $X_m$. Given a sequence $\{z_m\}_{m \in \nset}$ of fixed observations, it can be shown (see  \cite[Section~3]{cappe:moulines:ryden:2005} for details)  that $\{\eta_{0:m}\}_{m \in \nset}$ forms a Feynman--Kac model as defined in \Cref{sec:introduction} with Markov kernels $\{\mk{m}\}_{m \in \nset}$ potential functions $\pot{m} \eqdef \pot{}(\cdot, z_m)$, $m \in \nset$, on $\xsp{}$.

We now numerically assess the proposed algorithm on two different models, namely (i) a linear Gaussian state-space model (for which the filter and joint-smoothing distribution flows are available in a closed form) and (ii) a stochastic volatility model proposed in \cite{hull:white:1987}.


\paragraph{Linear Gaussian state-space model (LGSSM).}
We first consider a linear Gaussian HMM
\begin{equation}
X_{m + 1} = A X_m + Q \epsilon_{m + 1}, \quad Z_m = B X_m + R \zeta_m, \quad m \in \nset, 
\end{equation}
where $\{\epsilon_m\}_{m \in \nsetpos}$ and $\{\zeta_m\}_{m \in \nset}$ are sequences of independent standard normally distributed random variables. The coefficients $A$, $Q$, $B$, and $R$ are assumed to be known and equal to $0.97$, $0.60$, $0.54$, and $0.33$, respectively. Using this parameterisation, we generate, by simulation, a record $z_{0:n}$ of observations with $n = 1000$.

In this setting, we aim at computing smoothed expectations of the state one-lag covariance
$\af{n}(\chunk{x}{0}{n}) \eqdef \sum_{m=0}^{n-1} x_m x_{m + 1}$. In the linear Gaussian case, the \emph{disturbance smoother} (see \cite[Algorithm 5.2.15]{cappe:moulines:ryden:2005}) provides the exact values of the smoothed sufficient statistics, which allows us to study the bias of the estimator for a given computational budget $\totalbudget$.

To illustrate the bias bounds provided by \Cref{thm:bias:bound,theo:bias-mse-rolling}, we calculate the bias after different numbers $k$ of iterations on the basis of $1000$ independent replicates of the algorithm for $\N \in \{10, 25, 50, 100\}$ with $\totalbudget = N k = 500$.
\Cref{fig:LGSSM:bias_evolution} displays the evolution of the bias of the roll-out estimator (given by \eqref{eq:rolling-estimator}) in the cases where $\ki_0=\ki-1$ and $\ki_0 = \lfloor \ki / 2 \rfloor$.
Moreover, for comparison the estimated bias of the {\PARIS} algorithm (\Cref{alg:paris}) with $\N = \totalbudget$ particles is shown.
For each $N$, we fit a curve of type $e^{ak+b}$ to the produced bias estimates to illustrate the exponentially (with $k$) decreasing {\PPG} bias bound established by \Cref{thm:bias:bound,theo:bias-mse-rolling}.
We also note that the most economical strategy in terms of smallest possible bias for a given computational cost $kN$ is to use a large $N$ and a small $k$.
Indeed, we find that for $N \in \{50, 100\}$, only two iterations of the {\PPG} are sufficient to be less biased than the reference estimator {\PARIS} with $\N = \totalbudget$ particles.
\begin{figure}[h]
    \centering
    \begin{subfigure}{0.45\textwidth}
        \includegraphics[width=\textwidth]{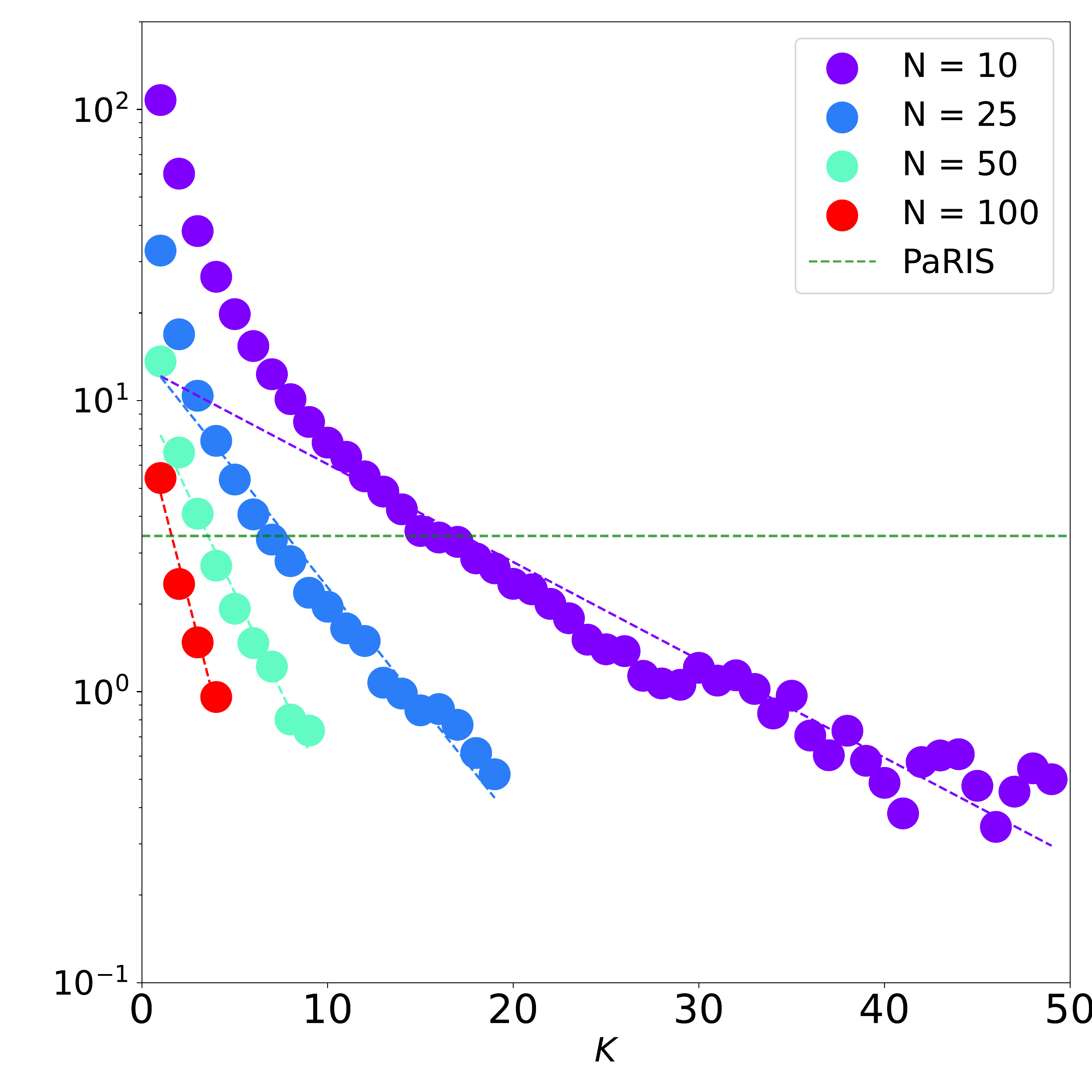}
        \label{fig:LGSSM:dim_1:bias_evolution_no_rollout}
    \end{subfigure}
    \begin{subfigure}{0.45\textwidth}
        \includegraphics[width=\textwidth]{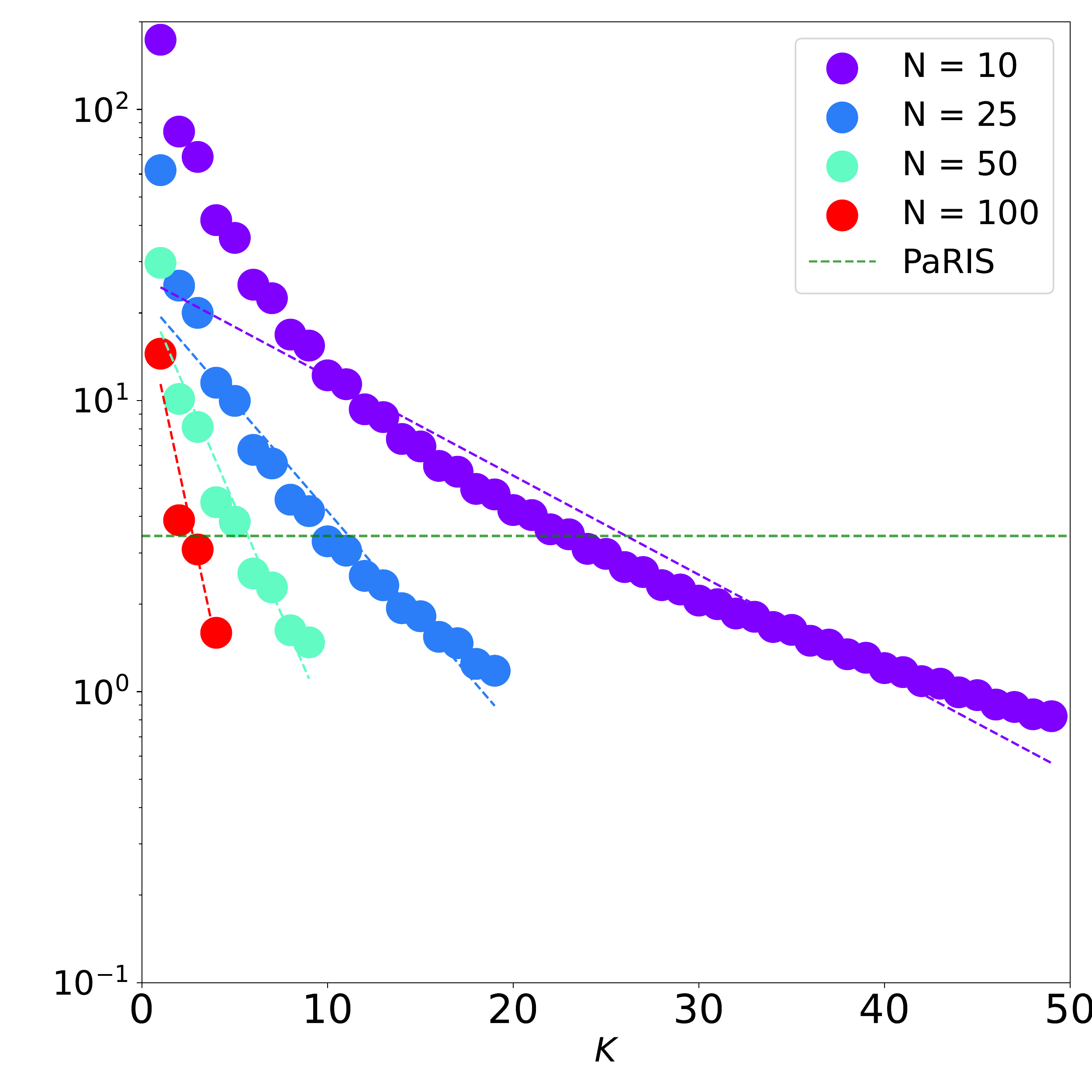}
        \label{fig:LGSSM:dim_1:bias_evolution_rollout_05}
    \end{subfigure}
    \caption{Output of the {\PPG} roll-out estimator for the LGSSM. The curves describe the evolution of the bias with increasing $\ki$ for different particle sample sizes $\N$. The left and right panels correspond to $\ki_0=\ki-1$ and $\ki_0=\lfloor\ki/2\rfloor$, respectively.}
    \label{fig:LGSSM:bias_evolution}
\end{figure}
The preceding remark is also supported by \Cref{fig:LGSSM:comparison_component_2},
which displays, for three different total budgets $\totalbudget$, the distribution of estimates of $\targ{0:n} h_{n}$ using the {\PARIS} as well as three different configurations of the {\PPG} corresponding to $\ki \in \{2, 4, 10\}$ (and $N = \totalbudget / \ki$) with $\ki_0 = \ki  - 1$.
The reference value is shown as a red-dashed line and the mean value of each distribution is shown as a black-dashed line.
Each boxplot is based on $1000$ independent replicates of the corresponding estimator. We observe that in this example, all configurations of the {\PPG} are less biased than the equivalent {\PARIS} estimator.
We also observe that a larger $k$ does not lead to smaller bias in all configurations.
\begin{figure}[h]
    \centering
    \begin{subfigure}{0.32\textwidth}
        \includegraphics[width=\textwidth]{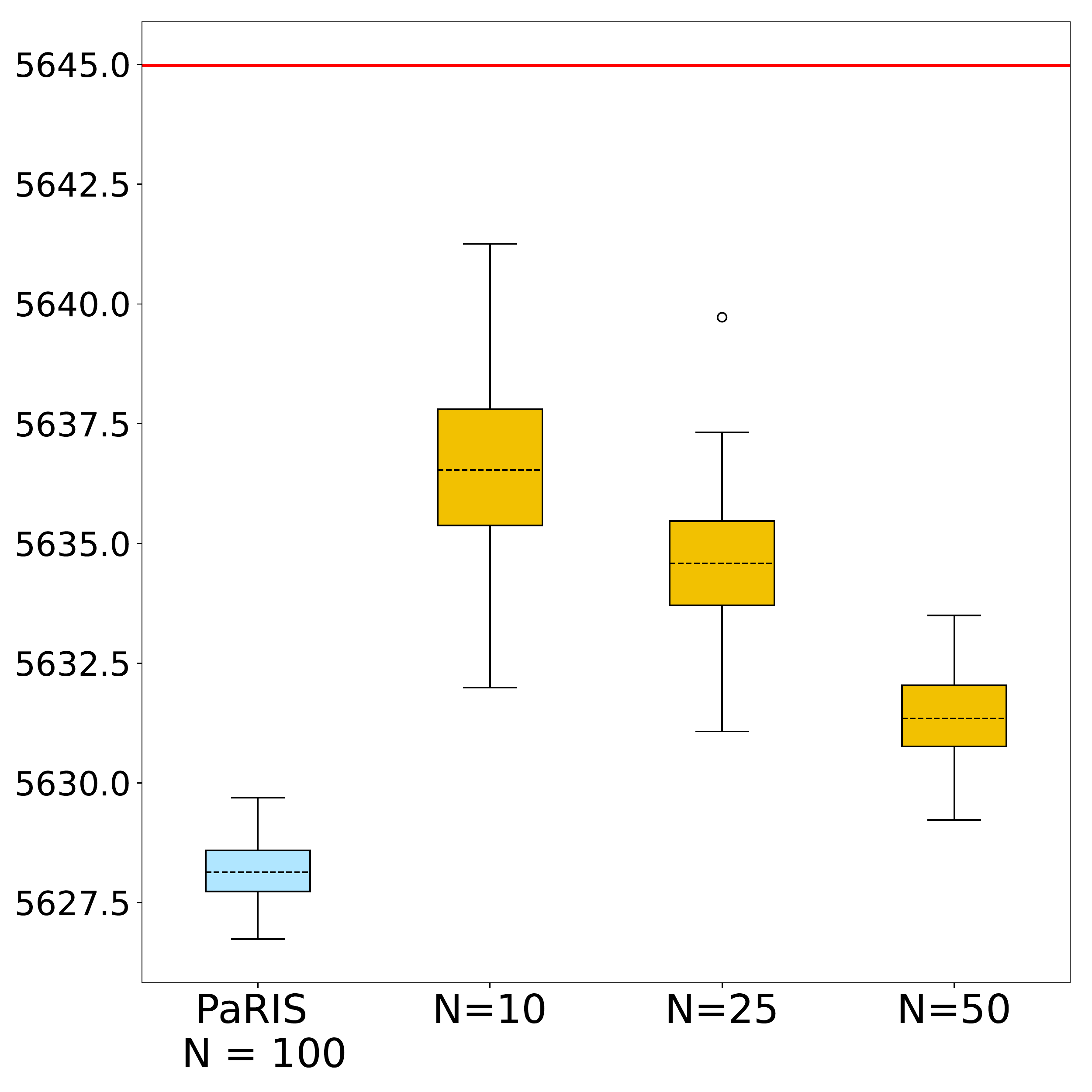}
    \end{subfigure}
    \begin{subfigure}{0.32\textwidth}
        \includegraphics[width=\textwidth]{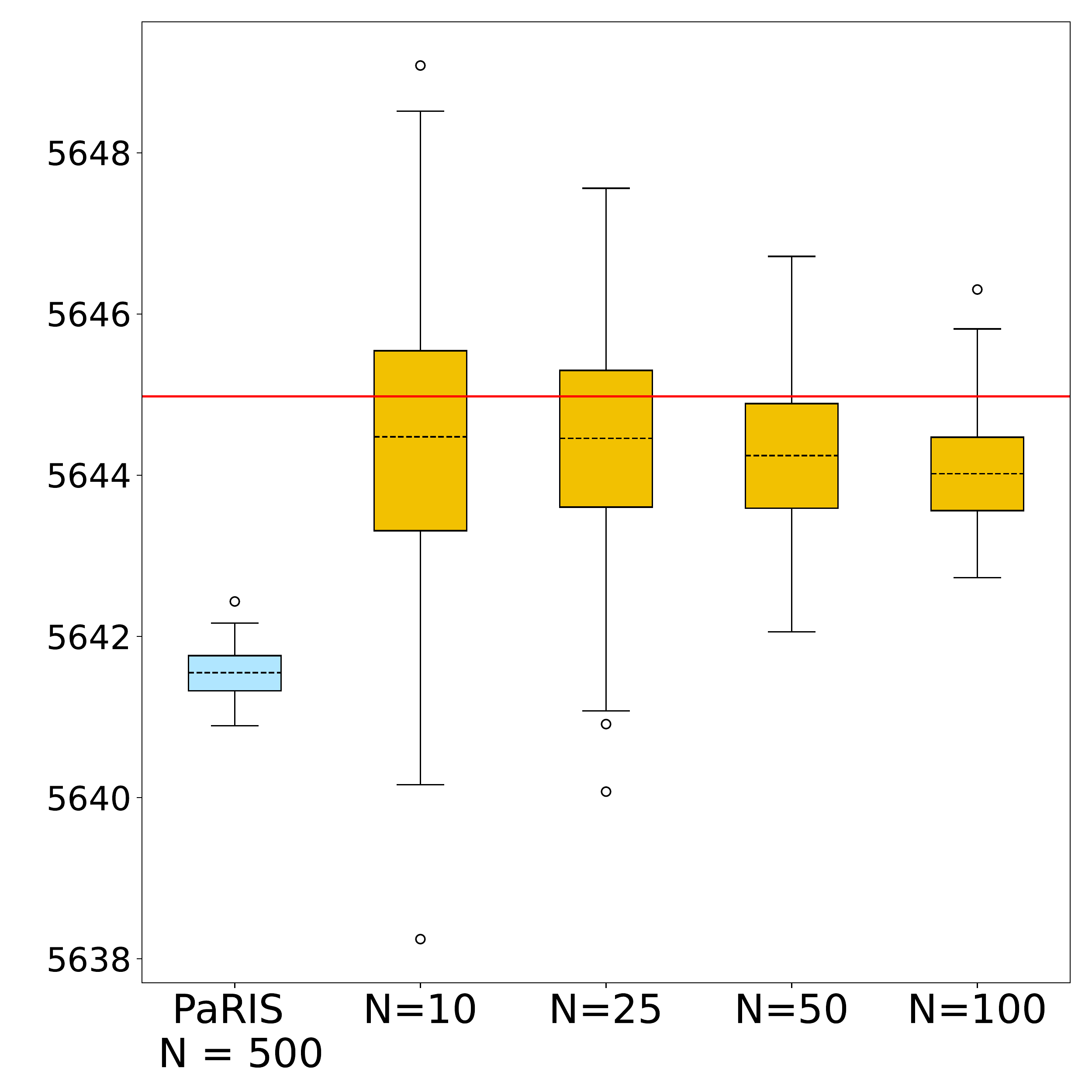}
    \end{subfigure}
    \begin{subfigure}{0.32\textwidth}
        \includegraphics[width=\textwidth]{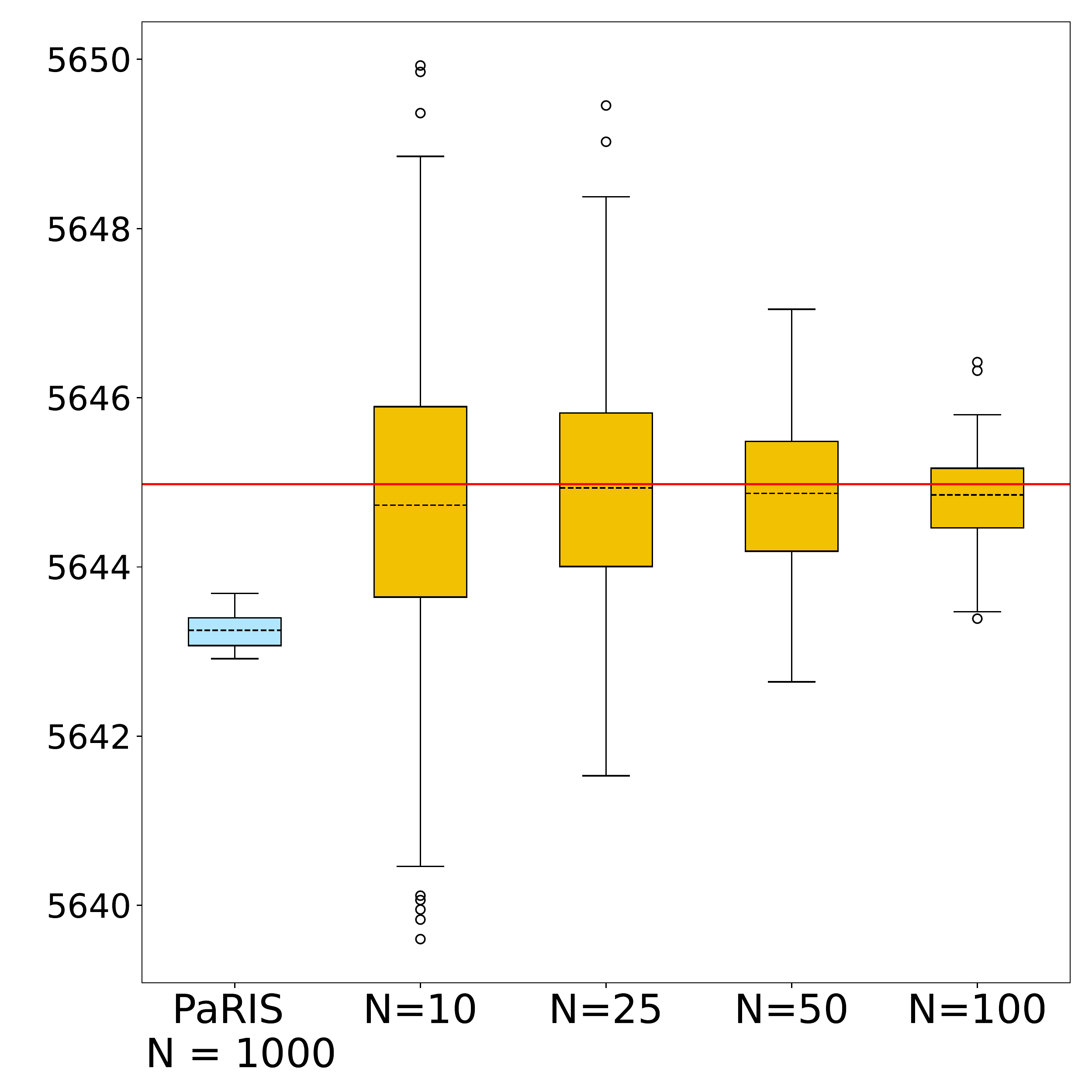}
    \end{subfigure}
    \caption{{\PARIS} and {\PPG} outputs for the LGSSM. The different panels correspond to the different budgets $\totalbudget \in \{100, 500, 1000\}$ and for each panel, yellow boxes correspond to {\PPG} outputs produced using $\ki \in \{10, 4, 2\}$ iterations and $\N \in \{C/10, C/4, C/2\}$ particles.}
    \label{fig:LGSSM:comparison_component_2}
\end{figure}

\paragraph{Stochastic Volatility.}
As a second example, we consider the stochastic volatility model
\begin{equation}
X_{m + 1} = \phi X_m + \sigma_{\epsilon} \epsilon_{m + 1}, \quad Z_m = \beta \exp(X_m /2) \zeta_m, \quad m \in \nset, 
\end{equation}
where $\{ \epsilon_m \}_{m \in \nsetpos}$ and $\{\zeta_m \}_{m \in \nset}$ are as in the previous example and the model parameters $\phi$, $\beta$, and $\sigma_{\epsilon}$, are set to $0.975$, $0.63$, and $0.16$, respectively. Again, an observation record $z_{0:n}$ of length $n = 1000$ is generated through simulation. The reference value is calculated using the {\PARIS} with $\N = 10000$ particles.
Similarly to the LGSSM example, the bias decay with respect to $k$ is shown in \Cref{fig:stovol:bias_evolution} and the comparison with the {\PARIS} algorithm in \Cref{fig:stovol:comparison_component_2}, and the same remarks made for the LGSSM model apply to the stochastic volatility model.
In \Cref{fig:stovol:comparison_component_2}, we observe that for small budgets ($\N=100$), the choice of taking a large number $k$ of steps leads to a higher bias than for the {\PARIS} estimator, indicating that in this case the chain has not yet mixed properly.
\begin{figure}[h]
    \centering
    \begin{subfigure}{0.45\textwidth}
        \includegraphics[width=\textwidth]{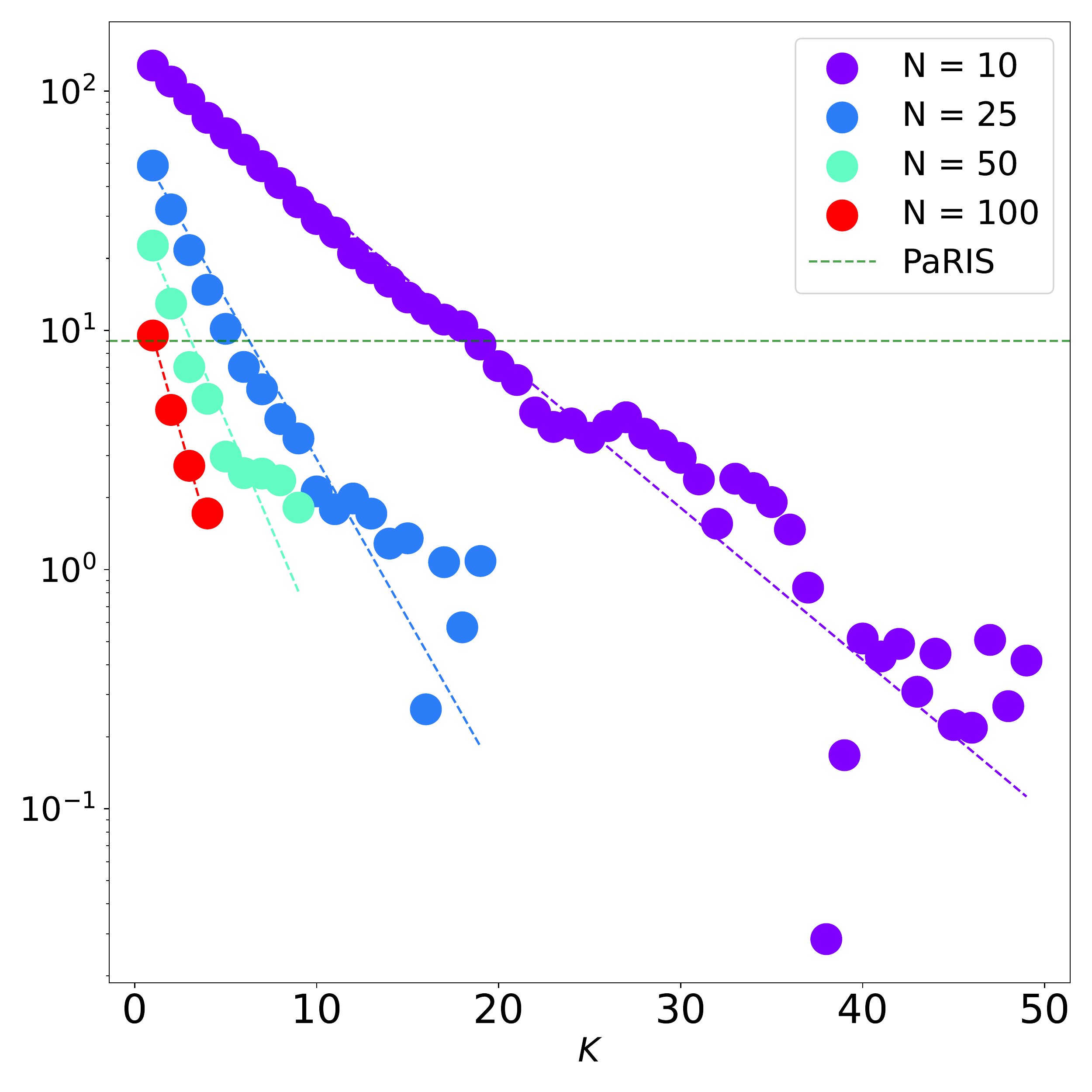}
        \label{fig:stovol:dim_1:bias_evolution_no_rollout}
    \end{subfigure}
    \begin{subfigure}{0.45\textwidth}
        \includegraphics[width=\textwidth]{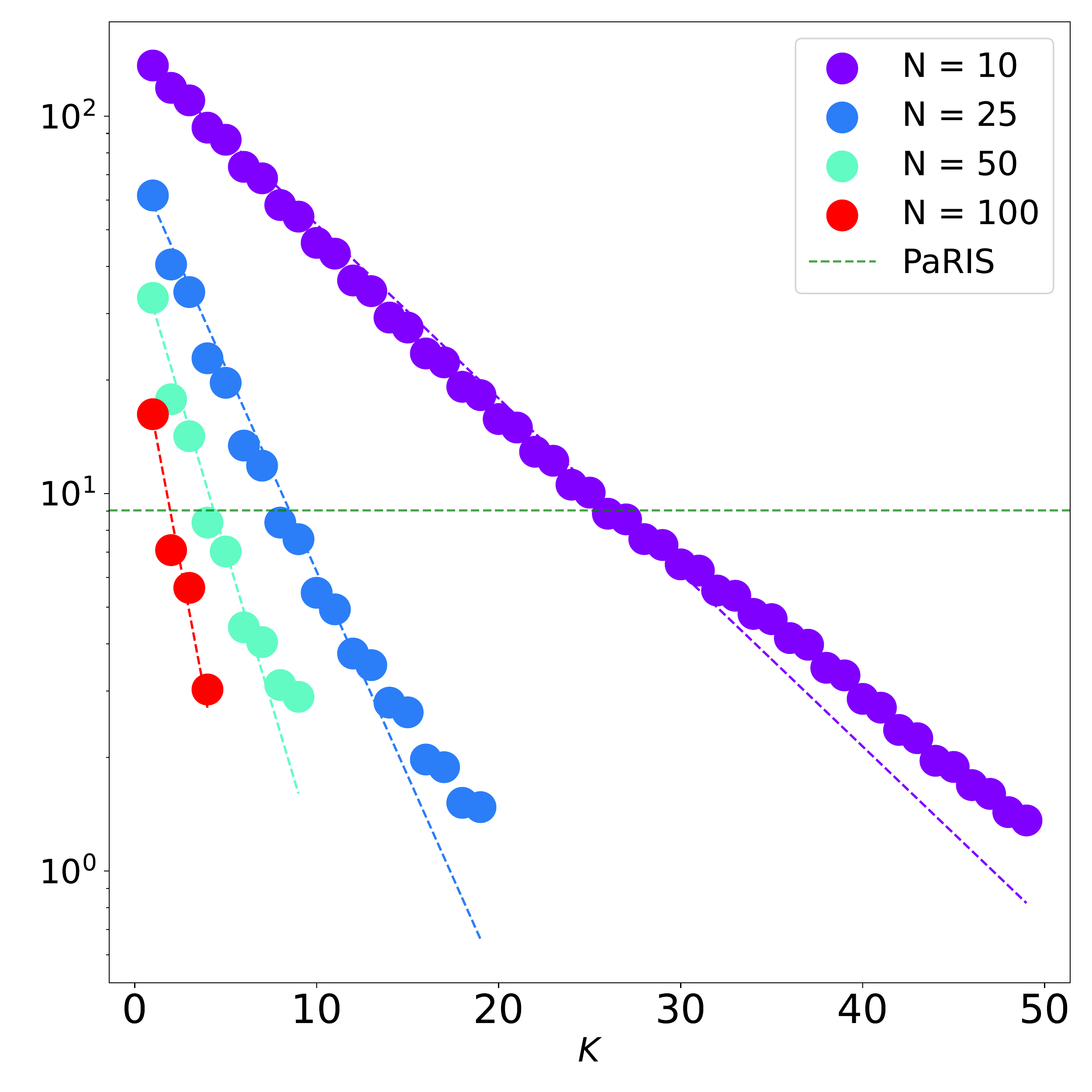}
        \label{fig:stovol:dim_1:bias_evolution_rollout_05}
    \end{subfigure}
    \caption{Output of the {\PPG} roll-out estimator for the stochastic volatility model. The curves describe the evolution of the bias with increasing $\ki$ for different particle sample sizes $\N$. The left and right panels correspond to $\ki_0=\ki-1$ and $\ki_0=\lfloor\ki/2\rfloor$, respectively.}
    \label{fig:stovol:bias_evolution}
\end{figure}
\begin{figure}[h]
    \centering
    \begin{subfigure}{0.32\textwidth}
        \includegraphics[width=\textwidth]{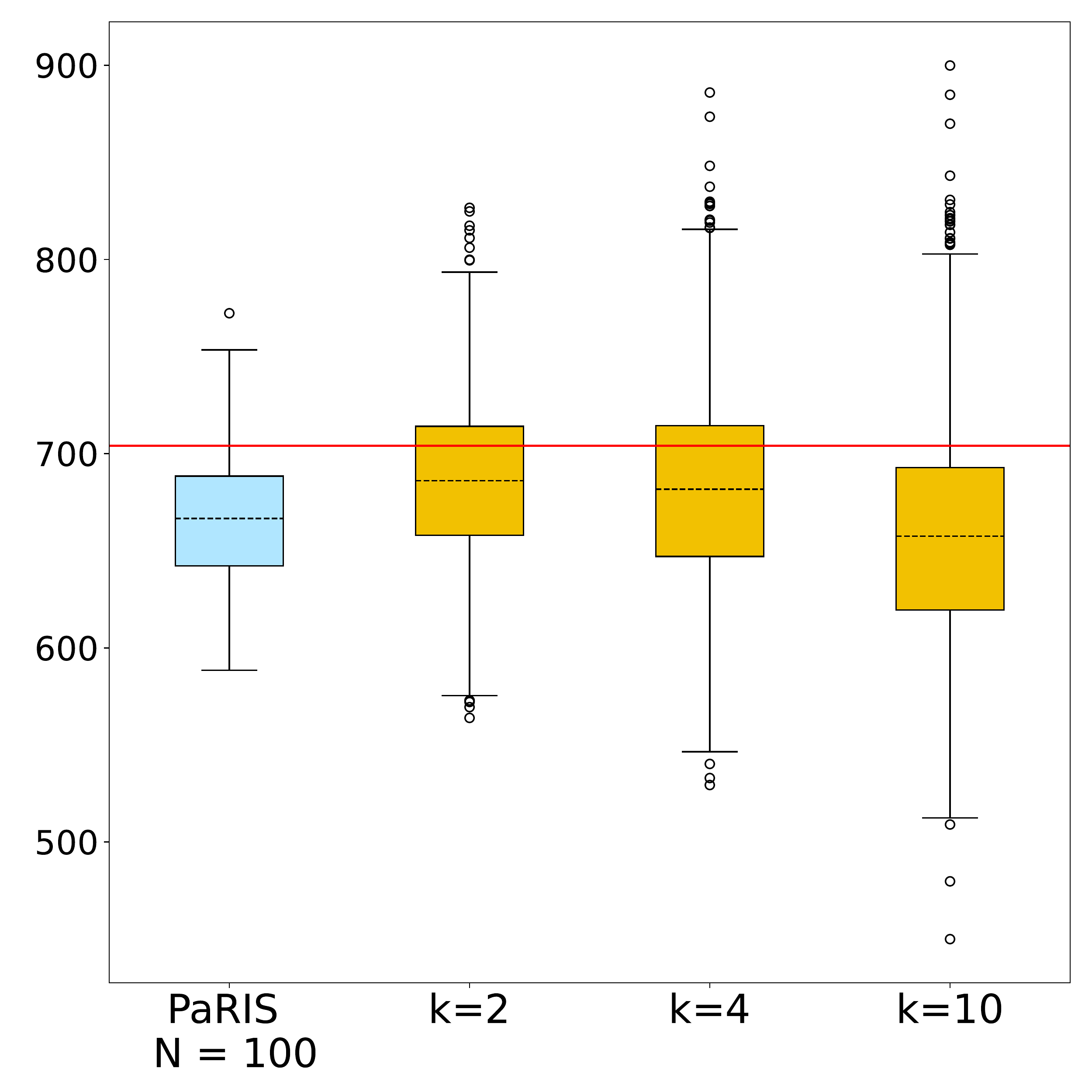}
    \end{subfigure}
    \begin{subfigure}{0.32\textwidth}
        \includegraphics[width=\textwidth]{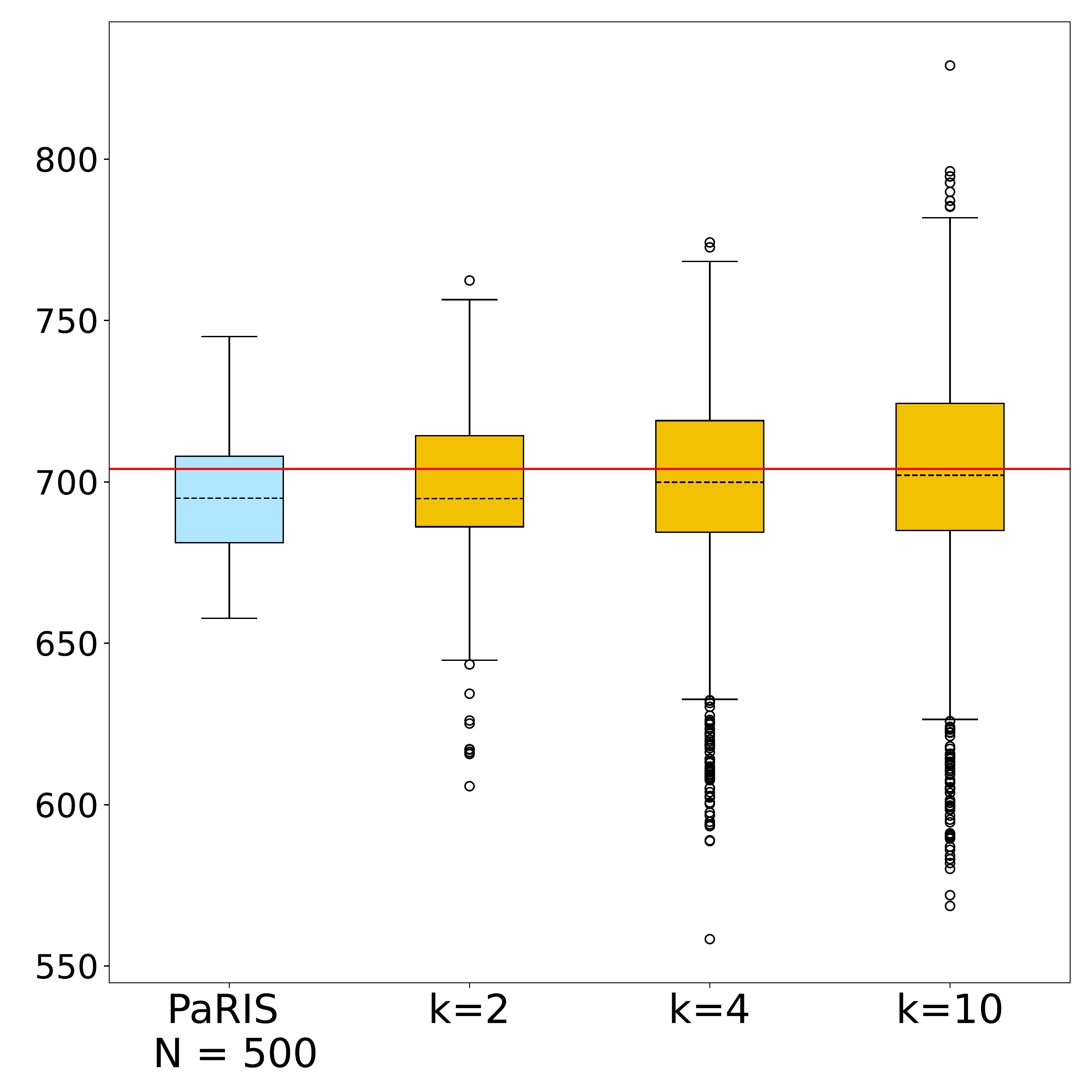}
    \end{subfigure}
    \begin{subfigure}{0.32\textwidth}
        \includegraphics[width=\textwidth]{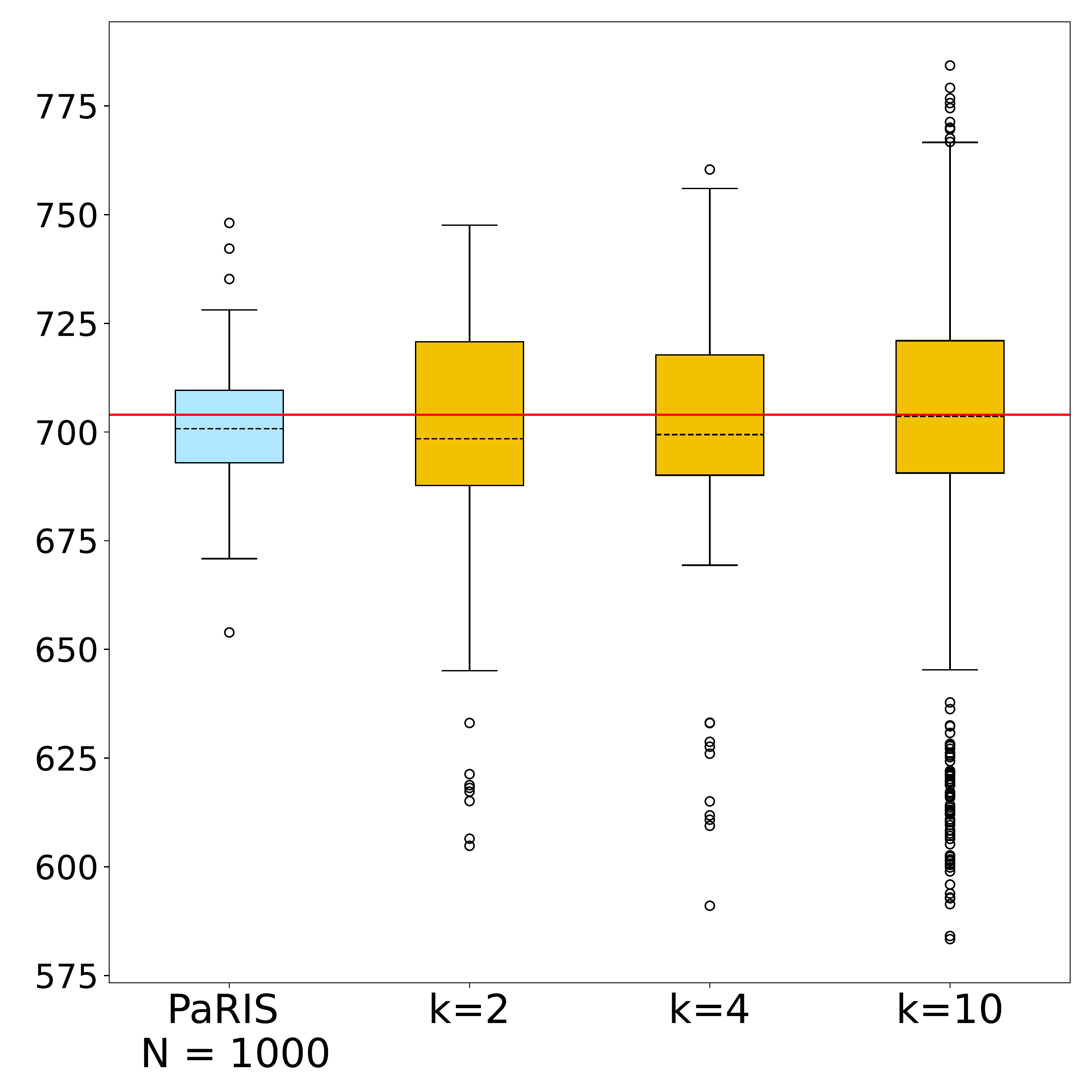}
    \end{subfigure}
    \caption{{\PARIS} and {\PPG} outputs for the stochastic volatility model. The different panels correspond to the different budgets $\totalbudget \in \{100, 500, 1000\}$ and for each panel, yellow boxes correspond to {\PPG} outputs produced using $\ki \in \{2, 4, 10\}$ iterations and $\N \in \{C/2, C/4, C/10\}$ particles.}
    \label{fig:stovol:comparison_component_2}
\end{figure}

\section{Algorithms}
\label{sec:algorithms}
\begin{algorithm}[H]
\SetAlFnt{\small\sf}
\KwData{$\{ (\epart{n}{i}, \stat{n}{i}) \}_{i = 1}^\N$}
 \KwResult{$\{ (\epart{n + 1}{i}, \stat{n + 1}{i}) \}_{i = 1}^\N$}
 draw $\epartmb{n + 1} \sim \mkmb{n}(\epartmb{n}, \cdot)$\;
 \For{$i \gets 1$ \KwTo $\N$}{
draw $ \{ (\eparttd{n}{i, j}, \stattd{n}{i, j}) \}_{j = 1}^\M \sim \left( \sum_{\ell = 1}^\N \frac{\ud{n}(\epart{n}{\ell}, \epart{n + 1}{i})}{\sum_{\ell' = 1}^\N \ud{n}(\epart{n}{\ell'}, \epart{n + 1}{i})} \delta_{(\epart{n}{\ell}, \stat{n}{\ell})} \right)^{\tensprod \M}$\;
set $ \stat{n + 1}{i} \gets \frac{1}{\M} \sum_{j = 1}^\M \left( \stattd{n}{i, j} + \afterm{n}(\eparttd{n}{i, j}, \epart{n + 1}{i}) \right)$\;
}
\caption{\scriptsize{One update of the \PARIS\ algorithm}}
\label{alg:paris}
\end{algorithm}

\begin{algorithm}[H]
\KwData{$\bpartmb{n}$, $\zeta_{n + 1}$}
\KwResult{$\bpartmb{n + 1}$}
draw $\epartmb{n + 1} \sim \mkmb{n}[\zeta_{n + 1}](\epartmb{n|n}, \cdot)$\;
\For{$i \gets 1$ \KwTo $\N$}{
draw $ (\tilde{\xi}_{0:n}^{i, 1}, \stattd{n}{i, 1}) \sim \sum_{\ell = 1}^\N \frac{\ud{n}(\epart{n|n}{\ell}, \epart{n + 1}{i})}{\sum_{\ell' = 1}^\N \ud{n}(\epart{n|n}{\ell'}, \epart{n + 1}{i})} \delta_{\bpart{n}{\ell}}$\;
draw $ \{ (\eparttd{n}{i, j}, \stattd{n}{i, j}) \}_{j = 2}^\M \sim \left( \sum_{\ell = 1}^\N \frac{\ud{n}(\epart{n|n}{\ell}, \epart{n + 1}{i})}{\sum_{\ell' = 1}^\N \ud{n}(\epart{n|n}{\ell'}, \epart{n + 1}{i})} \delta_{(\epart{n|n}{\ell}, \stat{n}{\ell})} \right)^{\tensprod (\M - 1)}$\;
set $\stat{n + 1}{i} \gets \frac{1}{\M} \sum_{j = 1}^\M \left( \stattd{n}{i, j} + \afterm{n}(\eparttd{n}{i, j}, \epart{n + 1}{i}) \right)$\;
set $\epart{0:n + 1|n + 1}{i} \gets (\tilde{\xi}_{0:n}^{i, 1}, \epart{n + 1}{i})$ and $\bpart{n + 1}{i} \gets (\epart{0:n + 1|n + 1}{i}, \stat{n + 1}{i})$\;
}
\vspace{-10pt}
\caption{\scriptsize{One conditional \PARIS\  update}} \label{alg:parisian:Gibbs}
\end{algorithm}

\begin{algorithm}[H]
\KwData{$\chunk{\zeta}{0}{n}$}
\KwResult{$\bpartmb{n}$, $\chunk{\zeta'}{0}{n}$}
draw $\epartmb{0} \sim \initmb[\zeta_0]$\;
set $\bpartmb{0} \gets \{(\epart{0}{i}, 0)\}_{i = 1}^\N$\;
\For{$m \gets 0$ \KwTo $n - 1$}{
set $\bpartmb{m + 1} \gets
\condparis(\bpartmb{m}, \zeta_{m + 1})$\;
}
draw $\zeta_{0:n}' \sim \occm(\epartmb{0:n|n})$\;
\caption{\scriptsize{One iteration of the {\PARIS}ian particle Gibbs (\PPG)}} \label{alg:parisian:particle:Gibbs}
\end{algorithm}

\section{Additional Proofs}
\label{sec:proofs}

\subsection{Proof of \Cref{prop:backward:marginal}}
\label{sec:proof:prop:backward:marginal}
First, note that, by definitions \eqref{eq:def:ck} and \eqref{eq:def:ckjt}, 
\begin{align*} 
H_n(\xmb[0:n]) &\eqdef \int \ckjt{n}(\xmb[0:n], \rmd \ymb[n]) \, \occm(\xmb[0:n|n]) h \\
&= \idotsint \left( \frac{1}{\N} \sum_{j_n = 1}^\N h(x_{0:n-1|n}^{j_n}, x_{n}^{j_n}) \right) 
\\ 
&\quad \times 
\prod_{m = 0}^{n - 1} \prod_{i_{m + 1} = 1}^\N \int \sum_{j_m = 1}^\N \frac{\ud{m}(x_m^{j_m}, x_{m + 1}^{i_{m + 1}})}{\sum_{j'_m = 1}^\N \ud{m}(x_m^{j'_m}, x_{m + 1}^{i_{m + 1}})}
\delta_{x_{0:m|m}^{j_m}}(\rmd x_{0:m | m+1}^{i_{m + 1}}),
\end{align*}
where $x_{0:-1|0}^i = \emptyset$ for all $i \in \intvect{1}{\N}$ by convention.
We will show that for every $k \in \intvect{0}{n}$, $H_{k, n} \equiv H_n$, where
\begin{equation*} \label{eq:induction:hyp}
H_{k, n}(\xmb[0:n]) \eqdef \frac{1}{\N} \sum_{j_n = 1}^\N \cdots \sum_{j_k = 1}^\N \prod_{\ell = k}^{n - 1} \frac{\ud{\ell}(x_\ell^{j_\ell}, x_{\ell + 1}^{j_{\ell + 1}})}{\sum_{j_\ell' = 1}^\N \ud{\ell}(x_\ell^{j'_\ell}, x_{\ell + 1}^{j_{\ell + 1}})}
a_{k,n}(\xmb[0], \ldots, \xmb[k - 1], x_k^{j_k}, \ldots, x_n^{j_n})
\end{equation*}
with 
\begin{multline*}
    a_{k,n}(\xmb[0], \ldots, \xmb[k - 1], x_k^{j_k}, \ldots, x_n^{j_n}) \\ 
    = \int \prod_{m = 0}^{k - 1} \prod_{i_{m + 1} = 1}^\N
\sum_{j_m = 1}^\N \frac{\ud{m}(x_m^{j_m}, x_{m + 1}^{i_{m + 1}})}{\sum_{j'_m = 1}^\N \ud{m}(x_m^{j'_m}, x_{m + 1}^{i_{m + 1}})}
\delta_{x_{0:m|m}^{j_m}}(\rmd x_{0:m|m+1}^{i_{m + 1}}) h(x_{0:k-1|k}^{j_k}, x_{k}^{j_k}, \ldots, x_n^{j_n}). 
\end{multline*}
Since, by convention, $\prod_{\ell = n}^{n - 1} \ldots = 1$, $H_{n, n}(\xmb[0:n]) = \N^{-1} \sum_{j_n = 1}^\N a_{n,n}(\xmb[0], \ldots, \xmb[n - 1], x_n^{j_n})$, 
and we note that 
$H_n \equiv H_{n, n}$.
We now show that $H_{k, n} \equiv H_{k - 1, n}$ for every $k \in \intvect{1}{n}$; for this purpose, note that
\begin{multline*}
    a_{k,n}(\xmb[0], \ldots, \xmb[k - 1], x_k^{j_k}, \ldots, x_n^{j_n}) \\ = \int \prod_{m = 0}^{k - 2} \prod_{i_{m + 1} = 1}^\N \sum_{j_m = 1}^\N \frac{\ud{m}(x_m^{j_m}, x_{m + 1}^{i_{m + 1}})}{\sum_{j'_m = 1}^\N \ud{m}(x_m^{j'_m}, x_{m + 1}^{i_{m + 1}})}
\delta_{x_{0:m|m}^{j_m}}(\rmd x_{0:m|m+1}^{i_{m + 1}}) \\
    \times \int \prod_{i_{k} = 1}^\N
    \sum_{j_{k-1} = 1}^\N \frac{\ud{k-1}(x_{k-1}^{j_{k-1}}, x_{k}^{i_{k}})}{\sum_{j'_{k-1} = 1}^\N \ud{k-1}(x_{k-1}^{j'_{k-1}}, x_{k}^{i_{k}})}
    \delta_{x_{0:k-1|k-1}^{j_{k-1}}}(\rmd x_{0:{k-1}|k}^{i_{k}}) \, h(x_{0:k-1|k}^{j_k}, x_{k}^{j_k}, \ldots, x_n^{j_n}),
\end{multline*}
and since $x_{0:k-1|k-1}^{j_{k-1}} = (x_{0:k-2|k-1}^{j_{k-1}}, x_{k-1}^{j_{k-1}})$, it holds that 
\begin{multline*}
\int \prod_{i_{k} = 1}^\N
\sum_{j_{k-1} = 1}^\N \frac{\ud{k-1}(x_{k-1}^{j_{k-1}}, x_{k}^{i_{k}})}{\sum_{j'_{k-1} = 1}^\N \ud{k-1}(x_{k-1}^{j'_{k-1}}, x_{k}^{i_{k}})}
\delta_{x_{0:k-1|k-1}^{j_{k-1}}}(\rmd x_{0:{k-1}|k}^{i_{k}}) \, h(x_{0:k-1|k}^{j_k}, x_{k}^{j_k}, \ldots, x_n^{j_n}) \\
    =\sum_{j_{k-1} = 1}^\N \frac{\ud{k-1}(x_{k-1}^{j_{k-1}}, x_{k}^{j_{k}})}{\sum_{j'_{k-1} = 1}^\N \ud{k-1}(x_{k-1}^{j'_{k-1}}, x_{k}^{j_{k}})}  h(x_{0:k-2|k-1}^{j_{k-1}}, x_{k-1}^{j_{k-1}}, x_{k}^{j_k}, \ldots, x_n^{j_n}).
\end{multline*}
Therefore, we obtain 
\begin{multline*}
a_{k,n}(\xmb[0], \ldots, \xmb[k - 1], x_k^{j_k}, \ldots, x_n^{j_n}) \\
= \int \prod_{m = 0}^{k - 2} \prod_{i_{m + 1} = 1}^\N
\sum_{j_m = 1}^\N \frac{\ud{m}(x_m^{j_m}, x_{m + 1}^{i_{m + 1}})}{\sum_{j'_m = 1}^\N \ud{m}(x_m^{j'_m}, x_{m + 1}^{i_{m + 1}})}
\delta_{x_{0:m|m}^{j_m}}(\rmd x_{0:m|m + 1}^{i_{m + 1}}) \\
\times \sum_{j_{k - 1} = 1}^\N \frac{\ud{k - 1}(x_{k - 1}^{j_{k - 1}}, x_k^{j_k})}{\sum_{j'_{k - 1} = 1}^\N \ud{k - 1}(x_{k - 1}^{j'_{k - 1}}, x_k^{j_k})} h(x_{0:k - 2|k - 1}^{j_{k - 1}}, x_{k - 1}^{j_{k - 1}}, x_k^{j_k},
\ldots, x_n^{j_n}).
\end{multline*}
Now, changing the order of summation with respect to $j_{k - 1}$ and integration on the right hand side of the previous display yields
\begin{multline*}
a_{k,n}(\xmb[0], \ldots, \xmb[k - 1], x_k^{j_k}, \ldots, x_n^{j_n})
\\ = \sum_{j_{k - 1} = 1}^\N \frac{\ud{k - 1}(x_{k - 1}^{j_{k - 1}}, x_k^{j_k})}{\sum_{j'_{k - 1} = 1}^\N \ud{k - 1}(x_{k - 1}^{j'_{k - 1}}, x_k^{j_k})} a_{k - 1,n}(\xmb[0], \ldots, \xmb[k - 2], x_{k - 1}^{j_{k - 1}}, \ldots, x_n^{j_n}).
\end{multline*}
Thus,
\begin{align*}
\lefteqn{H_{k, n}(\xmb[0:n])} \\
&= \frac{1}{\N} \sum_{j_n = 1}^\N \cdots \sum_{j_k = 1}^\N \prod_{\ell = k}^{n - 1} \frac{\ud{\ell}(x_\ell^{j_\ell}, x_{\ell + 1}^{j_{\ell + 1}})}{\sum_{j_\ell' = 1}^\N \ud{\ell}(x_\ell^{j'_\ell}, x_{\ell + 1}^{j_{\ell + 1}})} \\
& \quad \times \sum_{j_{k - 1} = 1}^\N \frac{\ud{k - 1}(x_{k - 1}^{j_{k - 1}}, x_k^{j_k})}{\sum_{j'_{k - 1} = 1}^\N \ud{k - 1}(x_{k - 1}^{j'_{k - 1}}, x_k^{j_k})} a_{k - 1,n}(\xmb[0], \ldots, \xmb[k - 2], x_{k - 1}^{j_{k - 1}}, \ldots, x_n^{j_n}) \\
&= \frac{1}{\N} \sum_{j_n = 1}^\N \cdots \sum_{j_{k - 1} = 1}^\N \prod_{\ell = k - 1}^{n - 1} \frac{\ud{\ell}(x_\ell^{j_\ell}, x_{\ell + 1}^{j_{\ell + 1}})}{\sum_{j_\ell' = 1}^\N \ud{\ell}(x_\ell^{j'_\ell}, x_{\ell + 1}^{j_{\ell + 1}})} a_{k - 1,n}(\xmb[0], \ldots, \xmb[k - 2], x_{k - 1}^{j_{k - 1}}, \ldots, x_n^{j_n}) \\
&= H_{k-1, n}(\xmb[0:n]),
\end{align*}
which establishes the recursion. Therefore, $H_n \equiv H_{0, n}$ and we may now conclude the proof by noting that $\bdpart{n}h \equiv H_{0, n}$. 


\subsection{Proof of \Cref{cor:exp:conc:cond:paris}}
\label{sec:proof:exp:concentration}

In order to establish \Cref{cor:exp:conc:cond:paris} we will prove the following more general result, of which \Cref{cor:exp:conc:cond:paris} is a direct consequence.

\begin{proposition}
\label{thm:exp:conc:cond:paris}
For every $n \in \nset$ and $M \in \nsetpos$ there exist $\cstcondparisc_n > 0$ and $\cstcondparisd_n > 0$ such that for every $\N \in \nsetpos$, $z_{0:n} \in \xpsp{0}{n}$, $(f_n, \tilde{f}_n) \in \bmf(\xfd{n})^2$, and $\varepsilon > 0$,
\begin{multline*}
\int \mbjt{n} \ckjt{n}(z_{0:n}, \rmd \statlmb_n) \\
\times \indin{\left|\frac{1}{\N} \sum_{i = 1}^\N \{b_{n}^i f_n(x_{n|n}^{i}) + \tilde{f}_n(x_{n|n}^{i}) \} - \targ[z_{0:n}]{n}(f_n \rk[z_{0:n - 1}]{n} \af{n} + \tilde{f}_n) \right|\geq \varepsilon} \\
\leq \cstcondparisc_n \exp \left( -   \frac{\cstcondparisd_n \N \varepsilon^2}{2 \upkappa_n^2} \right),
\end{multline*}
where
\begin{equation}
\label{eq:definition:upkappa}
\upkappa_n \eqdef \| f_n \|_\infty \sum_{m = 0}^{n - 1} \| \afterm{m} \|_\infty + \| \tilde{f}_n \|_\infty.
\end{equation}
\end{proposition}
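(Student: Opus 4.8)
The plan is to write the quantity inside the indicator as $\Psi(\statlmb_n) - \int \mbjt{n}\ckjt{n}(z_{0:n},\rmd\statlmb_n)\,\Psi(\statlmb_n)$, where $\Psi(\statlmb_n) \eqdef \N^{-1}\sum_{i=1}^\N\{b_n^i f_n(x_{n|n}^i) + \tilde{f}_n(x_{n|n}^i)\}$, and then to establish a sub-Gaussian tail for this centred variable at oscillation scale $\upkappa_n$. First I would identify the centre by a forward induction on the conditional dual process that parallels \Cref{prop:unbiasedness:affine}: the uniform insertion of $z_{m+1}$ in the kernel $\mkmb{m}[z_{m + 1}]$ reproduces, at the level of expected empirical measures, the frozen-path unnormalised kernels $\uk[z_{m + 1}]{m}$ in place of $\uk{m}$, which yields $\int \mbjt{n}\ckjt{n}(z_{0:n},\rmd\statlmb_n)\,\Psi(\statlmb_n) = \targ[z_{0:n}]{n}(f_n \rk[z_{0:n - 1}]{n}\af{n} + \tilde{f}_n)$, exactly the centring quantity in the statement. \Cref{cor:exp:conc:cond:paris} is then recovered as the special case $f_n \equiv 1$, $\tilde{f}_n \equiv 0$ via the frozen-path decomposition $\targ[z_{0:n}]{0:n} = \rk[z_{0:n - 1}]{n}\tensprod\targ[z_{0:n}]{n}$.

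The oscillation scale is pinned down by the crucial observation that the statistics stay uniformly bounded: from $b_0^i = 0$ and the recursion $b_{m + 1}^i = \M^{-1}\sum_{j = 1}^\M (\tilde{b}_m^{i,j} + \afterm{m}(\cdot))$ one gets $|b_n^i| \le \sum_{m = 0}^{n - 1}\|\afterm{m}\|_\infty$ deterministically, hence $|b_n^i f_n(x_{n|n}^i) + \tilde{f}_n(x_{n|n}^i)| \le \upkappa_n$ for every $i$, every $\M$, and every frozen path. Since this bound does not see $\M$ or $z_{0:n}$, the final constants will be $\M$- and $z$-free. For the fluctuation I would let $\mathcal{G}_m$ be the $\sigma$-field generated by the algorithm up to the construction of $\bpartmb{m}$ and use the telescoping martingale decomposition $\Psi - \E[\Psi] = \sum_{m = 0}^n (\E[\Psi \mid \mathcal{G}_m] - \E[\Psi \mid \mathcal{G}_{m - 1}])$.

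Conditionally on $\mathcal{G}_{m - 1}$, the fresh randomness at step $m$—the $\N$ conditionally independent mutation/selection draws (one slot forced to $z_{m + 1}$) together with the associated backward draws—enters $\Psi$ only through a normalised average $\N^{-1}\sum_i(\cdots)$ of conditionally independent terms; once these are transported to time $n$ by the frozen-path Feynman--Kac semigroup, each term is bounded by a finite, $n$-dependent multiple of $\upkappa_n$. A conditional Hoeffding estimate then bounds the moment generating function of the $m$-th increment by $\exp(C_m \lambda^2 \upkappa_n^2/\N)$ with $C_m < \infty$ (finite for each fixed $n$, as only control of the step-$m$ weights over a bounded horizon is needed, not mixing). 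Chaining these estimates through the tower property gives $\E[e^{\lambda(\Psi - \E\Psi)}] \le \exp(\lambda^2 \upkappa_n^2 \sum_{m = 0}^n C_m/\N)$ uniformly in $z_{0:n}$, and a Chernoff optimisation over $\lambda$, applied to $\pm(\Psi - \E\Psi)$, yields the stated tail with $\cstcondparisc_n = 2$ and $\cstcondparisd_n$ a positive multiple of $(\sum_{m = 0}^n C_m)^{-1}$.

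The main obstacle is the increment bound in the third step: because the PARIS statistics are recycled across time (the update of $b_{m + 1}^i$ reuses the earlier $b_m^\ell$), the terminal summands are \emph{not} conditionally independent given the final clouds, so a naive two-layer split into ``cloud'' and ``backward'' randomness fails to produce bounded increments. The remedy is to order the decomposition strictly by execution, so that at each step only the genuinely new draws are averaged—these \emph{are} conditionally i.i.d.\ across particles—while the inherited statistics are absorbed by the deterministic bound $|b_m^i| \le \sum_k \|\afterm{k}\|_\infty$; the delicate accounting is then the control, for fixed $n$, of the multiplier with which a step-$m$ perturbation of the empirical measure is propagated to time $n$ by the normalised frozen-path semigroup, and it is this accounting that produces the finite constants $C_m$.
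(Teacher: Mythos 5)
There is a genuine gap at the very first step of your plan: the claimed identification of the centre,
\[
\int \mbjt{n} \ckjt{n}(z_{0:n}, \rmd \statlmb_n) \, \frac{1}{\N}\sum_{i=1}^\N \{ b_n^i f_n(x_{n|n}^i) + \tilde{f}_n(x_{n|n}^i)\} \;=\; \targ[z_{0:n}]{n}(f_n \rk[z_{0:n - 1}]{n} \af{n} + \tilde{f}_n),
\]
is false. The exact unbiasedness in \Cref{thm:unbiasedness} and \Cref{prop:unbiasedness:affine} holds for the \emph{unconditional} algorithm under the unnormalised many-body measure $\untargmb{0:n}$: there the weighting by $\prod_m \potmb{m}(\epartmb{m|m})$ acts as a change of measure that linearises the self-normalised selection and backward weights, and the identity is obtained for $\untarg{n}$, not for a fixed-path conditional law. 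For the conditional dual process started from a frozen $z_{0:n}$ no such identity is available; the normalised ratios survive and the estimator carries an $\mathcal{O}(\N^{-1})$ bias relative to $\targ[z_{0:n}]{0:n}$. This is precisely the content of \Cref{prop:bias:cond:paris}, which would be vacuous if your claimed identity held, and which the paper in fact derives \emph{from} the concentration inequality you are trying to prove (via the second-moment bound of \Cref{cor:Lp:cond:paris}). So you can neither assert the identity nor repair it by invoking the bias bound without circularity; the deviation from the frozen-path Feynman--Kac quantity must be controlled directly inside the concentration argument, which is what the paper's inductive decomposition into $\operatorname{I}^{(1)}_\N$, $\operatorname{I}^{(2)}_\N$, $\operatorname{I}^{(3)}_\N$ does, handling the ratio terms with the generalised Hoeffding inequality of \cite[Lemma~4]{douc:garivier:moulines:olsson:2009} and the recursion of \Cref{lem:key:identity:untarg:affine:ii}.

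The second, related weakness is that the martingale-difference constants $C_m$ are asserted rather than established. Your description of the step-$m$ increment as a normalised average of conditionally independent terms ``transported to time $n$ by the frozen-path semigroup'' does not match how the step-$m$ randomness actually enters $\Psi$: the empirical cloud at time $m$ appears in all subsequent selection and backward weights through \emph{ratios} of empirical sums, so $\CPE[][]{\Psi}{\mathcal{G}_m}$ is a nonlinear functional of $\occm(\epartmb{m|m})$ and the inherited statistics, not a linear push-forward. Bounding $\CPE[][]{\Psi}{\mathcal{G}_m} - \CPE[][]{\Psi}{\mathcal{G}_{m-1}}$ by $\mathcal{O}(\upkappa_n/\sqrt{\N})$ uniformly therefore requires exactly the ratio-stability estimates (numerator/denominator concentration with a lower bound $\targ[z_{0:m}]{m}\pot{m} \geq \potlow{m}$) that constitute the technical core of the paper's induction; your deterministic bound $|b_m^i| \leq \sum_{k} \|\afterm{k}\|_\infty$ (which is correct and is also used in the paper) is not by itself sufficient. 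The Azuma/McDiarmid route could in principle be carried through, but as written the proposal defers its two essential difficulties --- identifying the correct centre and bounding the increments --- to assertions.
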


To prove \Cref{thm:exp:conc:cond:paris} we need the following technical lemma.

\begin{lemma} \label{lem:key:identity:untarg:affine:ii}
For every $n \in \nset$, $(f_{n + 1}, \tilde{f}_{n + 1}) \in \bmf(\xfd{n + 1})^2$, $z_{0:n + 1} \in \xpsp{0}{n + 1}$, and $\N \in \nsetpos$,
\begin{multline*}
\untarg[z_{0:n + 1}]{n + 1}(f_{n + 1} \rk[{\chunk{z}{0}{n}}]{n + 1} \af{n + 1} + \tilde{f}_{n + 1}) \\
= \left( 1- \frac{1}{\N} \right) \untarg[{\chunk{z}{0}{n}}]{n}\{ \uk{n} f_{n + 1} \rk[z_{0:n - 1}]{n} \af{n} + \uk{n}(\afterm{n} f_{n + 1} + \tilde{f}_{n + 1}) \} \\
+ \frac{1}{\N} \untarg[{\chunk{z}{0}{n}}]{n} \pot{n} \left( f_{n + 1}(z_{n + 1}) \rk[{\chunk{z}{0}{n}}]{n + 1} \af{n + 1}(z_{n + 1}) + \tilde{f}_{n + 1}(z_{n + 1}) \right).
\end{multline*}
\end{lemma}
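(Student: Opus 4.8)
The plan is to mirror the proof of \Cref{lem:key:identity:untarg:affine}, the only genuinely new feature being that the one-step frozen kernel $\uk[z_{n+1}]{n}$ is a two-component mixture rather than a plain unnormalised kernel. I would begin from the defining relation $\untarg[z_{0:n+1}]{n+1} = \untarg[z_{0:n}]{n} \uk[z_{n+1}]{n}$ and apply it to the test function $g \eqdef f_{n+1} \rk[z_{0:n}]{n+1} \af{n+1} + \tilde{f}_{n+1}$ on $\xsp{n+1}$. Since
$$
\uk[z_{n+1}]{n} g(x_n) = \left(1 - \frac{1}{\N}\right) \uk{n} g(x_n) + \frac{1}{\N} \pot{n}(x_n) \, g(z_{n+1}),
$$
the left-hand side splits into a ``diffuse'' part carrying the factor $1 - 1/\N$ and a ``Dirac'' part carrying the factor $1/\N$. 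The Dirac branch is immediate: because $g(z_{n+1})$ is a constant, integrating $\pot{n} \, g(z_{n+1})$ against $\untarg[z_{0:n}]{n}$ produces exactly the second term of the claimed identity. It therefore remains to treat the diffuse branch.

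After cancelling the common contribution $\uk{n}\tilde{f}_{n+1}$, the diffuse branch reduces to showing
$$
\untarg[z_{0:n}]{n} \uk{n}(f_{n+1} \rk[z_{0:n}]{n+1} \af{n+1}) = \untarg[z_{0:n}]{n}(\uk{n} f_{n+1} \cdot \rk[z_{0:n-1}]{n} \af{n}) + \untarg[z_{0:n}]{n} \uk{n}(\afterm{n} f_{n+1}).
$$
Two auxiliary identities for the frozen-path model drive this. First, the frozen backward--forward recursion $\rk[z_{0:n}]{n+1} \af{n+1} = \bk{n, \targ[z_{0:n}]{n}}(\rk[z_{0:n-1}]{n} \af{n} + \afterm{n})$, which is the exact analogue of \eqref{eq:backward:forward:recursion} and follows verbatim from the definition of $\rk[z_{0:n}]{n+1}$ once each marginal $\targ{m}$ is replaced by its frozen counterpart $\targ[z_{0:m}]{m}$. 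Second, the frozen transport identity
$$
\iint \varphi \, \untarg[z_{0:n}]{n}(\rmd x_n) \, \uk{n}(x_n, \rmd x_{n+1}) = \iint \varphi \, (\untarg[z_{0:n}]{n} \uk{n})(\rmd x_{n+1}) \, \bk{n, \targ[z_{0:n}]{n}}(x_{n+1}, \rmd x_n),
$$
which reproduces \eqref{eq:key_relation_lemma2} and holds because $\uk{n}$ admits the density $\ud{n}$ and because $\bk{n, \targ[z_{0:n}]{n}}$ is unchanged when $\targ[z_{0:n}]{n}$ is replaced by the unnormalised $\untarg[z_{0:n}]{n}$, the normalising constant cancelling in the density ratio.

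To finish, I would substitute the frozen recursion into $\untarg[z_{0:n}]{n} \uk{n}(f_{n+1} \rk[z_{0:n}]{n+1} \af{n+1})$, noting that the $x_{n+1}$-marginal of $\untarg[z_{0:n}]{n} \tensprod \uk{n}$ is precisely $\untarg[z_{0:n}]{n} \uk{n}$, then apply the frozen transport identity with the test function $\varphi(x_n, x_{n+1}) \eqdef f_{n+1}(x_{n+1})(\rk[z_{0:n-1}]{n} \af{n}(x_n) + \afterm{n}(x_n, x_{n+1}))$ to pull the integral back onto $\untarg[z_{0:n}]{n}$ and $\uk{n}$, and finally split the two summands to recover the right-hand side displayed above. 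The one point demanding care, and the likely source of confusion, is the bookkeeping of the intermediate measure $\untarg[z_{0:n}]{n} \uk{n}$: it is the plain (non-frozen) update of the frozen measure and must not be conflated with $\untarg[z_{0:n+1}]{n+1}$; relatedly, the backward kernel appearing throughout is indexed by the frozen marginal at time $n$, namely $\targ[z_{0:n}]{n}$, and not by any frozen marginal at time $n+1$. Everything else is the routine density manipulation already carried out in \Cref{lem:key:identity:untarg:affine}.
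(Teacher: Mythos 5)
Your proof is correct, and it is worth noting that it is actually \emph{tighter} than the argument the paper itself gives. The paper's proof first asserts that \Cref{lem:key:identity:untarg:affine} ``holds also for the Feynman--Kac model with a frozen path'', i.e.\ it writes
$\untarg[z_{0:n+1}]{n+1}(f_{n+1}\rk[z_{0:n}]{n+1}\af{n+1}+\tilde f_{n+1})
=\untarg[z_{0:n}]{n}\{\uk[z_{n+1}]{n}f_{n+1}\,\rk[z_{0:n-1}]{n}\af{n}+\uk[z_{n+1}]{n}(\afterm{n}f_{n+1}+\tilde f_{n+1})\}$,
and only \emph{then} splits $\uk[z_{n+1}]{n}$ into its two components. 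Taken literally this intermediate identity is delicate: its derivation in \Cref{lem:key:identity:untarg:affine} rests on the transport identity \eqref{eq:key_relation_lemma2}, which requires the forward kernel to admit a density with respect to $\refm{n+1}$, whereas $\uk[z_{n+1}]{n}$ carries a Dirac atom at $z_{n+1}$; moreover the backward kernel $\bk{n,\targ[z_{0:n}]{n}}$ entering $\rk[z_{0:n}]{n+1}$ is still built from the \emph{unfrozen} density $\ud{n}$, so the Dirac branch of that intermediate identity reweights by $\pot{n}(\cdot)$ while the stated second term of the lemma reweights by $\ud{n}(\cdot,z_{n+1})$ --- two different averages. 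Your ordering avoids this entirely: you split $\untarg[z_{0:n+1}]{n+1}=(1-1/\N)\,\untarg[z_{0:n}]{n}\uk{n}+\N^{-1}(\untarg[z_{0:n}]{n}\pot{n})\,\delta_{z_{n+1}}$ \emph{first}, dispose of the atomic branch by direct evaluation of the (constant) integrand at $z_{n+1}$ --- which is exactly what produces the factor $f_{n+1}(z_{n+1})\rk[z_{0:n}]{n+1}\af{n+1}(z_{n+1})+\tilde f_{n+1}(z_{n+1})$ in the second term --- and apply the transport identity only to the absolutely continuous branch, where the unfrozen $\uk{n}$ and the backward kernel $\bk{n,\targ[z_{0:n}]{n}}$ fit together exactly as in \eqref{eq:key_relation_lemma2} (the normalisation of $\targ[z_{0:n}]{n}$ versus $\untarg[z_{0:n}]{n}$ cancelling in the density ratio, as you note). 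Combined with the frozen backward--forward recursion $\rk[z_{0:n}]{n+1}\af{n+1}=\bk{n,\targ[z_{0:n}]{n}}(\rk[z_{0:n-1}]{n}\af{n}+\afterm{n})$, this yields the claimed identity with no gaps. In short: same ingredients as the paper, but the reversed order of ``split, then transport'' is what makes the Dirac contribution come out in the stated form, and your cautionary remark about not conflating $\untarg[z_{0:n}]{n}\uk{n}$ with $\untarg[z_{0:n+1}]{n+1}$ is precisely the point at which the paper's own one-line argument is too terse.
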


\begin{proof}
    Since \Cref{lem:key:identity:untarg:affine} holds also for the Feynman--Kac model with a frozen path, we obtain
    \begin{multline*}
        \untarg[{\chunk{z}{0}{n+1}}]{n + 1}(f_{n + 1} \rk[{\chunk{z}{0}{n}}]{n + 1} \af{n + 1} + \tilde{f}_{n + 1}) \\
        = \untarg[{\chunk{z}{0}{n}}]{n}\{ \uk[z_{n+1}]{n} f_{n + 1} \rk[{\chunk{z}{0}{n}}]{n} \af{n} + \uk[z_{n+1}]{n}(\afterm{n} f_{n + 1} + \tilde{f}_{n + 1}) \}.
    \end{multline*}
    Thus, the proof is concluded by noting that for every $x_n \in \xsp{n}$ and $h \in \bmf(\xfd{n:n+1})$,
    $$
    \uk[z_{n+1}]{n}h(x_n) = \left(1 - \frac{1}{\N} \right) \uk{n} h(x_n) + \frac{1}{\N} g(x_n) h(x_n, z_{n + 1}).
    $$
\end{proof}

Finally, before proceeding to the proof of \Cref{thm:exp:conc:cond:paris}, we introduce the law of the {\PARIS} evolving conditionally on a frozen path $\zpath = \{z_m\}_{m \in \nset}$.
Define, for $m \in \nset$ and $z_{m + 1} \in \xsp{m + 1}$,
\begin{equation*} 
\pk{m}[z_{m + 1}] : \yspmb{m} \times \yfdmb{m + 1} \ni (\ymb[m], A) \mapsto \int
\, \mkmb{m}[z_{m + 1}](\xmb[m|m], \rmd \xmb[m + 1]) \, \ck{m}(\ymb[m], \xmb[m + 1], A).
\end{equation*}
For any given initial distribution $\pinit \in \probmeas(\yfdmb{0})$, let $\canlaw[\pkl, \zpath]{\pinit}$ be the distribution of the canonical Markov chain induced by the Markov kernels $\{ \pk{m}[z_{m + 1}] \}_{m \in \nset}$ and the initial distribution $\pinit$. By abuse of notation we write
$\canlaw[\pkl, \zpath]{\initmb}$ instead of $\canlaw[\pkl, \zpath]{\pinit[\initmb[z_0]]}$, where the extension $\pinit[\initmb]$ is defined in \Cref{sec:proof:unbiasedness}.

\begin{proof}[Proof of \Cref{thm:exp:conc:cond:paris}]
We proceed by forward induction over $n$. Let the $\sigma$-fields $\partfiltbar{n}$ and $\partfilt{n}$ be defined as in the proof of \Cref{thm:unbiasedness}, but for the conditional {\PARIS} dual process. Then, under the law $\canlaw[\pkl, \zpath]{\initmb}$, reusing \eqref{eq:cond:exp:beta},
\begin{align*}
&\canlawexp[\pkl, \zpath]{\initmb} \left[ \stat{n}{1} f_n(\epart{n}{1}) + \tilde{f}_n(\epart{n}{1}) \mid \partfiltbar{n - 1} \right] \nonumber \\
&= \canlawexp[\pkl, \zpath]{\initmb} \left[ \canlawexp[\pkl, \zpath]{\initmb} \left[ \stat{n}{1} \mid \partfilt{n} \right] f_n(\epart{n}{1}) + \tilde{f}_n(\epart{n}{1}) \mid \partfiltbar{n - 1} \right] \nonumber \\
&= \canlawexp[\pkl, \zpath]{\initmb} \left[ f_n(\epart{n}{1}) \sum_{\ell = 1}^\N \frac{\ud{n - 1}(\epart{n - 1}{\ell}, \epart{n}{1})}{\sum_{\ell' = 1}^\N \ud{n - 1}(\epart{n - 1}{\ell'}, \epart{n}{1})} \left( \stat{n - 1}{\ell} + \afterm{n - 1}(\epart{n - 1}{\ell}, \epart{n}{1}) \right)
\vphantom{\sum_{\ell' = 1}^\N} + \tilde{f}_n(\epart{n}{1}) \mid \partfiltbar{n - 1} \right]. 
\end{align*}
Using \eqref{eq:kernel:cond:dual}, we get
\begin{multline} \label{eq:cond:exp:split}
\canlawexp[\pkl, \zpath]{\initmb} \left[ \stat{n}{1} f_n(\epart{n}{1}) + \tilde{f}_n(\epart{n}{1}) \mid \partfiltbar{n - 1} \right] \\
= \left( 1 - \frac{1}{\N} \right) \frac{\sum_{\ell = 1}^\N \{\stat{n - 1}{\ell} \uk{n - 1} f_n (\epart{n - 1}{\ell}) + \uk{n - 1}(\afterm{n - 1} f_n + \tilde{f}_n)(\epart{n - 1}{\ell})\}}{\sum_{\ell' = 1}^\N \pot{n - 1}(\epart{n - 1}{\ell'})} \\
+ \frac{1}{\N} \left( f_n(z_n) \sum_{\ell = 1}^\N \frac{\ud{n - 1}(\epart{n - 1}{\ell}, z_n)}{\sum_{\ell' = 1}^\N \ud{n - 1}(\epart{n - 1}{\ell'}, z_n)} \left( \stat{n - 1}{\ell} + \afterm{n}(\epart{n - 1}{\ell}, z_n) \right) + \tilde{f}_n(z_n) \right).
\end{multline}
In order to apply the induction hypothesis to each term on the right-hand side of the previous identity, note that
$$
\rk[z_{0:n - 1}]{n} \af{n}(z_n)
= \frac{\targ[z_{0:n - 1}]{n - 1}[ \ud{n - 1}(\cdot, z_n) \{ \rk[z_{0:n - 2}]{n - 1} \af{n - 1}(\cdot) + \afterm{n - 1}(\cdot, z_n) \} ] }{\targ[z_{0:n - 1}]{n - 1} [ \ud{n - 1}(\cdot, z_n) ]}.
$$
Therefore, using \Cref{lem:key:identity:untarg:affine:ii} and noting that
$\untarg[{\chunk{z}{0}{n}}]{n}\indi{\xsp{n}}/ \untarg[{\chunk{z}{0}{n}}]{n-1} \indi{\xsp{n-1}} = \targ[{\chunk{z}{0}{n-1}}]{n-1}\pot{n-1}$ yields
\begin{multline}
    \label{eq:cond:exp:recursion}
    \targ[{\chunk{z}{0}{n}}]{n}(f_{n} \rk[\chunk{z}{0}{n-1}]{n} \af{n} + \tilde{f}_{n}) = \frac{1}{\N} \left( f_{n}(z_{n}) \rk[\chunk{z}{0}{n-1}]{n} \af{n}(z_{n}) + \tilde{f}_{n}(z_{n}) \right) \\
    + \left( 1- \frac{1}{\N} \right) \frac{\targ[z_{0:n - 1}]{n - 1} \{ \uk{n - 1} f_n \rk[z_{0:n - 2}]{n-1} \af{n} + \uk{n - 1}(\afterm{n - 1} f_n + \tilde{f}_n) \}}{\targ[z_{0:n - 1}]{n - 1} \pot{n - 1}}.
\end{multline}
By combining \eqref{eq:cond:exp:split} with \eqref{eq:cond:exp:recursion}, we decompose the error according to
\begin{align} \label{eq:error_decomposition_paris}
\lefteqn{\frac{1}{\N} \sum_{i = 1}^\N \{\stat{n}{i} f_n(\epart{n|n}{i}) + \tilde{f}_n(\epart{n|n}{i}) \} - \targ[{\chunk{z}{0}{n}}]{n}(f_n \rk[z_{0:n - 1}]{n} \af{n} + \tilde{f}_n)} \hspace{10mm} \nonumber \\
&= \frac{1}{\N} \sum_{i = 1}^\N \{\stat{n}{i} f_n(\epart{n|n}{i}) + \tilde{f}_n(\epart{n|n}{i}) \} -
\canlawexp[\pkl, \zpath]{\initmb} \left[ \stat{n}{1} f_n(\epart{n}{1}) + \tilde{f}_n(\epart{n}{1})
\mid \partfiltbar{n - 1} \right] \nonumber \\
&\hspace{10mm} + \canlawexp[\pkl, \zpath]{\initmb} \left[
\stat{n}{1} f_n(\epart{n}{1}) + \tilde{f}_n(\epart{n}{1}) \mid \partfiltbar{n - 1} \right] - \targ[{\chunk{z}{0}{n}}]{n}(f_n \rk[z_{0:n - 1}]{n} \af{n} + \tilde{f}_n) \nonumber \\
&= \operatorname{I}^{(1)}_{\N} + \left( 1 - \frac{1}{\N} \right) \operatorname{I}^{(2)}_{\N} + \frac{1}{\N} \operatorname{I}^{(3)}_{\N},
\end{align}
where
\begin{align}
\operatorname{I}^{(1)}_{\N} &\eqdef \frac{1}{\N} \sum_{i = 1}^\N \{\stat{n}{i} f_n(\epart{n}{i}) + \tilde{f}_n(\epart{n}{i}) \}
-  \canlawexp[\pkl, \zpath]{\initmb} \left[ \stat{n}{1} f_n(\epart{n}{1}) + \tilde{f}_n(\epart{n}{1}) \mid \partfiltbar{n - 1} \right], \nonumber \\
\operatorname{I}^{(2)}_{\N} &\eqdef \frac{\sum_{\ell = 1}^\N \{\stat{n - 1}{\ell} \uk{n - 1} f_n (\epart{n - 1}{\ell}) + \uk{n - 1}(\afterm{n - 1} f_n + \tilde{f}_n)(\epart{n - 1}{\ell})\}}{\sum_{\ell' = 1}^\N \pot{n - 1}(\epart{n - 1}{\ell'})} \nonumber \\
&\hspace{10mm} - \frac{\targ[z_{0:n - 1}]{n - 1} \{ \uk{n - 1} f_n \rk[z_{0:n - 1}]{n} \af{n} + \uk{n - 1}(\afterm{n - 1} f_n + \tilde{f}_n) \}}{\targ[z_{0:n - 1}]{n - 1} \pot{n - 1}}, \label{eq:def:b_N}
\end{align}
and
\begin{multline} \label{eq:def:c_N}
\operatorname{I}^{(3)}_{\N} \eqdef f_n(z_n) \sum_{\ell = 1}^\N \frac{\ud{n - 1}(\epart{n - 1}{\ell}, z_n)}{\sum_{\ell' = 1}^\N \ud{n - 1}(\epart{n - 1}{\ell'}, z_n)} \left( \stat{n - 1}{\ell} + \afterm{n - 1}(\epart{n - 1}{\ell}, z_n) \right) \\
- f_n(z_n) \frac{\targ[z_{0:n - 1}]{n - 1}[ \ud{n - 1}(\cdot, z_n) \{ \rk[z_{0:n - 2}]{n - 1} \af{n - 1}(\cdot) + \afterm{n - 1}(\cdot, z_n) \} ] }{\targ[z_{0:n - 1}]{n - 1} [ \ud{n - 1}(\cdot, z_n) ]}.
\end{multline}

The proof is now completed by treating the terms $\operatorname{I}^{(1)}_{\N}$, $\operatorname{I}^{(2)}_{\N}$, and $\operatorname{I}^{(3)}_{\N}$ separately, using Hoeffding's inequality and its generalisation in \cite[Lemma~4] {douc:garivier:moulines:olsson:2009}. Choose $\varepsilon > 0$; then, by Hoeffding's inequality,
\begin{equation} \label{eq:a_N:bound}
\canlaw[\pkl, \zpath]{\initmb} \left( | \operatorname{I}^{(1)}_{\N} |\geq \varepsilon \right) \leq 2\exp \left( - \frac{1}{2} \frac{\varepsilon^2}{\upkappa_n^2}  \N \right).
\end{equation}
To treat $\operatorname{I}^{(2)}_{\N}$, we apply the induction hypothesis to the numerator and denominator, each normalised by $1 / \N$, yielding, since $\| \uk{n - 1} h \|_\infty \leq \pothigh{n - 1} \| h \|_\infty$ for all $h \in \bmf(\xfd{n - 1} \tensprod \xfd{n})$,
\begin{multline*}
\canlaw[\pkl, \zpath]{\initmb}  \left( \left| \frac{1}{\N} \sum_{\ell = 1}^\N \{\stat{n - 1}{\ell} \uk{n - 1} f_n (\epart{n - 1}{\ell}) + \uk{n - 1}(\afterm{n - 1} f_n + \tilde{f}_n)(\epart{n - 1}{\ell}) \} \right. \right. \\
\left. \left. - \targ[z_{0:n - 1}]{n - 1} \{ \uk{n - 1} f_n \rk[z_{0:n - 1}]{n} \af{n} + \uk{n - 1}(\afterm{n - 1} f_n + \tilde{f}_n) \} \vphantom{\sum_{\ell = 1}^\N} \right| \geq \varepsilon \right) \\
\leq \cstcondparisc_{n - 1} \exp \left( - \cstcondparisd_{n - 1} \frac{\varepsilon^2}{\pothigh{n - 1}^2 \upkappa_n^2}  \N \right)
\end{multline*}
and
$$
\canlaw[\pkl, \zpath]{\initmb} \left( \left| \frac{1}{\N} \sum_{\ell = 1}^\N \pot{n - 1}(\epart{n - 1}{\ell}) - \targ[z_{0:n - 1}]{n - 1} \pot{n - 1} \right|\geq \varepsilon \right)
\leq \cstcondparisc_{n - 1} \exp \left( - \cstcondparisd_{n - 1} \frac{\varepsilon^2}{\pothigh{n - 1}^2} \N \right).
$$
Combining the previous two bounds with the generalised Hoeffding inequality in \cite[Lemma~4]{douc:garivier:moulines:olsson:2009} yields, using also the bounds
$$
\frac{\sum_{\ell = 1}^\N \{\stat{n - 1}{\ell} \uk{n - 1} f_n (\epart{n - 1}{\ell}) + \uk{n - 1}(\afterm{n - 1} f_n + \tilde{f}_n)(\epart{n - 1}{\ell})\}}{\sum_{\ell' = 1}^\N \pot{n - 1}(\epart{n - 1}{\ell'})} \\
\leq \upkappa_n
$$
and $\targ[z_{0:n - 1}]{n - 1} \pot{n - 1} \geq \potlow{n - 1}$,
the inequality
\begin{equation} \label{eq:b_N:bound}
\canlaw[\pkl, \zpath]{\initmb} \left( | \operatorname{I}^{(2)}_{\N} |\geq \varepsilon \right)
\leq \cstcondparisc_{n - 1} \exp \left( - \cstcondparisd_{n - 1}  \frac{\potlow{n - 1}^2 \varepsilon^2}{ \pothigh{n - 1}^2 \upkappa^2_n}  \N \right).
\end{equation}
The last term $\operatorname{I}^{(3)}_{\N}$ is treated along similar lines; indeed, by the induction hypothesis, since $\|\ud{n - 1} \|_\infty \leq \pothigh{n - 1} \mdhigh{n - 1}$,
\begin{multline*}
\canlaw[\pkl, \zpath]{\initmb} \left( \left| \frac{1}{\N} \sum_{\ell = 1}^\N \ud{n - 1}(\epart{n - 1}{\ell}, z_n) \left( \stat{n - 1}{\ell} + \afterm{n - 1}(\epart{n - 1}{\ell}, z_n) \right) \right. \right. \\
\left. \left. \vphantom{\sum_{\ell = 1}^\N} - \targ[z_{0:n - 1}]{n - 1}[ \ud{n - 1}(\cdot, z_n) \{ \rk[z_{0:n - 1}]{n - 1} \af{n - 1}(\cdot) + \afterm{n - 1}(\cdot, z_n) \} ] \right| \geq \varepsilon \right) \\
\leq \cstcondparisc_{n - 1} \exp \left( - \cstcondparisd_{n - 1} \left( \frac{\varepsilon}{\pothigh{n - 1} \mdhigh{n - 1} \sum_{m = 0}^{n - 1} \| \afterm{m} \|_\infty} \right)^2 \N \right)
\end{multline*}
and
\begin{multline*}
\canlaw[\pkl, \zpath]{\initmb} \left( \left| \frac{1}{\N} \sum_{\ell = 1}^\N \ud{n - 1}(\epart{n - 1}{\ell}, z_n) - \targ[z_{0:n - 1}]{n - 1} [ \ud{n - 1}(\cdot, z_n) ] \right| \geq \varepsilon \right) \\
\leq \cstcondparisc_{n - 1} \exp \left( - \cstcondparisd_{n - 1} \left( \frac{\varepsilon}{\pothigh{n - 1} \mdhigh{n - 1}} \right)^2 \N \right).
\end{multline*}
Thus, since
$$
\sum_{\ell = 1}^\N \frac{\ud{n - 1}(\epart{n - 1}{\ell}, z_n)}{\sum_{\ell' = 1}^\N \ud{n - 1}(\epart{n - 1}{\ell'}, z_n)} \left( \stat{n - 1}{\ell} + \afterm{n - 1}(\epart{n - 1}{\ell}, z_n) \right) \leq \sum_{m = 0}^{n - 1} \| \afterm{m} \|_\infty
$$
and $\targ[z_{0:n - 1}]{n - 1} [ \ud{n - 1}(\cdot, z_n) ] \geq \potlow{n - 1}$, the generalised Hoeffding inequality provides 
\begin{equation} \label{eq:c_N:bound}
\canlaw[\pkl, \zpath]{\initmb} \left( | \operatorname{I}^{(3)}_{\N} |\geq \varepsilon \right)
\leq \cstcondparisc_{n - 1} \exp \left( - \cstcondparisd_{n - 1} \left( \frac{\potlow{n - 1} \varepsilon}{2 \pothigh{n - 1} \mdhigh{n - 1} \|f_n \|_\infty \sum_{m = 0}^{n - 1} \| \afterm{m} \|_\infty} \right)^2 \N \right).
\end{equation}
Finally, combining the bounds (\ref{eq:a_N:bound}--\ref{eq:c_N:bound}) completes the proof. 
\end{proof}


\subsection{Proof of \Cref{cor:Lp:cond:paris}}
\label{subsec:proof:cor:Lp:cond:paris}

The statement of \Cref{cor:Lp:cond:paris} is implied by the following more general result, which we will prove below.

\begin{proposition} \label{cor:Lp:cond:paris:affine}
For every $n \in \nset$, $M \in \nsets$, $\N \in \nsetpos$, $z_{0:n} \in \xpsp{0}{n}$, $(f_n, \tilde{f}_n) \in \bmf(\xfd{n})^2$, and $p \geq 2$, it holds that
\begin{multline*}
\int \mbjt{n} \ckjt{n}(z_{0:n}, \rmd \statlmb_n) \left| \frac{1}{\N} \sum_{i = 1}^\N \{b_{n}^i f_n(x_{n|n}^{i}) + \tilde{f}_n(x_{n|n}^{i}) \} - \targ[z_{0:n}]{n}(f_n \rk[z_{0:n - 1}]{n} \af{n} + \tilde{f}_n) \right|^p \\
\leq \cstcondparisc_n (p / \cstcondparisd_n)^{p/2} 
\N^{- p/2} \upkappa_n^p,
\end{multline*}
where $\cstcondparisc_n > 0$, $\cstcondparisd_n > 0$ and $\upkappa_n$ are defined in \Cref{thm:exp:conc:cond:paris} and \eqref{eq:definition:upkappa}, respectively.
\end{proposition}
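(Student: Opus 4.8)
The plan is to obtain the $L^p$ bound directly from the sub-Gaussian concentration inequality of \Cref{thm:exp:conc:cond:paris} by integrating its tail. Write
\[
\Delta_\N \eqdef \frac{1}{\N} \sum_{i = 1}^\N \{b_{n}^i f_n(x_{n|n}^{i}) + \tilde{f}_n(x_{n|n}^{i}) \} - \targ[z_{0:n}]{n}(f_n \rk[z_{0:n - 1}]{n} \af{n} + \tilde{f}_n)
\]
for the centred error, viewed as a measurable function of $\statlmb_n$. Since $\mbjt{n}$ and $\ckjt{n}$ are Markov kernels, $\mbjt{n} \ckjt{n}(z_{0:n}, \cdot)$ is a probability measure, and the quantity to be bounded is precisely the $p$-th absolute moment of $\Delta_\N$ under this law. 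First I would invoke the layer-cake representation
\[
\int \mbjt{n} \ckjt{n}(z_{0:n}, \rmd \statlmb_n) \, |\Delta_\N|^p = \int_0^\infty p \varepsilon^{p - 1} \left( \int \mbjt{n} \ckjt{n}(z_{0:n}, \rmd \statlmb_n) \, \indin{|\Delta_\N| \geq \varepsilon} \right) \rmd \varepsilon.
\]

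Next I would insert the tail bound furnished by \Cref{thm:exp:conc:cond:paris} (applied with the same $f_n$, $\tilde{f}_n$, and $z_{0:n}$), namely that the inner probability is at most $\cstcondparisc_n \exp(- \cstcondparisd_n \N \varepsilon^2 / (2 \upkappa_n^2))$, to obtain
\[
\int \mbjt{n} \ckjt{n}(z_{0:n}, \rmd \statlmb_n) \, |\Delta_\N|^p \leq \cstcondparisc_n \int_0^\infty p \varepsilon^{p - 1} \exp \left( - \frac{\cstcondparisd_n \N \varepsilon^2}{2 \upkappa_n^2} \right) \rmd \varepsilon.
\]
The remaining integral is a standard Gaussian moment: after the substitution $u = \cstcondparisd_n \N \varepsilon^2 / (2 \upkappa_n^2)$ it equals $\left( 2 \upkappa_n^2 / (\cstcondparisd_n \N) \right)^{p/2} \Gamma(p/2 + 1)$, using $\frac{p}{2} \Gamma(p/2) = \Gamma(p/2 + 1)$.

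Finally I would apply the elementary bound $\Gamma(p/2 + 1) \leq (p/2)^{p/2}$, valid for all $p \geq 2$, which collapses the constants exactly:
\[
\int \mbjt{n} \ckjt{n}(z_{0:n}, \rmd \statlmb_n) \, |\Delta_\N|^p \leq \cstcondparisc_n \left( \frac{2 \upkappa_n^2}{\cstcondparisd_n \N} \right)^{p/2} \left( \frac{p}{2} \right)^{p/2} = \cstcondparisc_n \left( \frac{p}{\cstcondparisd_n} \right)^{p/2} \N^{- p/2} \upkappa_n^p,
\]
which is the claimed inequality. \Cref{cor:Lp:cond:paris} then follows by specialising to $p = 2$, $f_n \equiv 1$, and $\tilde{f}_n \equiv 0$, since then $\upkappa_n = \sum_{m = 0}^{n - 1} \| \afterm{m} \|_\infty$, $\occm(\statlmb_n)(\operatorname{id}) = \N^{-1} \sum_{i = 1}^\N b_n^i$, and $\targ[z_{0:n}]{n} \rk[z_{0:n - 1}]{n} \af{n} = \targ[z_{0:n}]{0:n} \af{n}$ by definition of the frozen-path path model.

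There is no serious analytic obstacle here; the computation is a routine moments-from-tails argument. The only points requiring care are, first, checking that the integrating kernel $\mbjt{n} \ckjt{n}(z_{0:n}, \cdot)$ is genuinely a probability measure so that the layer-cake formula applies, and, second, tracking the constants through the substitution so that they telescope to exactly $(p / \cstcondparisd_n)^{p/2}$ — which is precisely what the Gamma-function inequality $\Gamma(p/2 + 1) \leq (p/2)^{p/2}$ delivers.
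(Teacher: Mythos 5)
Your argument is correct and is essentially identical to the paper's: the paper also derives the $L^p$ bound by feeding the sub-Gaussian tail estimate of \Cref{thm:exp:conc:cond:paris} into a moments-from-tails computation (packaged there as \Cref{lem:Lp}, proved via Fubini, the same change of variables, and the bound $\Gamma(p/2) \leq (p/2)^{p/2-1}$, which is equivalent to your $\Gamma(p/2+1) \leq (p/2)^{p/2}$). No gaps; the constants telescope exactly as you indicate.
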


Before proving \Cref{cor:Lp:cond:paris:affine}, we establish the following result.
\begin{lemma}
\label{lem:Lp}
 Let $X$ be an $\rset^d$-valued random variable, defined on some probability space $(\Omega, \mathcal{F}, \prob)$, satisfying $\mathbb{P}(|X| \geq t) \leq  c \exp(-t^2 / (2 \sigma^{2}))$ for every $t \geq 0$ and some $c > 0$ and $\sigma > 0$. Then for every $p \geq 2$ it holds that $\mathbb{E}[|X|^p] \leq c p^{p / 2} \sigma^{p}$.
 \end{lemma}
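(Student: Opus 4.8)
The plan is to prove Lemma~\ref{lem:Lp} via the tail-integration (``layer-cake'') representation of moments, followed by a reduction of the resulting integral to the Gamma function. First I would represent the $p$-th moment through the tail function: since $|X|^p \geq 0$,
\[
\mathbb{E}[|X|^p] = \int_0^\infty \prob(|X|^p \geq u) \, \rmd u = \int_0^\infty p\, t^{p-1}\, \prob(|X| \geq t) \, \rmd t,
\]
where the second equality follows from the change of variable $u = t^p$. Inserting the assumed sub-Gaussian tail bound $\prob(|X| \geq t) \leq c \exp(-t^2/(2\sigma^2))$ then gives
\[
\mathbb{E}[|X|^p] \leq c\, p \int_0^\infty t^{p-1} \exp\!\left( -\frac{t^2}{2\sigma^2} \right) \rmd t.
\]

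Next I would evaluate the remaining integral in closed form. Rescaling by $u = t/\sigma$ and then substituting $v = u^2/2$ converts it into a Gamma integral:
\[
\int_0^\infty t^{p-1} \exp\!\left( -\frac{t^2}{2\sigma^2} \right) \rmd t = \sigma^p \int_0^\infty u^{p-1} e^{-u^2/2} \, \rmd u = \sigma^p\, 2^{p/2 - 1}\, \Gamma\!\left( \frac{p}{2} \right).
\]
Combining this with the previous display and using the functional identity $\tfrac{p}{2}\Gamma(p/2) = \Gamma(p/2 + 1)$ yields the clean intermediate estimate
\[
\mathbb{E}[|X|^p] \leq c\, \sigma^p\, 2^{p/2}\, \Gamma\!\left( \frac{p}{2} + 1 \right).
\]

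It then remains only to control the Gamma factor, and I expect this to be the only genuinely non-mechanical step. The target bound $\mathbb{E}[|X|^p] \leq c\, p^{p/2} \sigma^p$ follows once we establish the elementary inequality $\Gamma(x+1) \leq x^x$ for every $x \geq 1$, applied with $x = p/2 \geq 1$ (legitimate since $p \geq 2$): indeed it gives $2^{p/2}\Gamma(p/2+1) \leq 2^{p/2}(p/2)^{p/2} = p^{p/2}$. To prove the Gamma inequality I would show that $\varphi(x) \eqdef x \log x - \log \Gamma(x+1)$ is nonnegative on $[1,\infty)$. One has $\varphi(1) = 0$, and differentiating gives $\varphi'(x) = \log x + 1 - \psi(x+1)$, where $\psi$ is the digamma function. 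Since $\psi(x+1) < \log(x+1) \leq \log x + 1$ for $x \geq 1$ (using the standard bound $\psi(y) < \log y$ together with $x + 1 \leq e\,x$), we get $\varphi'(x) \geq 0$, so $\varphi$ is nondecreasing and hence stays at or above its value $\varphi(1) = 0$; this is exactly $\Gamma(x+1) \leq x^x$. Assembling the three displays with this last bound completes the proof.
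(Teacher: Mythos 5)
Your proof is correct and follows essentially the same route as the paper: the layer-cake representation $\mathbb{E}[|X|^p]=\int_0^\infty p t^{p-1}\,\mathbb{P}(|X|\geq t)\,\rmd t$, evaluation of the resulting Gaussian-tail integral as $\sigma^p 2^{p/2-1}\Gamma(p/2)$, and a Gamma-function estimate (your $\Gamma(x+1)\leq x^x$ for $x\geq 1$ is exactly the paper's $\Gamma(p/2)\leq (p/2)^{p/2-1}$ after dividing by $p/2$). The only difference is that you prove this Gamma inequality from scratch via the digamma function, where the paper simply cites it.
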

\begin{proof}
Using Fubini's theorem and the change of variable formula,
$$
\E \left[ |X|^p \right]=\int_{0}^{\infty} p t^{p-1} \mathbb{P}(|X| \geq t) \, \rmd t = c p 2^{p / 2 - 1} \sigma^{p} \Gamma(p / 2),
$$
where $\Gamma$ is the Gamma function. It remains to apply the bound $\Gamma(p / 2) \leq(p / 2)^{p / 2-1}$ (see \cite{anderson:qiu:1997}), which holds for $p \geq 2$ by [2, Theorem 1.5].
\end{proof}
\begin{proof}[Proof of \Cref{cor:Lp:cond:paris:affine}]
By combining \Cref{thm:exp:conc:cond:paris} and  \Cref{lem:Lp} we obtain
\begin{multline*}
\N \, \int \mbjt{n} \ckjt{n}(z_{0:n}, \rmd \statlmb_n) \left| \frac{1}{\N} \sum\nolimits_{i = 1}^\N \{b_{n}^i f_n(x_{n|n}^{i}) + \tilde{f}_n(x_{n|n}^{i}) \} - \targ[z_{0:n}]{n}(f_n \rk[z_{0:n - 1}]{n} \af{n} + \tilde{f}_n) \right|^2 \\ 
\leq \cstcondparisc_n (p / \cstcondparisd_n)^{p/2} 
\N^{- p/2}
\left( \| f_n \|_\infty \sum_{m = 0}^{n - 1} \| \afterm{m} \|_\infty + \| \tilde{f}_n \|_\infty \right)^p,
\end{multline*}
which was to be established.
\end{proof}

\subsection{Proof of \Cref{prop:bias:cond:paris}}
\label{subsec:prop:bias:cond:paris}

Like previously, we establish \Cref{prop:bias:cond:paris} via a more general result, namely the following.

\begin{proposition} \label{prop:bias:cond:paris:affine}
For every $n \in \nset$, the exists $\cstcondparisbias< \infty$ such that for every $M \in \nsets$, $\N \in \nsetpos$, $z_{0:n} \in \xpsp{0}{n}$, and $(f_n, \tilde{f}_n) \in \bmf(\xfd{n})^2$, 
\begin{multline*}
 \left| \int \mbjt{n} \ckjt{n}(z_{0:n}, \rmd \statlmb_n) \frac{1}{\N} \sum_{i = 1}^\N \{b_{n}^i f_n(x_{n|n}^{i}) + \tilde{f}_n(x_{n|n}^{i}) \} - \targ[z_{0:n}]{n}(f_n \rk[z_{0:n - 1}]{n} \af{n} + \tilde{f}_n) \right| \\
\leq \cstcondparisbias \upkappa_n  \N^{- 1}, 
\end{multline*}
where $\upkappa_n$ is defined in \eqref{eq:definition:upkappa}.
\end{proposition}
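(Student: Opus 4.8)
The plan is to prove the more general \Cref{prop:bias:cond:paris:affine}, from which \Cref{prop:bias:cond:paris} follows by taking $f_n \equiv 1$ and $\tilde f_n \equiv 0$, by \emph{forward induction on $n$}. The key is to reuse the error decomposition \eqref{eq:error_decomposition_paris} together with the terms $\operatorname{I}^{(1)}_\N$, $\operatorname{I}^{(2)}_\N$, $\operatorname{I}^{(3)}_\N$ already introduced in the proof of \Cref{thm:exp:conc:cond:paris}. Writing $B_n(f_n,\tilde f_n)$ for the signed bias on the left-hand side and letting all expectations below be taken under the conditional-\PARIS\ law $\canlaw[\pkl, \zpath]{\initmb}$, integrating \eqref{eq:error_decomposition_paris} gives
\[
B_n(f_n,\tilde f_n) = \E[\operatorname{I}^{(1)}_\N] + \left(1 - \frac{1}{\N}\right)\E[\operatorname{I}^{(2)}_\N] + \frac{1}{\N}\,\E[\operatorname{I}^{(3)}_\N].
\]
The base case $n=0$ is exact: since $\statmb{0}=\boldsymbol 0$ and $\rk{0}\af{0}=0$, the bias collapses to $\E[\N^{-1}\sum_i \tilde f_0(\epart{0|0}{i})] - \targ[z_0]{0}\tilde f_0$, which vanishes because under $\initmb[z_0]$ exactly one particle equals $z_0$ while the remaining $\N-1$ are drawn from $\targ{0}$, matching $\targ[z_0]{0} = (1-1/\N)\targ{0} + \N^{-1}\delta_{z_0}$.

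Two of the three terms are handled directly. The variable $\operatorname{I}^{(1)}_\N$ is, by exchangeability of the summands (one of which is the frozen coordinate), centered conditionally on $\partfiltbar{n-1}$, so $\E[\operatorname{I}^{(1)}_\N]=0$. The variable $\operatorname{I}^{(3)}_\N$ of \eqref{eq:def:c_N} is $f_n(z_n)$ times the difference of two convex combinations of the quantities $\stat{n-1}{\ell}+\afterm{n-1}(\cdot,z_n)$ and their frozen-model analogue; each is bounded by $\sum_{m=0}^{n-1}\|\afterm{m}\|_\infty$, so $\N^{-1}|\E[\operatorname{I}^{(3)}_\N]| \le 2\upkappa_n/\N$, already of the target order.

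The crux is $\operatorname{I}^{(2)}_\N$, which carries no explicit $1/\N$ prefactor and must therefore itself have expectation $O(\upkappa_n/\N)$. Writing $\operatorname{I}^{(2)}_\N = U_\N/V_\N - \bar U/\bar V$, where $V_\N = \N^{-1}\sum_\ell \pot{n-1}(\epart{n-1}{\ell})$, $\bar V = \targ[z_{0:n-1}]{n-1}\pot{n-1}$, and $U_\N,\bar U$ are the numerators of \eqref{eq:def:b_N}, I would use the exact identity
\[
\operatorname{I}^{(2)}_\N = \frac{U_\N - \bar U}{\bar V} - \frac{\bar U(V_\N - \bar V)}{\bar V^2} - \frac{(U_\N - \bar U)(V_\N - \bar V)}{\bar V V_\N} + \frac{\bar U(V_\N - \bar V)^2}{\bar V^2 V_\N}.
\]
The expectation of the first-order part equals $\bar V^{-1}\E[U_\N - \bar U] - \bar U\,\bar V^{-2}\E[V_\N - \bar V]$. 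Recognising $U_\N$ and $V_\N$ as instances of the level-$(n-1)$ estimator with test pairs $(\uk{n-1}f_n,\ \uk{n-1}(\afterm{n-1}f_n+\tilde f_n))$ and $(0,\ \pot{n-1})$ respectively — and using the recursion \eqref{eq:cond:exp:recursion} with \Cref{lem:key:identity:untarg:affine:ii} to identify $\bar U,\bar V$ as the matching targets — the induction hypothesis bounds these two biases by $C\pothigh{n-1}\upkappa_n/\N$ and $C\pothigh{n-1}/\N$. With $\bar V \ge \potlow{n-1}$ and $|\bar U| \le \pothigh{n-1}\upkappa_n$ (via $\|\rk[z_{0:n-2}]{n-1}\af{n-1}\|_\infty \le \sum_{m=0}^{n-2}\|\afterm{m}\|_\infty$), the first-order part is $O(\upkappa_n/\N)$.

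The two second-order products are then controlled by Cauchy--Schwarz together with the already-established $L^2$ bound of \Cref{cor:Lp:cond:paris:affine} (at level $n-1$, $p=2$), which gives $\E[|U_\N - \bar U|^2]^{1/2} \le C\pothigh{n-1}\upkappa_n/\sqrt\N$ and $\E[|V_\N - \bar V|^2]^{1/2} \le C\pothigh{n-1}/\sqrt\N$, combined with the deterministic lower bounds $V_\N \ge \potlow{n-1}$ and $\bar V \ge \potlow{n-1}$. The decisive structural observation, and the main obstacle to keeping the bound \emph{linear} (rather than quadratic) in $\upkappa_n$, is that the denominator fluctuation $V_\N - \bar V$ involves only $\pot{n-1}$ and so carries \emph{no} $\upkappa_n$-dependence; each second-order term is thus a product of one factor of size $O(\upkappa_n/\sqrt\N)$ and one of size $O(1/\sqrt\N)$, yielding $O(\upkappa_n/\N)$. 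Summing the three contributions and absorbing the finitely many $n$-dependent factors ($\pothigh{n-1}$, $\potlow{n-1}^{-1}$, and the level-$(n-1)$ constants) into a fresh constant $\cstcondparisbias$ closes the induction.
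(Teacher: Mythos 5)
Your proof is correct and follows essentially the same route as the paper's: induction on $n$, the error decomposition \eqref{eq:error_decomposition_paris} with $\E[\operatorname{I}^{(1)}_{\N}]=0$, and control of the ratio term $\operatorname{I}^{(2)}_{\N}$ via a first/second-order expansion whose pieces are bounded by the induction hypothesis and the $p=2$ case of \Cref{cor:Lp:cond:paris:affine} — the paper simply packages that expansion as \Cref{lemma:bias-estimator-general}. The only (harmless) divergence is that you dispatch $\operatorname{I}^{(3)}_{\N}$ with the crude deterministic bound $|\operatorname{I}^{(3)}_{\N}|\leq 2\upkappa_n$, exploiting its explicit $\N^{-1}$ prefactor, whereas the paper treats it ``along the same lines'' as $\operatorname{I}^{(2)}_{\N}$; both yield the required $O(\upkappa_n/\N)$ contribution.
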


We preface the proof of \Cref{prop:bias:cond:paris:affine} by a technical lemma providing a bound on the bias of ratios of random variables.
\begin{lemma}
\label{lemma:bias-estimator-general}
Let $\upalpha$ and $\upbeta$ be (possibly dependent) random variables defined on some probability space $(\Omega, \mathcal{F}, \prob)$ and such that $\E[\upalpha^2] < \infty$ and $\E[\upbeta^2] < \infty$. Moreover, assume that there exist $c > 0$ and $d > 0$ such that $|\upalpha / \upbeta| \leq c$, $\prob$-a.s., $|a/b| \leq c$, $\E[ (\upalpha - a)^2] \leq c^2 d^2$, and $\E[ (\upbeta - b)^2] \leq d^2$. Then
\begin{equation}
\left| \E[ \upalpha / \upbeta] - a / b  \right| \leq  2 c(d/b)^2 + c |\E[\upbeta - b]|/|b| + |\E[\upalpha - a]|/|b|.
\end{equation}
\end{lemma}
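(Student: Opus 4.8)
The plan is to reduce everything to one elementary identity for the deviation of a ratio, apply it twice, and control the resulting second-order terms by Cauchy--Schwarz together with the almost-sure bound $|\upalpha / \upbeta| \le c$. The starting point is $1/\upbeta = 1/b - (\upbeta - b)/(b \upbeta)$, which after multiplication by $\upalpha$ and subtraction of $a/b$ gives
\begin{equation} \label{eq:ratio:id:plan}
\frac{\upalpha}{\upbeta} - \frac{a}{b} = \frac{\upalpha - a}{b} - \frac{\upalpha}{\upbeta} \cdot \frac{\upbeta - b}{b}.
\end{equation}
Taking expectations in \eqref{eq:ratio:id:plan} and applying the triangle inequality isolates the leading contribution $\E[\upalpha - a]/b$ and leaves the cross term $\E[(\upalpha / \upbeta)(\upbeta - b)]/b$ to be handled.

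First I would split this cross term as $\E[(\upalpha / \upbeta)(\upbeta - b)] = (a/b)\,\E[\upbeta - b] + \E[(\upalpha / \upbeta - a/b)(\upbeta - b)]$. The first piece is bounded by $c\,|\E[\upbeta - b]|$ via $|a/b| \le c$ and yields the term $c\,|\E[\upbeta - b]|/|b|$ in the final inequality. For the second piece I would apply \eqref{eq:ratio:id:plan} a second time, now to the factor $\upalpha / \upbeta - a/b$ itself, obtaining
\[
\left( \frac{\upalpha}{\upbeta} - \frac{a}{b} \right)(\upbeta - b) = \frac{(\upalpha - a)(\upbeta - b)}{b} - \frac{\upalpha}{\upbeta} \cdot \frac{(\upbeta - b)^2}{b}.
\]
I would then bound the two resulting expectations directly: Cauchy--Schwarz with the variance hypotheses gives $|\E[(\upalpha - a)(\upbeta - b)]| \le \sqrt{\E[(\upalpha - a)^2]}\,\sqrt{\E[(\upbeta - b)^2]} \le c d^2$, while the almost-sure bound $|\upalpha / \upbeta| \le c$ gives $|\E[(\upalpha / \upbeta)(\upbeta - b)^2]| \le c\,\E[(\upbeta - b)^2] \le c d^2$. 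Dividing by $|b|$ and collecting yields $|\E[(\upalpha / \upbeta - a/b)(\upbeta - b)]| \le 2 c d^2 / |b|$, so the whole cross term is at most $c\,|\E[\upbeta - b]|/|b| + 2 c d^2 / b^2$; substituting back into the estimate from \eqref{eq:ratio:id:plan} produces exactly the claimed bound, since $2 c d^2 / b^2 = 2 c (d/b)^2$.

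The main subtlety---the step I would flag as the crux---is that a single use of \eqref{eq:ratio:id:plan} followed by the crude bound $|\upalpha / \upbeta - a/b| \le 2c$ only controls the cross term through $\E[|\upbeta - b|]$, which is $O(d)$ and too weak; the second-order rate is recovered only by the second application of \eqref{eq:ratio:id:plan}, which re-expresses the cross term through the genuinely $O(d^2)$ quantities $(\upalpha - a)(\upbeta - b)$ and $(\upbeta - b)^2$. One should also verify that the two variance hypotheses are scaled compatibly: it is precisely the assumption $\E[(\upalpha - a)^2] \le c^2 d^2$ (rather than $\le d^2$) that makes the Cauchy--Schwarz estimate return $c d^2$, matching the $c\,\E[(\upbeta - b)^2] \le c d^2$ coming from $|\upalpha / \upbeta| \le c$, so that both second-order terms emerge at the same order. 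Beyond this, the computation is entirely routine.
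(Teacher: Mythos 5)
Your proof is correct and essentially the same as the paper's: expanding your two-step application of the ratio identity yields exactly the four-term decomposition the paper writes down directly, namely $\E[\upalpha/\upbeta]-a/b = \E[(\upalpha/\upbeta)(b-\upbeta)^2]/b^2 + \E[(\upalpha-a)(b-\upbeta)]/b^2 + a\E[b-\upbeta]/b^2 + \E[\upalpha-a]/b$, and you then bound each term with the same tools (the a.s.\ bound on $\upalpha/\upbeta$, Cauchy--Schwarz with the matched variance hypotheses, and $|a/b|\le c$). No gaps.
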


\begin{proof}
Using the identity
$$
\E[\upalpha / \upbeta] - a/b = \E[ (\upalpha/ \upbeta)(b - \upbeta)^2] / b^2 + \E[(\upalpha - a)(b - \upbeta)] / b^2  + a\E[b - \upbeta] / b^2 + \E[\upalpha - a]/b,
$$
the claim is established by applying the Cauchy--Schwarz inequality and the assumptions of the lemma according to
\begin{align}
\lefteqn{\left|\E[\upalpha / \upbeta] - a/b\right|} \hspace{5mm}  \nonumber \\
&\leq c \E[ (\upbeta - b)^2] / b^2 + \{\E[(\upalpha - a)^2] \E[(\upbeta - b)^2]\}^{1/2} / b^2 + |a||\E[\upbeta - b]|/b^2 + |\E[\upalpha - a]|/b^2  \nonumber \\
&\leq 2 c (d / b)^2 + c|\E[\upbeta - b]|/|b| + |\E[\upalpha - a]|/|b|. \nonumber
\end{align}
\end{proof}

\begin{proof}[Proof of \Cref{prop:bias:cond:paris}]
We proceed by induction and assume that the claim holds true for $n - 1$. Reusing the error decomposition \eqref{eq:error_decomposition_paris}, it is enough to bound the expectations of the terms $\operatorname{I}^{(2)}_\N$ and $\operatorname{I}^{(3)}_\N$ given in \eqref{eq:def:b_N} and \eqref{eq:def:c_N}, respectively (since $\canlawexp[\pkl, \zpath]{\initmb}[\operatorname{I}^{(1)}_\N] = 0$). This will be done using the induction hypothesis, \Cref{lemma:bias-estimator-general}, and \Cref{cor:Lp:cond:paris:affine}. More precisely, to bound the expectation of $\operatorname{I}^{(2)}_\N$, we use \Cref{lemma:bias-estimator-general} with $\upalpha \leftarrow \upalpha_n$, $\upbeta \leftarrow \upbeta_n$, $a \leftarrow a_n$, and $b \leftarrow b_n$, where
\begin{align*}
&\upalpha_n \eqdef \frac{1}{\N} \sum_{\ell = 1}^\N \{\stat{n - 1}{\ell} \uk{n - 1} f_n (\epart{n - 1}{\ell}) + \uk{n - 1}(\afterm{n - 1} f_n + \tilde{f}_n)(\epart{n - 1}{\ell})\}, \quad
\upbeta_n \eqdef \frac{1}{\N} \sum_{\ell = 1}^\N \pot{n - 1}(\epart{n - 1}{\ell}), \\
& a_n \eqdef \targ[z_{0:n - 1}]{n - 1} \{ \uk{n - 1} f_n \rk[z_{0:n - 1}]{n} \af{n} + \uk{n - 1}(\afterm{n - 1} f_n + \tilde{f}_n) \}, \quad
b_n \eqdef \targ[z_{0:n - 1}]{n - 1} \pot{n - 1}.
\end{align*}
For this purpose, note that $|\upalpha_n / \upbeta_n | \leq \upkappa_n$ and $|a_n/b_n| \leq \upkappa_n$, where $\upkappa_n$ is defined in \eqref{eq:definition:upkappa}. On the other hand, using \Cref{cor:Lp:cond:paris:affine}  (applied with $p = 2$), we obtain
\begin{equation*}
\canlawexp[\pkl, \zpath]{\initmb}[(\upalpha_n - a_n)^2] \leq d_n^2 \upkappa_n^2  \quad \text{and} \quad \canlawexp[\pkl, \zpath]{\initmb}[(\upbeta_n - b_n )^2] \leq d_n^2,
\end{equation*}
where $d_n^2 \eqdef \cstcondparisc_n \gsupbound{n-1}^2 / (\cstcondparisd_n \N)$. 
Using the induction assumption, we get
\begin{equation*}
|\canlawexp[\pkl, \zpath]{\initmb}[\upalpha_n] - a_n| \leq \cstcondparisbias[n-1] \N^{-1} \gsupbound{n-1} \upkappa_n
\quad \text{and} \quad
|\canlawexp[\pkl, \zpath]{\initmb}[\upbeta_n] - b_n| \leq \cstcondparisbias[n-1] \N^{-1} \gsupbound{n-1}.
\end{equation*}
Hence, the conditions of \Cref{lemma:bias-estimator-general} are satisfied and we deduce that
\[
|\canlawexp[\pkl, \zpath]{\initmb}[\operatorname{I}^{(2)}_\N] | = |\canlawexp[\pkl, \zpath]{\initmb}[\upalpha_n/\upbeta_n] - a_n/b_n| \leq 2 \upkappa_n \frac{\cstcondparisc_n}{\cstcondparisd_n \N} \frac{\gsupbound{n-1}^2}{\ginfbound{n-1}^2} +
2 \cstcondparisbias[n-1] \upkappa_n  \frac{\gsupbound{n-1}}{\ginfbound{n-1} \N} .
\]
The bound on $|\canlawexp[\pkl, \zpath]{\initmb}[\operatorname{I}^{(2)}_\N]|$ is obtained along the same lines.
\end{proof}

\subsection{Proof of \Cref{theo:bias-mse-rolling}}
\label{sec:proof:theo:bias-mse-rolling}
We first consider the bias, which can be bounded according to 
\begin{align*}
    \left| \E_{\upxi}[\rollingestim[\ki_0][\ki][N][f]] - \targ{0:n} \af{n} \right| &\leq  (\ki - \ki_0)^{-1}  \sum_{\ell=\ki_0 + 1}^{\ki} \left| \E_{\upxi} \occm(\statmb{n}[\ell])(\operatorname{id})  - \targ{0:n} \af{n} \right| \\
    &\leq (\ki - \ki_0)^{-1} N^{-1} \cstparisbias   \left(\sum_{m=0}^{n-1} \| \afterm{m} \|_\infty \right) 
    \sum_{\ell=\ki_0 + 1}^{\ki} \kappa_{\N,n}^{\ell},
\end{align*}
from which the bound \eqref{eq:theo:bias-mse-rolling:bias} follows immediately. 

We turn to the MSE. Using the decomposition
\begin{align*}
    \E_{\upxi}[(\rollingestim[\ki_0][\ki][N][f] - \targ{0:n} \af{n})^2] \leq (\ki-\ki_0)^{-2} \left\{ \sum_{\ell=\ki_0 + 1}^{\ki}   \E_{\upxi}[ (\occm(\statmb{n}[\ell])(\operatorname{id}) - \targ{0:n} \af{n} )^2] \right.\\
    + \left. 2 \sum_{\ell=\ki_0 + 1}^{\ki} \sum_{j= \ell + 1}^{\ki} \E_{\upxi}[(\occm(\statmb{n}[\ell])(\operatorname{id}) - \targ{0:n} \af{n})(\occm(\statmb{n}[j])(\operatorname{id}) - \targ{0:n} \af{n})] \right\}, 
\end{align*}
the MSE bound in \Cref{thm:bias:bound} implies that 
\begin{equation*}
\sum_{\ell=\ki_0 + 1}^{\ki} \E_{\upxi}[(\occm(\statmb{n}[\ell])(\operatorname{id}) - \targ{0:n} \af{n})^2] \leq 
\cstparismse \left(\sum_{m=0}^{n - 1} \| \afterm{m} \|_\infty \right)^2 \N^{-1} (\ki - \ki_0).
\end{equation*}
Moreover, using the covariance bound in \Cref{thm:bias:bound}, we deduce that 
\begin{multline*}
    \sum_{\ell=\ki_0 + 1}^{\ki} \sum_{j= \ell + 1}^{\ki} \E_{\upxi}[(\occm(\statmb{n}[\ell])(\operatorname{id}) - \targ{0:n} \af{n})(\occm(\statmb{n}[j])(\operatorname{id}) - \targ{0:n} \af{n})] \\
    \leq \cstpariscov  \left(\sum_{m=0}^{n - 1} \| \afterm{m} \|_\infty \right)^2 \N^{-3/2}  \left( \sum_{\ell=\ki_0 + 1}^{\ki} \sum_{j= \ell + 1}^{\ki} \kappa_{\N,n}^{(j - \ell)} \right)  .
\end{multline*}
Thus, the proof is concluded by noting that  $\sum_{\ell=\ki_0 + 1}^{\ki} \sum_{j= \ell + 1}^{\ki} \kappa_{\N,n}^{(j - \ell)}  \leq (\ki - \ki_0) / (1- \kappa_{N,n})$.

\typeout{get arXiv to do 4 passes: Label(s) may have changed. Rerun}
\end{document}